\documentclass[amsthm]{autart}

\usepackage{amsmath,amssymb,mathtools}
\usepackage{enumitem}

\theoremstyle{plain}
\newtheorem{theorem}{Theorem}
\newtheorem{lemma}{Lemma}
\newtheorem{corollary}{Corollary}
\newtheorem{proposition}{Proposition}

\theoremstyle{definition}
\newtheorem{definition}{Definition}
\newtheorem{assumption}{Assumption}
\newtheorem{remark}{Remark}

\newcommand{\R}{\mathbb{R}}

\newcommand{\Tcal}{\mathcal{T}}
\newcommand{\Linf}{L^\infty}
\newcommand{\Ltwo}{L^2}
\newcommand{\Lipk}{\mathrm{Lip}_K}
\newcommand{\LipKn}{\mathrm{Lip}_{K_N}}
\newcommand{\LipT}{\mathrm{Lip}_{\Tcal^{-1}}}
\newcommand{\Bball}{\overline{B}}
\newcommand{\eps}{\varepsilon}

\usepackage[hidelinks,bookmarks=false]{hyperref}

\allowdisplaybreaks
\begin{document}

\begin{frontmatter}

\title{
Approximate Feedback Linearization for a Nonlinear Hyperbolic PDE Class --- Part I: Volterra Truncation}

\author{Miroslav Krstic}
\ead{mkrstic@ucsd.edu}
\thanks{Department of Mechanical and Aerospace Engineering, University of California San Diego, La Jolla, CA 92093-0411.}
\thanks{The principal AI aid in developing the paper was Claude.}

\begin{abstract}
Backstepping for nonlinear PDEs yields exact feedback linearizing laws in the form of infinite Volterra series---elegant in theory, but with challenges for implementation. This paper shows that even very low-order truncations of such controllers, no longer exactly linearizing, retain the stabilizing power. The key insight is that higher-order terms become negligible near the origin, so stability is recovered for any fixed truncation order by restricting the initial condition size. We establish spatial sup-norm results: finite-time practical stability and asymptotic stability characterized by a class-$\mathcal{KL}$ estimate. The region-of-attraction estimate grows with the truncation order and shrinks with the growth rate of the nonlinearity. The analysis overcomes the lack of pointwise kernel bounds and resolves well-posedness of the nonlinear closed loop, showing that surprisingly simple approximations already capture the essence of nonlinear PDE feedback linearization.
\end{abstract}

\date{}
\end{frontmatter}

\section{Introduction}\label{sec:intro}

This is the first of two companion papers on approximate feedback linearization of a class of nonlinear hyperbolic PDEs. Part~I, the present paper, replaces the exact infinite Volterra linearizer of~\cite{krstic2026feedbacklinearizationhyperbolicpdes} with a finite truncation and establishes closed-loop $\Linf$ stability of the truncated feedback. Part~II~\cite{krstic2026feedbacklinearizationhyperbolicpdes-NO} takes the resulting finite controller and replaces it with a learned neural-operator surrogate, eliminating the plant-specific kernel solve and the $N$-fold nested integration that would otherwise be required at every control update.

\subsection*{Previous results on boundary control of PDEs with Volterra nonlinearities}
Vazquez and Krstic~\cite{vazquez2008volterra1,vazquez2008volterra2} introduced a Volterra-series backstepping framework for boundary control of one-dimensional parabolic PDEs with analytic-type spatial nonlinearities, mapping the closed-loop plant into a stable target heat equation through an infinite Volterra transformation whose kernels are governed by parabolic PDEs on simplices of growing dimension. The result is local exponential stabilization in $L^2$ under the infinite Volterra feedback.

The first hyperbolic counterpart of \cite{vazquez2008volterra1,vazquez2008volterra2} is \cite{krstic2026feedbacklinearizationhyperbolicpdes}, where the parabolic kernel PDEs are replaced by first-order transport PDEs on growing simplices, solvable along characteristics, and the inverse transformation is constructed by a contraction mapping in $L^2$; closed-loop $L^2$ exponential stability follows. A coefficient-level alternative is also developed there, via a  noncommutative-formal-series generalization of the Volterra representations of Lesiak and Krener~\cite{1101898}, replacing the simplex transport PDEs by scalar ODEs solvable by quadrature.

In finite-dimensional nonlinear control, the idea of transforming a nonlinear system into a linear one modulo higher-order terms goes back to the approximate feedback linearization of Krener~\cite{Krener1984ApproximateLinearization} and Kang~\cite{Kang1994ApproximateLinearization}. The present work differs in motivation: exact feedback linearization is available here through an infinite Volterra operator, and the higher-order residual appears only when that operator is truncated for implementation.

\subsection*{Results of this paper}
This paper is a sequel to and extension of \cite{krstic2026feedbacklinearizationhyperbolicpdes}, which constructs an exact feedback-linearizing controller for the nonlinear hyperbolic Volterra PDE \eqref{eq:plant}--\eqref{eq:Fdef} as an infinite Volterra series in the state, and proves local exponential stabilization. The controller of \cite{krstic2026feedbacklinearizationhyperbolicpdes} is itself a Volterra series, whose kernels are obtained by solving an infinite cascade of first-order transport PDEs on simplices of unbounded dimension. While \cite{krstic2026feedbacklinearizationhyperbolicpdes} establishes existence and convergence of this controller, it is not meant to be implemented as literally written: a computer cannot evaluate infinitely many nested integrals in real time, nor precompute kernels by solving infinitely many PDEs on simplices of unbounded dimension. Truncation at some finite order is unavoidable.

Truncation breaks the closed-loop equivalence that underlies the analysis of \cite{krstic2026feedbacklinearizationhyperbolicpdes}. Under the exact controller the backstepping transformation maps the plant into a homogeneous transport equation, autonomous in the target and trivially well-posed. Any finite truncation instead leaves a nonzero residual at the target's boundary --- the discarded Volterra tail along the trajectory, which depends on the state through the inverse transformation --- so the closed loop becomes a genuinely coupled fixed-point problem in time. Well-posedness is no longer free, and forward invariance, practical stability, and asymptotic decay must be re-established with the coupling, and with the residual kept small, on a forward-invariant set fixed by the truncation order.

The contribution of the present paper is to carry out this analysis quantitatively in the sup-norm, characterizing the closed loop as local asymptotic stabilization in the standard $\mathcal{KL}$ sense with quantitative practical-stability bounds. Theorem~\ref{thm:main} establishes four properties --- forward invariance of a sup-norm ball, an all-time practical-stability bound, finite-time attractivity to a residual ball, and a class-$\mathcal{KL}$ asymptotic estimate. Because the residual decays geometrically in the truncation order, the exact-feedback stabilization of \cite{krstic2026feedbacklinearizationhyperbolicpdes} is recovered as $N \to \infty$.

The move to the sup-norm forces a different chain of estimates. The kernel bound imported from \cite{krstic2026feedbacklinearizationhyperbolicpdes} is an integral bound on simplices of growing dimension, whereas the truncation residual is pointwise and the contraction-mapping argument runs on sup-norm constants; no pointwise bound on the kernels is available. The bridge is a Cauchy--Schwarz estimate on the simplices, which turns the imported integral bound into an $L^1$ kernel bound, and thence a sup-norm bound on both residual and controller, without ever bounding the kernels pointwise.

A secondary contribution (Section~\ref{sec:L2}) is a proof of stability also in the $\Ltwo$ norm. Truncating the transformation along with the feedback closes the target's boundary exactly and turns the discarded tail into a static order-$(N{+}1)$ term on the same target. The resulting $\Ltwo$ theorem shows that, while stability holds in both norms, the sup-norm result is not only practically more relevant but also mathematically sharper.

Locality is essential. The contraction radius, its domain, and the admissible initial-condition norm are all bounded, and counterexamples to global stabilizability of nonlinear PDEs by boundary control~\cite{vazquez2008volterra1,vazquez2008volterra2} make this unavoidable. Yet the simulation in \cite{krstic2026feedbacklinearizationhyperbolicpdes} shows that even low-order truncations recover a substantial basin relative to the open-loop blow-up region, the third order already stabilizing initial conditions on which the second merely defers blow-up; the present paper formalizes the mechanism, each truncation order shrinking the residual geometrically and enlarging the region of attraction toward the full domain predicted by \cite{krstic2026feedbacklinearizationhyperbolicpdes}.

Finiteness of the feedback, however, does not make the computation light. Each update of $K_N$ needs the kernels $k_2, \ldots, k_N$ on simplices of dimension up to $N$ and an $N$-fold nested integration against them; and because the kernel PDEs are plant-specific, any change in the coefficients $f_2, \ldots, f_N$ forces a re-solve. Collapsing this online cost to a single learned map, while preserving the four stability properties proved here, is the task of the companion Part~II~\cite{krstic2026feedbacklinearizationhyperbolicpdes-NO}, which treats $K_N$ as one operator in plant coefficients and state and replaces it by a neural-operator surrogate.

\paragraph*{\normalfont\em Organization of the paper.}
Sections~\ref{sec:problem}--\ref{sec:mainresult} set up the plant and state the main theorem; Sections~\ref{sec:tail}--\ref{sec:transport} establish the supporting estimates; Section~\ref{sec:mainproof} proves the theorem; Section~\ref{sec:L2} gives the $\Ltwo$ alternative; and Section~\ref{sec:conclusions} concludes. The appendices collect the kernel construction and the operator-bound and $\Ltwo$ derivations.

\paragraph*{\normalfont\em Terminology.}
The \emph{slice} of $k_n$ at $x$ is the function $k_n(x,\cdot)$ on the $n$-simplex $T_n(x)$. The \emph{trace} of the controller is its boundary value $K[u](1)$, the \emph{trace kernel} at order $n$ being $k_n(1,\cdot)$. The \emph{truncation residual} (or \emph{tail}) is $K[u](1) - K_N[u](1)$. The \emph{transport crossing time} is $t = 1$, after which $w_0$ has left $[0,1]$ in exact feedback linearization; a \emph{sliding window} of width one is $[t-1, t]$.

\section{Problem formulation and standing results}\label{sec:problem}

\subsection{Plant and feedback}\label{sec:plant}

The plant is the nonlinear hyperbolic Volterra PDE
\begin{align}
u_t(x,t) &= u_x(x,t) + F[u](x,t), \quad x \in [0,1), \ t \ge 0, \label{eq:plant} \\
u(1,t) &= U(t), \label{eq:plant_bdy}\\
u(x,0) &= u_0(x), \label{eq:plant_ic}
\end{align}
where the nonlinearity $F[u]$ is the spatial Volterra series
\begin{eqnarray}
F[u](x,t) &=& \sum_{n=2}^\infty \int_{T_n(x)} f_n(x,\xi_1,\ldots,\xi_n) \nonumber \\
&& \quad \times \prod_{i=1}^n u(\xi_i,t)\, d\xi_n \cdots d\xi_1,
\label{eq:Fdef}
\end{eqnarray}
with simplex domains $T_n(x) := \{(\xi_1,\ldots,\xi_n) \in \R^n : 0 \le \xi_n \le \cdots \le \xi_1 \le x\}$.

The feedback-linearizing backstepping transformation, derived in \cite{krstic2026feedbacklinearizationhyperbolicpdes}, is $\Tcal[u] = u - K[u]$ with
\begin{eqnarray}
K[u](x,t) &=& \sum_{n=2}^\infty \int_{T_n(x)} k_n(x,\xi_1,\ldots,\xi_n) \nonumber \\
&& \quad \times \prod_{i=1}^n u(\xi_i,t)\, d\xi_n \cdots d\xi_1,
\label{eq:Kdef}
\end{eqnarray}
where the kernels $\{k_n\}_{n \ge 2}$ are defined in Appendix~\ref{app:kndef}, and the exact feedback law
\begin{eqnarray}
U(t) &=& K[u(\cdot,t)](1) \nonumber \\
&=& \sum_{n=2}^\infty \int_{T_n(1)} k_n(1,\xi)\, u(\xi_1,t)\cdots u(\xi_n,t)\, d\xi.
\label{eq:Uexact}
\end{eqnarray}
Under (\ref{eq:Uexact}), the target system is the pure transport
\begin{equation}
w_t = w_x, \qquad w(1,t) = 0, \qquad w(\cdot,0) = u_0 - K[u_0].
\label{eq:target_exact}
\end{equation}

\subsection{Standing assumption and imported result}\label{sec:assumption}

\begin{assumption}[Plant kernel growth]\label{ass:fnbound}
There exist constants $D_f, \rho_f > 0$ such that
\begin{equation}
\|f_n\|_{\Linf(T_n(1))} \le \frac{n!\, D_f}{\rho_f^{n-1}}, \qquad n \ge 2.
\label{eq:fnbound}
\end{equation}
\end{assumption}

The kernels $\{k_n\}_{n \ge 2}$ are defined in Appendix~\ref{app:kndef}. The following $\Ltwo$ slice bound, established in \cite{krstic2026feedbacklinearizationhyperbolicpdes}, is the only fact about $k_n$ used in this paper.

\begin{proposition}[$\Ltwo$ kernel slice bound, imported]\label{prop:L2bound}
Under Assumption~\ref{ass:fnbound}, the controller kernels $\{k_n\}_{n\ge 2}$ defined in Appendix~\ref{app:kndef} satisfy, for every $n \ge 2$ and every $x \in [0,1]$,
\begin{equation}
\|k_n(x,\cdot)\|_{\Ltwo(T_n(x))}^2 \;\le\; n!\, D_K^2\, C_K^{2(n-1)}\, x^n\, e^{2\Upsilon_K x},
\label{eq:knL2}
\end{equation}
for positive constants $D_K, C_K, \Upsilon_K$ depending only on $D_f$ and $\rho_f$. In particular, taking $x = 1$,
\begin{equation}
\|k_n(1,\cdot)\|_{\Ltwo(T_n(1))}^2 \;\le\; n!\, D_K^2\, C_K^{2(n-1)}\, e^{2\Upsilon_K}, \qquad n \ge 2.
\label{eq:knL2_endpoint}
\end{equation}
\end{proposition}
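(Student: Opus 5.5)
The plan is an induction on $n$ carried out along characteristics, with an $x$-dependent Gronwall weight, using the recursive structure of the kernel equations in Appendix~\ref{app:kndef}. In \cite{krstic2026feedbacklinearizationhyperbolicpdes} the kernel $k_n$ solves a first-order transport PDE on the simplex $T_n(x)$. Its principal part has the form $\partial_x k_n + \sum_i \partial_{\xi_i} k_n$, up to sign conventions. Its forcing consists of $f_n$ together with multilinear couplings of lower-order kernels $k_m$ ($m<n$) with plant kernels $f_j$, where the indices combine so that orders add up to $n$. It also carries boundary data on the faces $\xi_1 = x$ and/or $\xi_n=0$, again expressed through lower-order objects.

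First, I would integrate the PDE along the characteristic lines $s\mapsto (x-s,\xi_1-s,\ldots,\xi_n-s)$ back to the boundary face. This writes $k_n(x,\xi)$ as a boundary term plus an integral in $s$ of the forcing. I then take the $\Ltwo(T_n(x))$ norm of the slice, applying Minkowski's inequality in $s$ and Cauchy--Schwarz to each multilinear coupling. Each coupling is an integral of a product $k_m(\cdot)\,f_j(\cdot)$ over a lower-dimensional simplex, so Cauchy--Schwarz yields $\|f_j\|_{\Linf}$ times $\|k_m(y,\cdot)\|_{\Ltwo(T_m(y))}$ times a simplex volume $y^{\ell}/\ell!$. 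These volume factors are the source of the $x^n$ and the factorial balance in \eqref{eq:knL2}.

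Second, with $a_n(x):=\|k_n(x,\cdot)\|^2_{\Ltwo(T_n(x))}/(n!\,x^n)$, I would derive a recursive integral inequality of the form
\begin{equation*}
a_n(x) \le \alpha_n + \beta \int_0^x a_n(y)\,dy + \sum_{\text{compositions}} c_{m,j}\, \sup_{y\le x} a_m(y),
\end{equation*}
where $\alpha_n$ and $c_{m,j}$ are built from Assumption~\ref{ass:fnbound}, i.e.\ $n!D_f\rho_f^{1-n}$. The $\beta$-term comes from the self-coupling of $k_n$ through the quadratic part $f_2$. A Gronwall argument then produces the factor $e^{2\Upsilon_K x}$, with $\Upsilon_K$ proportional to $D_f$.

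Third, I would close the induction with the ansatz $a_m \le D_K^2 C_K^{2(m-1)} e^{2\Upsilon_K x}$ for $m<n$. This reduces the claim to a scalar combinatorial inequality: the weighted sum over compositions of $n$ must be at most $C_K^{2(n-1)}$.

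I expect this combinatorial closure to be the main obstacle. A naive count of compositions grows like Catalan numbers, and the $n!$ factors from the $f_n$ bound must be exactly cancelled by the simplex volumes $1/n!$ without losing more than a geometric factor. I would handle it with a majorant generating function. I would set $\Phi(z)=\sum_n D_K C_K^{n-1}z^n$ and observe that the recursion is dominated by an algebraic equation $\Phi = z + G(\Phi)$, where $G(r)=\sum_{j\ge2} D_f\rho_f^{1-j} r^j$ is the majorant of $F$, analytic for $r<\rho_f$. The implicit function theorem gives a positive radius of convergence for $\Phi$. Choosing $C_K$ as the reciprocal of that radius then makes all constants depend only on $D_f,\rho_f$. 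The endpoint bound \eqref{eq:knL2_endpoint} follows by setting $x=1$.
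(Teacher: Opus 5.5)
The paper does not prove this proposition; it imports it from \cite{krstic2026feedbacklinearizationhyperbolicpdes}. Your outline therefore has to stand on its own, and it has a genuine gap at exactly the step you flag, because it mis-models the recursion \eqref{eq:kndef}.

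That recursion is strictly triangular. $k_n$ is the characteristic integral, from zero data on $\xi_n=0$, of $R_n:=f_n-\sum_m B_n^m[k_{n-m+1},f_m]$, and the $f_2$ coupling feeds $k_{n-1}$, not $k_n$, into order $n$. Consequences:
\begin{itemize}
\item There is no self-coupling and hence no Gronwall term. An $n$-independent weight $e^{2\Upsilon_K x}$ could in any case buy only a fixed factor $1/\Upsilon_K$ per order.
\item There are no products of several lower-order kernels, which is what $\Phi=z+G(\Phi)$ (the majorant of a compositional inverse) encodes.
\end{itemize}
Instead, $B_n^m$ is linear in the single kernel $k_p$, $p=n-m+1$. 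It is $F[u]$ substituted into one of the $p$ slots of the $p$-linear term of $K$ (the ``$K'[u]\,F[u]$'' structure). It is summed over the slot index $j$, with $s\in[\xi_j,\xi_{j-1}]$, and over the interleavings of the $m$ arguments of $f_m$ with the $p-j$ arguments of $k_p$ below $s$, of which there are at most $\binom{n-j+1}{m}$.

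Write $\|k_p(y,\cdot)\|_{L^2(T_p(y))}\le\sqrt{p!}\,\kappa_p\,y^{p/2}$ and $\phi_m:=D_f\rho_f^{1-m}$. Measured against the $\sqrt{n!}$ scale of \eqref{eq:knL2}, these multiplicities make the coefficient of $\phi_m\kappa_p$ in the recursion for $\kappa_n$ grow with $n$. If you estimate slot by slot and sum over $j$ by Minkowski, as your Step~1 suggests, then for $m=2$ the coefficient is of order $n^{3/2}$, and still of order $n^{1/2}$ after the characteristic integration. Iterating any such factor gives a super-geometric bound in $n$. No $C_K^{n-1}$ absorbs it, and no majorant with positive radius dominates it. Catalan counts are not the issue, since they are geometric.

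What is missing is the cancellation of these factors, the transport analogue of the homological eigenvalue $(n-1)\lambda$ in Poincar\'e normal forms. It needs two ingredients together.
\begin{enumerate}
\item Keep the power of $y$ through the characteristic integral. Minkowski along characteristics gives $\|k_n(x,\cdot)\|_{L^2(T_n(x))}\le\int_0^x\|R_n(y,\cdot)\|_{L^2(T_n(y))}\,dy$. Under the induction hypothesis the coupling term is $O(y^{(n+2)/2})$, so this integral produces the factor $2/(n+4)$.
\item Do not sum over slots by Minkowski; exploit that the $s$-intervals $[\xi_j,\xi_{j-1}]$ are disjoint. Apply Cauchy--Schwarz over all (slot, interleaving) pairs, at most $\sum_{j=1}^p\binom{n-j+1}{m}=\binom{n+1}{m+1}$ of them. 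Combine this with $\sum_{j=1}^p(\eta_{j-1}-\eta_{j+1})\le 2y$ on $T_p(y)$, where $\eta_0:=y$ and $\eta_{p+1}:=0$. This gives
\begin{equation*}
\|B_n^m(y,\cdot)\|^2_{L^2(T_n(y))}\le 2\binom{n+1}{m+1}\frac{y^{m+1}}{m!}\,\|f_m\|_\infty^2\,\|k_p(y,\cdot)\|^2_{L^2(T_p(y))},
\end{equation*}
and $m!\,p!\binom{n+1}{m+1}=\frac{p(n+1)}{m+1}\,n!$.
\end{enumerate}
Combined, the per-order coefficient is $O\bigl(\sqrt{p(n+1)/(m+1)}/n\bigr)=O(1)$. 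The induction then closes with the linear convolution majorant $\kappa=c\,\phi+c\,\phi\,\kappa/z$, where $\phi(z)=D_fz^2/(\rho_f-z)$. This majorant has positive radius because $\phi(z)/z\to 0$, and no exponential weight is needed.

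Your normalization $a_n=\|k_n\|^2/(n!\,x^n)$, with $\sup_{y\le x}a_m(y)$ and constants built only from Assumption~\ref{ass:fnbound}, records neither ingredient. As written, the combinatorial closure does not go through.
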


\subsection{Truncated feedback and constants}\label{sec:truncfeedback}

For an integer truncation order $N \ge 2$, define the order-$N$ truncated Volterra operator
\begin{eqnarray}
K_N[u](x,t) &:=& \sum_{n=2}^N \int_{T_n(x)} k_n(x,\xi_1,\ldots,\xi_n) \nonumber \\
&& \quad \times \prod_{i=1}^n u(\xi_i,t)\, d\xi_n \cdots d\xi_1,
\label{eq:KNdef}
\end{eqnarray}
which is the partial sum of (\ref{eq:Kdef}). The corresponding truncated feedback law is $U(t) = K_N[u](1,t)$.

For $r \in [0, 1/C_K)$, define
\begin{align}
\Lipk(r) &:= D_K\, e^{\Upsilon_K}\, \frac{C_K r\,(2 - C_K r)}{(1 - C_K r)^2}, \label{eq:LipKdef}\\
\eps_N(r) &:= \frac{D_K\, e^{\Upsilon_K}}{C_K}\, \frac{(C_K r)^{N+1}}{1 - C_K r}.
\label{eq:epsNdef}
\end{align}
The quantity $\Lipk(r)$ is the sup-norm Lipschitz constant of $K$ on the closed ball of radius $r$, and $\eps_N(r)$ is the magnitude of the truncation tail $K - K_N$ on that ball; the latter decays geometrically in $N$ at rate $(C_K r)^N$.

Then $\Lipk$ is continuous and strictly increasing on $[0, 1/C_K)$ with $\Lipk(0) = 0$ and $\Lipk(r) \uparrow \infty$ as $r \uparrow 1/C_K$, so the equation $\Lipk(r^*) = 1$ has a unique solution $r^* \in (0, 1/C_K)$. In particular, $r^* < 1/C_K$, so any $r_0 \in (0, r^*)$ satisfies $C_K r_0 < 1$ and $\Lipk(r_0) < 1$. To ensure that the plant nonlinearity (\ref{eq:Fdef}) is well-defined throughout the analysis, we further require $r_0 < \rho_f$, where $\rho_f$ is the constant of Assumption~\ref{ass:fnbound}; we therefore work with $r_0 \in (0, \bar r)$ for
\begin{equation}
\bar r \;:=\; \min(r^*,\, \rho_f).
\label{eq:rbardef}
\end{equation}

For any $r_0 \in (0, \bar r)$, define
\begin{eqnarray}
\rho_w(r_0) &:=& r_0\,\bigl(1 - \Lipk(r_0)\bigr), \label{eq:rhowdef}\\
\rho_u^0(r_0) &:=& r_0\,\frac{1 - \Lipk(r_0)}{1 + \Lipk(r_0)}, \label{eq:rhou0def}\\
\LipT(r_0) &:=& \frac{1}{1 - \Lipk(r_0)} \nonumber \\
&=& \frac{(1-C_K r_0)^2}{(1-C_K r_0)^2 - D_K e^{\Upsilon_K} C_K r_0 (2 - C_K r_0)}.
\label{eq:LipTinvdef}
\end{eqnarray}

\section{Main result}\label{sec:mainresult}

\begin{theorem}[Closed-loop $\Linf$ stability under truncated feedback]\label{thm:main}
Let Assumption~\ref{ass:fnbound} hold, and let $D_K, C_K, \Upsilon_K$ be the constants of Proposition~\ref{prop:L2bound}. Fix $r_0 \in (0, \bar r)$ and a truncation order $N \ge 2$ such that
\begin{equation}
\eps_N(r_0) \;<\; \rho_w(r_0).
\label{eq:cond_resid}
\end{equation}
Let $u_0 \in \Linf(0,1)$ satisfy
\begin{equation}
\|u_0\|_\infty \;<\; r_0\,\frac{1 - \Lipk(r_0)}{1 + \Lipk(r_0)}.
\label{eq:cond_init}
\end{equation}
Then the closed-loop system (\ref{eq:plant})--(\ref{eq:Fdef}) under the truncated feedback $U(t) = K_N[u](1,t)$ from (\ref{eq:KNdef}) admits a unique canonical mild solution $u \in \Linf_{\mathrm{loc}}([0,\infty); \Linf(0,1))$ with $u(\cdot, 0) = u_0$, satisfying:
\begin{enumerate}[label=(\roman*)]
\item \emph{(Forward invariance.)} $\|u(\cdot, t)\|_\infty \le r_0$ for every $t \ge 0$.
\item \emph{(Practical stability.)} For all $t \ge 0$,
\begin{eqnarray}
\|u(\cdot,t)\|_\infty &\le& \max\!\left(\frac{1 + \Lipk(r_0)}{1 - \Lipk(r_0)}\,\|u_0\|_\infty,\right. \nonumber \\
&& \quad\quad\;\, \left.\LipT(r_0)\,\eps_N(r_0)\right).
\label{eq:thm_ii}
\end{eqnarray}
\item \emph{(Practical finite-time attractivity.)} For all $t \ge 1$,
\begin{equation}
\|u(\cdot,t)\|_\infty \;\le\; \LipT(r_0)\, \eps_N(r_0).
\label{eq:thm_iii}
\end{equation}
\item \emph{(Asymptotic stability.)} There exists a function $\beta$ of class $\mathcal{KL}$ on $[0, \rho_u^0(r_0)) \times [0, \infty)$ such that
\begin{equation}
\|u(\cdot,t)\|_\infty \;\le\; \beta\bigl(\|u_0\|_\infty,\, t\bigr), \qquad t \ge 0.
\label{eq:thm_iv}
\end{equation}
\end{enumerate}
The residual in (ii) and (iii) decays geometrically in the truncation order: $\eps_N(r_0) = O\bigl((C_K r_0)^N\bigr)$ as $N \to \infty$, where $C_K < 1/r_0$. As $N \to \infty$, the practical-stability bound (\ref{eq:thm_ii}) collapses to the linear envelope $\frac{1+\Lipk(r_0)}{1-\Lipk(r_0)}\|u_0\|_\infty$, recovering the stabilization of the original system under exact (untruncated) feedback.
\end{theorem}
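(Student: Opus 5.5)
The plan is to work entirely in the target coordinate $w = \Tcal[u] = u - K[u]$ and to read the truncated closed loop as a transport equation with a small boundary input. Applying the backstepping computation of \cite{krstic2026feedbacklinearizationhyperbolicpdes} with $U = K_N[u](1)$ instead of $K[u](1)$ gives
\[
w_t = w_x, \qquad w(1,t) = -\bigl(K[u]-K_N[u]\bigr)(1,t) =: -R(t),
\]
with $w(\cdot,0) = \Tcal[u_0]$. The residual $R$ is therefore the only coupling to $u$. The solution is explicit: $w(x,t) = w_0(x+t)$ for $x+t<1$, and $w(x,t) = -R(t+x-1)$ otherwise.

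\textbf{Step 1 (operator bounds).} I first establish the sup-norm operator bounds behind (\ref{eq:LipKdef})--(\ref{eq:epsNdef}). By Cauchy--Schwarz on $T_n(1)$, using $|T_n(1)| = 1/n!$ and Proposition~\ref{prop:L2bound},
\[
\|k_n(x,\cdot)\|_{L^1} \le (n!)^{-1/2}\|k_n(x,\cdot)\|_{\Ltwo} \le D_K e^{\Upsilon_K} C_K^{n-1}.
\]
For $\|u\|_\infty, \|v\|_\infty \le r$, telescoping the product gives
\[
\Bigl|\prod u_i - \prod v_i\Bigr| \le n r^{n-1}\|u-v\|_\infty .
\]
Summing $n D_K e^{\Upsilon_K} C_K^{n-1} r^{n-1}$ over $n\ge 2$ reproduces $\Lipk(r)$, and summing $D_Ke^{\Upsilon_K}C_K^{n-1}r^n$ over $n\ge N+1$ gives $|R|\le \eps_N(r)$. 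From $\Lipk(r_0)<1$, a contraction argument shows that $\Tcal$ is bi-Lipschitz on $\Bball_{r_0}$. Its inverse $u=w+K[u]$ exists for $\|w\|_\infty \le \rho_w(r_0)$, stays in $\Bball_{r_0}$, and satisfies $\|u\|_\infty \le \LipT(r_0)\|w\|_\infty$. Here the invariance $\|w\|+\Lipk r_0 \le r_0$ is exactly what defines $\rho_w$. Also $\|\Tcal[u_0]\|_\infty \le (1+\Lipk)\|u_0\|_\infty < \rho_w$ by (\ref{eq:cond_init}). I will also check that $r_0<\rho_f$ makes $F[u]$ convergent and locally Lipschitz on $\Bball_{r_0}$.

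\textbf{Step 2 (well-posedness).} This is the main obstacle. I would define the canonical mild solution as a fixed point in $w$: on a window $[0,\tau]$, the map
\[
\Phi(w)(x,t) = \text{the transport formula with boundary datum } -R\bigl[\Tcal^{-1}w(\cdot,s)\bigr]
\]
acts on $\Linf([0,\tau];\Linf)$ functions valued in $\Bball_{\rho_w}$. The map closes because $\eps_N(r_0)<\rho_w(r_0)$ by (\ref{eq:cond_resid}), and because $\|w_0\|<\rho_w$. It contracts for small $\tau$, since the residual is Lipschitz on $\Bball_{r_0}$ and only the boundary strip of width $\tau$ depends on it. Iterating with a uniform $\tau$ extends the solution globally, and uniqueness follows from the same contraction. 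I would then set $u=\Tcal^{-1}w$ and argue, as in the exact case, that it is the mild solution of (\ref{eq:plant}); this is the step needing the most care about measurability and the canonical representative.

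\textbf{Step 3 (the four estimates).} From the explicit formula,
\[
\|w(t)\|_\infty \le \max\bigl(\|w_0\|_\infty, \eps_N(r_0)\bigr) \le \rho_w,
\]
which gives (i). Mapping back with $\LipT$ and $\|w_0\|\le(1+\Lipk)\|u_0\|$ gives (ii). For $t\ge1$ the initial data have exited, so $\|w\|\le\eps_N$, which gives (iii). For (iv), I sharpen the tail bound to $|R|\le \eps_N(\|u\|_\infty)$, and $\eps_N(s)\le c\,s^{N+1}$. On each window $[k,k+1]$, the sliding-window estimate gives
\[
\|u\|_{[k+1,k+2]} \le \LipT\,\eps_N\bigl(\|u\|_{[k,k+1]}\bigr).
\]
Since $\LipT\,\eps_N(s)<s$ on the relevant range, by (\ref{eq:cond_resid}) and the superlinearity of $\eps_N$, this iterated strictly decreasing map drives the window sups to zero. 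Interpolating this sequence in $t$ and taking as its starting value the bound from (ii) produces a class-$\mathcal{KL}$ function $\beta$. Finally, the $N\to\infty$ remarks follow from $\eps_N(r_0)=O((C_Kr_0)^N)$.
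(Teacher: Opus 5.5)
\textbf{Steps 1 and 3(i)--(iii)} follow the paper's route: the $L^1$ slice bound via Cauchy--Schwarz, the contraction inverse of $\Tcal$, and the transport max-bound. \textbf{Step 2} makes a contraction claim that your estimates do not deliver.

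In the sup-norm the width of the boundary strip is invisible. At $(x,t)$ with $x+t>1$, $\Phi(w_1)-\Phi(w_2)$ is the difference of residuals at $\sigma=t-(1-x)$. The sup-norm Lipschitz bounds of Step~1 give only $L_{\mathrm{tail}}(r_0)\,\LipT(r_0)\,\|w_1-w_2\|$, where $L_{\mathrm{tail}}(r_0):=D_Ke^{\Upsilon_K}\sum_{n\ge N+1}n(C_Kr_0)^{n-1}$. This factor does not depend on $\tau$, and it need not be below one. Termwise comparison with $\eps_N(r_0)/r_0=D_Ke^{\Upsilon_K}\sum_{n\ge N+1}(C_Kr_0)^{n-1}$ gives $L_{\mathrm{tail}}(r_0)\ge(N+1)\,\eps_N(r_0)/r_0$. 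But (\ref{eq:cond_resid}) only guarantees $\eps_N(r_0)/r_0<1-\Lipk(r_0)$, whereas a contraction from this bound needs $L_{\mathrm{tail}}(r_0)<1-\Lipk(r_0)$.

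Worse, on the space you chose (arbitrary $w$ with values in $\Bball_{\rho_w(r_0)}$), the profiles $w_1(\cdot,\sigma)$ and $w_2(\cdot,\sigma)$ may differ on all of $[0,1]$, so there is no strip to localize on. The $w$-route can be repaired in three moves:
\begin{enumerate}
\item parametrize by the scalar boundary datum, so that $w_1=w_2=w_0(\cdot+\sigma)$ on $[0,1-\sigma)$;
\item use Volterra causality of $\Tcal^{-1}$ to get $u_1-u_2$ supported in $[1-\sigma,1]$;
\item replace the sup-norm Lipschitz bound by a support-sensitive one: each telescoped integrand lives on $T_n(1)\cap\{\xi_1\ge 1-\sigma\}$, of measure at most $n\sigma/n!$, so Cauchy--Schwarz against (\ref{eq:knL2_endpoint}) yields a factor $\sqrt{n\sigma}$ and hence an $O(\sqrt{\tau})$ contraction constant.
\end{enumerate}
Even then you would still owe the mild-level identity that $\Tcal^{-1}w$ satisfies (\ref{eq:canonical_mild_a})--(\ref{eq:canonical_mild_b}), since that is the solution concept the theorem asserts.

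The paper sidesteps all of this by running the fixed point on the canonical mild formulation in $u$ itself. There the boundary term $K_N[u(\cdot,t-(1-x))](1)$ contracts with factor $\LipKn(r_0)\le\Lipk(r_0)<1$, with no smallness needed. Only the $F$-integral requires $\tau L_F(r_0)$ to be small. Global existence then follows from your $w$-bounds, which give $\|u\|_\infty\le q<r_0$, plus a continuation argument.

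\textbf{Step 3(iv)} also has a defect. Seeding the iteration with the bound from (ii) does not produce a class-$\mathcal{KL}$ function. That bound has the $s$-independent floor $\LipT(r_0)\eps_N(r_0)$, so $\sigma_0(0)>0$, hence $\sigma_k(0)>0$ for every $k$ and $\beta(0,t)\ne 0$.

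The missing idea is the paper's self-consistent amplification bound. Let $Y_0:=\sup_{t\ge0}\|u(\cdot,t)\|_\infty\le r_0$ and $\Phi:=\LipT(r_0)\eps_N$. The residual is then bounded by $\eps_N(Y_0)$ rather than $\eps_N(r_0)$, so $Y_0\le\max\bigl(\LipT(r_0)\|w_0\|_\infty,\Phi(Y_0)\bigr)$. Since $\Phi(r)<r$ on $(0,r_0]$, the second branch is ruled out, and $Y_0\le\frac{1+\Lipk(r_0)}{1-\Lipk(r_0)}\,s$, which is class $\mathcal{K}$.

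Your window recursion also needs one more line. For $t\in[k+1,k+2]$, the window $[t-1,t]$ reaches into $[k+1,k+2]$ itself, so your inequality follows only after invoking $\Phi(r)<r$ again. The paper avoids this by using tail suprema $Y_k:=\sup_{t\ge k}\|u(\cdot,t)\|_\infty$: then $[t-1,t]\subseteq[k,\infty)$ whenever $t\ge k+1$, and $Y_{k+1}\le\Phi(Y_k)$ is immediate. Finally, for $\beta(s,\cdot)$ to be strictly decreasing on $[0,1]$, interpolate from a value above $\sigma_0$, e.g.\ $2\sigma_0$.
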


Theorem~\ref{thm:main} is a local asymptotic stabilization result in the $\Linf$ norm, carrying quantitative practical-stability bounds alongside the $\mathcal{KL}$ statement. It holds at every truncation order $N \ge 2$ meeting the residual-in-ball inequality (\ref{eq:cond_resid}) --- a mild condition, satisfied for moderate radii $r_0$ at the second and third orders used in practice --- with no auxiliary high-order requirement. The reason is that the truncation tail is higher-order in the state: it becomes innocuous on a small enough ball whatever the order, so a low $N$ is paid for by a smaller basin, never by loss of the conclusion.

That same higher-order vanishing is what reconciles the two decay statements, which otherwise look at odds. The tail $K[u] - K_N[u]$ collects orders $n \ge N+1$ and vanishes at the rate $\eps_N(\|u\|_\infty) = O(\|u\|_\infty^{N+1})$, so the perturbation driving the boundary residual $b(t)$ dies with the state; asymptotic stability of $u \equiv 0$ is not obstructed. What the uniform bounds (ii)--(iii) give up is not decay but time-uniformity: they replace $|b(t)|$ by its worst case $\eps_N(r_0)$ over the whole ball, buying an envelope that holds at every $t$ with explicit constants and depends only on the size of the state, not its history. Statement (iv) recovers the decay the worst-case bound discards, iterating the higher-order smallness along trajectories into a $\mathcal{KL}$ envelope --- at the cost of explicit constants in $\beta$. The $\beta$ so produced fits no single exponential rate: as the state shrinks the iteration accelerates, each unit-time step contracting the bound by a factor that itself tends to zero, so the decay is superexponential near the origin. When a computable envelope at finite $t$ is wanted, (ii) is the statement to quote; for the qualitative characterization of the equilibrium, (iv).

Based on \eqref{eq:cond_init}, \eqref{eq:rhou0def}, \eqref{eq:LipKdef}, \eqref{eq:rbardef}, \eqref{eq:fnbound}, and \eqref{eq:knL2}, the design guarantees asymptotic stability for initial conditions at least of the size $\|u_0\|_\infty < \sup_{0 < r < \bar r} \Lambda(r)$, where $\bar r := \min\{\rho_f,\, z_N/C_K\}$, $\Lambda(r) = r\,(1 - L(r))/(1 + L(r))$, $L(r) = A\,C_K r\,(2 - C_K r)/(1 - C_K r)^2$, and $A := D_K\, e^{\Upsilon_K}$. The gain $\Lambda$ vanishes at $r = 0$ and at $r = \tfrac{1}{C_K}(1 - \sqrt{A/(1+A)})$, with an interior maximum between; and $z_N(A)$ decreases in $A$ ($z_N(0) = 1$, $z_N(\infty) = 0$) and increases in $N$ ($z_\infty(A) < 1$). The basin thus grows with the truncation order $N$ and the analyticity radius $\rho_f$, and shrinks --- but never to zero, even at $N = 2$ --- as $N$ falls or as $A$ and $C_K$ rise.

The proof of Theorem~\ref{thm:main} is given in Section~\ref{sec:mainproof}, after establishing the supporting lemmas in Sections~\ref{sec:tail}--\ref{sec:transport}.

\section{Truncation tail bound}\label{sec:tail}

The truncated controller fails to feedback-linearize the plant exactly: it leaves a residual at the boundary equal to the discarded Volterra tail, and this residual is what every stability statement in the paper must accommodate. We show here that it decays geometrically in the truncation order, on any sup-norm ball strictly inside the radius $1/C_K$, by bounding a single high-order Volterra term and summing the geometric tail.

\begin{lemma}[Pointwise tail bound on Volterra terms]\label{lem:pointwise_tail}
Let Assumption~\ref{ass:fnbound} and Proposition~\ref{prop:L2bound} hold. For every $u \in \Linf(0,1)$ with $\|u\|_\infty < 1/C_K$ and every $n \ge 2$,
\begin{eqnarray}
&& \biggl|\int_{T_n(1)} k_n(1,\xi)\, u(\xi_1)\cdots u(\xi_n)\, d\xi\biggr| \nonumber \\
&& \quad \le\; \frac{D_K\, e^{\Upsilon_K}}{C_K}\,(C_K\,\|u\|_\infty)^n.
\label{eq:pointwise_tail}
\end{eqnarray}
\end{lemma}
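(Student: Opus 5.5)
The plan is to pull the sup norm of $u$ out of the integral and then control the remaining $L^1$ norm of the trace kernel by Cauchy--Schwarz on the simplex, feeding in the imported $\Ltwo$ bound of Proposition~\ref{prop:L2bound}. No pointwise information on $k_n$ is needed, which matches the bridge described in the introduction.

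\emph{Step 1: extract the sup norm.} Since $|u(\xi_1)\cdots u(\xi_n)| \le \|u\|_\infty^n$ for almost every $\xi\in T_n(1)$, we have
\begin{equation*}
\biggl|\int_{T_n(1)} k_n(1,\xi)\, u(\xi_1)\cdots u(\xi_n)\, d\xi\biggr| \le \|u\|_\infty^n\, \|k_n(1,\cdot)\|_{L^1(T_n(1))}.
\end{equation*}

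\emph{Step 2: Cauchy--Schwarz on the simplex.} The simplex $T_n(1)$ has volume $1/n!$. Hence
\begin{equation*}
\|k_n(1,\cdot)\|_{L^1(T_n(1))} \le |T_n(1)|^{1/2}\, \|k_n(1,\cdot)\|_{\Ltwo(T_n(1))} \le (n!)^{-1/2}\,\bigl(n!\,D_K^2 C_K^{2(n-1)} e^{2\Upsilon_K}\bigr)^{1/2}.
\end{equation*}
The second inequality is the endpoint bound (\ref{eq:knL2_endpoint}). The factors $n!$ cancel, and we obtain
\begin{equation*}
\|k_n(1,\cdot)\|_{L^1(T_n(1))} \le D_K\, C_K^{n-1} e^{\Upsilon_K}.
\end{equation*}

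\emph{Step 3: assemble.} Combining Steps 1 and 2 gives
\begin{equation*}
D_K e^{\Upsilon_K} C_K^{n-1}\|u\|_\infty^n = \frac{D_K e^{\Upsilon_K}}{C_K}(C_K\|u\|_\infty)^n,
\end{equation*}
which is (\ref{eq:pointwise_tail}). The hypothesis $\|u\|_\infty<1/C_K$ is not used in the single-term estimate. It only matters when these terms are summed into the geometric tail $\eps_N$.

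\emph{Main obstacle.} The one point requiring care is the cancellation in Step 2. It works only because the $n!$ growth in the imported bound (\ref{eq:knL2_endpoint}) is exactly offset by the $1/n!$ volume of the simplex under Cauchy--Schwarz. I would therefore verify $|T_n(1)|=1/n!$, which holds by the ordering $0\le\xi_n\le\cdots\le\xi_1\le 1$, and confirm that $k_n(1,\cdot)$ is measurable and in $\Ltwo$, which the proposition guarantees. With these checked, the integral is absolutely convergent and the estimate is rigorous.
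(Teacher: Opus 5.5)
Your proof is correct, but it takes a different route from the paper's. The paper applies Cauchy--Schwarz directly to the pairing of $k_n(1,\cdot)$ with the product $u^{\otimes n}$ in $\Ltwo(T_n(1))$. It then evaluates $\|u^{\otimes n}\|_{\Ltwo(T_n(1))} = \|u\|_{\Ltwo(0,1)}^n/\sqrt{n!}$ by symmetrizing the permutation-invariant integrand $|u^{\otimes n}|^2$ over $[0,1]^n$. This gives the estimate with $\|u\|_{\Ltwo(0,1)}$ in place of $\|u\|_\infty$, and only the last line relaxes to the sup-norm.

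You instead extract $\|u\|_\infty^n$ first and pair $k_n(1,\cdot)$ with the constant $1$. In your version the $1/\sqrt{n!}$ comes from the simplex volume rather than from symmetrization. This is exactly the argument of Lemma~\ref{lem:knL1} at $x=1$, followed by the H\"older step used in Lemma~\ref{lem:Kbdd}.

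Your route is more elementary, since it needs no symmetrization identity, and it yields the same constant for the stated $\Linf$ bound. Its cost is that it discards the $\Ltwo$ information at the very first step. The paper's version is the stronger intermediate result. Run slice-wise at any $x$, it gives Lemma~\ref{lem:tailL2}, the $\Ltwo$ tail bound on which the $\Ltwo$ theorem of Section~\ref{sec:L2} rests, and your argument cannot recover that.

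Your remark that the hypothesis $\|u\|_\infty<1/C_K$ plays no role in the single-term estimate is also correct; the paper's proof does not use it either.
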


\begin{proof}
By the Cauchy--Schwarz inequality on $T_n(1)$,
\begin{eqnarray}
&& \biggl|\int_{T_n(1)} k_n(1,\xi)\, u^{\otimes n}(\xi)\, d\xi\biggr| \nonumber \\
&& \quad \le\; \|k_n\|_{\Ltwo(T_n(1))}\, \|u^{\otimes n}\|_{\Ltwo(T_n(1))},
\label{eq:CS_step}
\end{eqnarray}
where 
\begin{eqnarray}
u^{\otimes n}(\xi) &:=& \prod_{i=1}^n u(\xi_i) \;=\; u(\xi_1)\cdots u(\xi_n), \nonumber \\
&& \quad \xi = (\xi_1,\ldots,\xi_n) \in T_n(1).
\label{eq:utensor}
\end{eqnarray}
Symmetrization gives
\begin{eqnarray}
\int_{T_n(1)} |u(\xi_1)\cdots u(\xi_n)|^2\, d\xi &=& \frac{1}{n!}\int_{[0,1]^n} \prod_{i=1}^n |u(\xi_i)|^2\, d\xi \nonumber \\
&=& \frac{\|u\|_{\Ltwo(0,1)}^{2n}}{n!},
\label{eq:symmetrization}
\end{eqnarray}
so $\|u^{\otimes n}\|_{\Ltwo(T_n(1))} = \|u\|_{\Ltwo(0,1)}^n / \sqrt{n!}$. Combining with the $\Ltwo$ kernel bound (\ref{eq:knL2_endpoint}),
\begin{eqnarray}
\biggl|\int_{T_n(1)} k_n\, u^{\otimes n}\, d\xi\biggr| &\le& \sqrt{n!}\, D_K\, C_K^{n-1}\, e^{\Upsilon_K}\cdot\frac{\|u\|_{\Ltwo}^n}{\sqrt{n!}} \nonumber \\
&=& \frac{D_K\, e^{\Upsilon_K}}{C_K}\,(C_K\,\|u\|_{\Ltwo})^n.
\label{eq:combined_bound}
\end{eqnarray}
Since $[0,1]$ has unit Lebesgue measure, $\|u\|_{\Ltwo(0,1)} \le \|u\|_{\Linf(0,1)}$, and (\ref{eq:pointwise_tail}) follows.
\end{proof}

The geometric tail starting at order $N+1$ now sums to the residual bound that the rest of the paper carries.

\begin{lemma}[Truncation tail]\label{lem:trunc_tail}
Under the hypotheses of Lemma~\ref{lem:pointwise_tail}, for every $r \in [0, 1/C_K)$, every $u \in \Linf(0,1)$ with $\|u\|_\infty \le r$, and every $N \ge 2$,
\begin{equation}
\bigl|K[u](1) - K_N[u](1)\bigr| \;\le\; \eps_N(r),
\label{eq:trunc_tail}
\end{equation}
where $K_N$ is defined by (\ref{eq:KNdef}) and $\eps_N(r)$ is given by (\ref{eq:epsNdef}).
\end{lemma}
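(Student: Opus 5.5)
The difference $K[u](1)-K_N[u](1)$ is, by the definitions (\ref{eq:Kdef}) and (\ref{eq:KNdef}), exactly the tail $\sum_{n=N+1}^\infty \int_{T_n(1)} k_n(1,\xi)\,u(\xi_1)\cdots u(\xi_n)\,d\xi$. The plan is to bound this term by term with Lemma~\ref{lem:pointwise_tail}, which applies because $\|u\|_\infty \le r < 1/C_K$. Applying the triangle inequality to the series gives
\begin{equation*}
\bigl|K[u](1)-K_N[u](1)\bigr| \;\le\; \sum_{n=N+1}^\infty \frac{D_K e^{\Upsilon_K}}{C_K}\,(C_K\|u\|_\infty)^n .
\end{equation*}
Since $t\mapsto t^n$ is monotone on $[0,\infty)$, each summand is at most $\frac{D_K e^{\Upsilon_K}}{C_K}(C_K r)^n$.

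Summing the geometric series with ratio $C_K r<1$ gives $\sum_{n\ge N+1}(C_K r)^n=(C_K r)^{N+1}/(1-C_K r)$. Multiplying by the prefactor reproduces $\eps_N(r)$ in (\ref{eq:epsNdef}) exactly.

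The only point requiring care is that the series defining $K[u](1)$ converges, so that the tail is a well-defined limit and the triangle inequality may be applied to the infinite sum. This follows from the same estimate applied from $n=2$: the bounds from Lemma~\ref{lem:pointwise_tail} are summable when $C_K\|u\|_\infty<1$. The series therefore converges absolutely, and the partial sums $K_M[u](1)-K_N[u](1)$ converge to the tail as $M\to\infty$. Hence the bound obtained for finite $M$ passes to the limit. The edge case $r=0$ is trivial, since then $u=0$ a.e. and both sides vanish. I expect no substantive obstacle; the whole argument reduces to Lemma~\ref{lem:pointwise_tail} plus a geometric sum.
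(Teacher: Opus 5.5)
Your proposal is correct and follows the paper's proof: you write the difference as the tail from $n=N+1$, bound each term by Lemma~\ref{lem:pointwise_tail}, replace $\|u\|_\infty$ by $r$ using monotonicity, and sum the geometric series to obtain $\eps_N(r)$. Your extra remark, that the series converges absolutely so the triangle inequality passes to the limit, is a small rigor addition the paper leaves implicit; the $r=0$ case needs no separate treatment, since the lemma's hypothesis already covers it.
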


\begin{proof}
By Lemma~\ref{lem:pointwise_tail} and $C_K\,\|u\|_\infty \le C_K r < 1$,
\begin{eqnarray}
\bigl|K[u](1) - K_N[u](1)\bigr| &=& \biggl|\sum_{n=N+1}^\infty \int_{T_n(1)} k_n\, u^{\otimes n}\, d\xi\biggr| \nonumber\\
&\le & \frac{D_K\, e^{\Upsilon_K}}{C_K} \sum_{n=N+1}^\infty (C_K r)^n \nonumber\\
&=& \frac{D_K\, e^{\Upsilon_K}}{C_K}\,\frac{(C_K r)^{N+1}}{1 - C_K r}.
\label{eq:tail_proof}
\end{eqnarray}
\end{proof}

\section{Sup-norm boundedness and Lipschitzness of $K$}\label{sec:KLip}

The kernel bound imported from \cite{krstic2026feedbacklinearizationhyperbolicpdes} is an $L^2$ bound on simplices whose dimension grows without bound, and the factorial growth on the right-hand side reflects the volume of those simplices. Direct sup-norm bounds on the kernels are not available, and pointwise bounds would not help: the truncation residual is a pointwise quantity, but the operator $K$ acts on $\Linf$ and what is needed is control of integrals of kernels against bounded states. The bridge is a single application of Cauchy--Schwarz on each simplex. The square root of the simplex volume, $1/\sqrt{n!}$, cancels exactly the $\sqrt{n!}$ on the right-hand side of the imported bound, and what remains is an $L^1$ kernel bound that is uniform in $n$ --- a geometric series in $C_K r$, summable on a non-trivial ball. Every sup-norm bound and every sup-norm Lipschitz constant in this paper is downstream of this single cancellation; the rest of the section just integrates against an $\Linf$ state to convert kernel bounds into operator bounds.

\begin{lemma}[$L^1$ kernel slice bound]\label{lem:knL1}
Under Assumption~\ref{ass:fnbound} and Proposition~\ref{prop:L2bound}, for every $n \ge 2$ and every $x \in [0,1]$,
\begin{eqnarray}
\|k_n(x,\cdot)\|_{L^1(T_n(x))} &\le& D_K\, C_K^{n-1}\, x^n\, e^{\Upsilon_K x} \nonumber \\
&\le& D_K\, C_K^{n-1}\, e^{\Upsilon_K}.
\label{eq:knL1}
\end{eqnarray}
\end{lemma}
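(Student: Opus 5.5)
The plan is to apply Cauchy--Schwarz on the simplex $T_n(x)$, pairing the kernel slice with the constant function $1$. This gives
\begin{equation*}
\|k_n(x,\cdot)\|_{L^1(T_n(x))} \le \|k_n(x,\cdot)\|_{\Ltwo(T_n(x))}\,|T_n(x)|^{1/2}.
\end{equation*}

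The first ingredient is the volume of the simplex. The set $T_n(x)$ is one of the $n!$ congruent ordering cells that partition the cube $[0,x]^n$, up to null sets. This is the same symmetrization used in the proof of Lemma~\ref{lem:pointwise_tail}, so $|T_n(x)| = x^n/n!$ and therefore $|T_n(x)|^{1/2} = x^{n/2}/\sqrt{n!}$.

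The second ingredient is the imported slice bound (\ref{eq:knL2}) of Proposition~\ref{prop:L2bound}. Taking square roots gives
\begin{equation*}
\|k_n(x,\cdot)\|_{\Ltwo(T_n(x))} \le \sqrt{n!}\,D_K\,C_K^{n-1}\,x^{n/2}\,e^{\Upsilon_K x}.
\end{equation*}
Multiplying the two factors, the $\sqrt{n!}$ cancels exactly. The two half powers of $x$ combine to $x^n$, which yields the first inequality in (\ref{eq:knL1}).

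The second inequality follows from $x\in[0,1]$ and $\Upsilon_K>0$. These give $x^n\le 1$ and $e^{\Upsilon_K x}\le e^{\Upsilon_K}$.

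There is no real obstacle here. The only point needing care is that $T_n(x)$ has measure $x^n/n!$ for every $x\in[0,1]$, not only at $x=1$. This follows from the scaling $\xi\mapsto x\xi$, which maps $T_n(1)$ onto $T_n(x)$ and has Jacobian $x^n$. At $x=0$ the simplex is degenerate and both sides vanish, so the bound holds trivially there.
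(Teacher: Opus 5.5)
Your proposal is correct and follows the paper's proof essentially step for step. Both apply Cauchy--Schwarz against the constant function on $T_n(x)$, use $\mathrm{vol}(T_n(x))^{1/2} = x^{n/2}/\sqrt{n!}$ to cancel the $\sqrt{n!}$ in the imported slice bound (\ref{eq:knL2}), and get the second inequality from $x \le 1$.
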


\begin{proof}
By the Cauchy--Schwarz inequality on $T_n(x)$, and using $\mathrm{vol}(T_n(x))^{1/2} = x^{n/2}/\sqrt{n!}$,
\begin{eqnarray}
\|k_n(x,\cdot)\|_{L^1(T_n(x))} &\le& \|k_n(x,\cdot)\|_{\Ltwo(T_n(x))}\, \mathrm{vol}(T_n(x))^{1/2} \nonumber \\
&=& \|k_n(x,\cdot)\|_{\Ltwo(T_n(x))}\,\frac{x^{n/2}}{\sqrt{n!}}.
\label{eq:L1_CS}
\end{eqnarray}
Applying the slice bound (\ref{eq:knL2}),
\begin{eqnarray}
\|k_n(x,\cdot)\|_{L^1(T_n(x))} &\le& \sqrt{n!}\, D_K\, C_K^{n-1}\, x^{n/2}\, e^{\Upsilon_K x}\,\frac{x^{n/2}}{\sqrt{n!}} \nonumber \\
&=& D_K\, C_K^{n-1}\, x^n\, e^{\Upsilon_K x}.
\label{eq:L1_applied}
\end{eqnarray}
The second inequality in (\ref{eq:knL1}) follows from $x \le 1$.
\end{proof}

The $L^1$ kernel bound is what the rest of the section consumes: integrated against an $\Linf$ state, it pulls $\|u\|_\infty$ outside and leaves a geometric series.

\begin{lemma}[Sup-norm boundedness of $K$]\label{lem:Kbdd}
Under Assumption~\ref{ass:fnbound} and Proposition~\ref{prop:L2bound}, for every $u \in \Linf(0,1)$ with $\|u\|_\infty < 1/C_K$,
\begin{equation}
\|K[u]\|_\infty \;\le\; \frac{D_K\, e^{\Upsilon_K}}{C_K}\cdot \frac{(C_K\,\|u\|_\infty)^2}{1 - C_K\,\|u\|_\infty}.
\label{eq:Kbdd}
\end{equation}
\end{lemma}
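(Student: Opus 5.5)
The plan is to bound $K[u](x)$ pointwise in $x\in[0,1]$, term by term in the Volterra series (\ref{eq:Kdef}), using the $L^1$ slice bound of Lemma~\ref{lem:knL1}, and then to sum the resulting geometric series. Fix $x\in[0,1]$ and $n\ge 2$. Since $|u(\xi_i)|\le\|u\|_\infty$ for almost every $\xi_i\in[0,1]$, the product $\prod_{i=1}^n u(\xi_i)$ is bounded by $\|u\|_\infty^n$ almost everywhere on $T_n(x)$. This requires a brief remark that a null set in one coordinate yields a null set in $T_n(x)$, by Fubini. Hence
\begin{equation*}
\biggl|\int_{T_n(x)} k_n(x,\xi)\,u^{\otimes n}(\xi)\,d\xi\biggr| \le \|k_n(x,\cdot)\|_{L^1(T_n(x))}\,\|u\|_\infty^n \le D_K\,C_K^{n-1}\,e^{\Upsilon_K}\,\|u\|_\infty^n,
\end{equation*}
where the second inequality is (\ref{eq:knL1}) together with $x\le 1$.

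Next, I rewrite the $n$-th bound as $\frac{D_K e^{\Upsilon_K}}{C_K}(C_K\|u\|_\infty)^n$ and sum over $n\ge 2$. Because $C_K\|u\|_\infty<1$, the series converges absolutely and gives
\begin{equation*}
\sum_{n=2}^\infty (C_K\|u\|_\infty)^n = \frac{(C_K\|u\|_\infty)^2}{1-C_K\|u\|_\infty}.
\end{equation*}
This bound is uniform in $x$, so taking the (essential) supremum over $x\in[0,1]$ yields (\ref{eq:Kbdd}).

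The only step that needs care is justifying that the series defining $K[u](x)$ converges and can be bounded term by term. This follows from the same absolute-convergence estimate: the partial sums form a Cauchy sequence, uniformly in $x$. Each term is measurable in $x$, since $k_n$ is measurable on the set $\{(x,\xi):\xi\in T_n(x)\}$. So $K[u]$ is the uniform limit of measurable functions, and its sup-norm is controlled by the summed bound. I expect no genuine obstacle here; the real work (the Cauchy--Schwarz cancellation of $\sqrt{n!}$) is already done in Lemma~\ref{lem:knL1}.
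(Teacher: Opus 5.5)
Your proposal is correct and matches the paper's proof: bound each order-$n$ term by $\|k_n(x,\cdot)\|_{L^1(T_n(x))}\,\|u\|_\infty^n$, invoke Lemma~\ref{lem:knL1}, sum the geometric series using $C_K\|u\|_\infty<1$, and take the supremum over $x$. Your extra measure-theoretic remarks (the a.e.\ bound on the product via Fubini, and uniform convergence of the partial sums) are sound and make explicit what the paper leaves implicit.
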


\begin{proof}
Fix $x \in [0,1]$. Since $|u(\xi_1)\cdots u(\xi_n)| \le \|u\|_\infty^n$,
\begin{equation}
|K[u](x)| \;\le\; \sum_{n=2}^\infty \|k_n(x,\cdot)\|_{L^1(T_n(x))}\,\|u\|_\infty^n.
\label{eq:K_sum}
\end{equation}
Applying Lemma~\ref{lem:knL1},
\begin{eqnarray}
|K[u](x)| &\le& D_K\, e^{\Upsilon_K}\sum_{n=2}^\infty C_K^{n-1}\,\|u\|_\infty^n \nonumber \\
&=& \frac{D_K\, e^{\Upsilon_K}}{C_K}\sum_{n=2}^\infty (C_K\,\|u\|_\infty)^n \nonumber \\
&=& \frac{D_K\, e^{\Upsilon_K}}{C_K}\cdot\frac{(C_K\,\|u\|_\infty)^2}{1 - C_K\,\|u\|_\infty},
\label{eq:K_geom}
\end{eqnarray}
where the geometric series converges because $C_K\,\|u\|_\infty < 1$. Taking supremum over $x \in [0,1]$ gives (\ref{eq:Kbdd}).
\end{proof}

The Lipschitz estimate is the same calculation for a difference of states: the multilinear telescoping identity for $u_1^{\otimes n} - u_2^{\otimes n}$ produces an extra factor $n$, and the sum becomes the derivative of the previous one.

\begin{lemma}[Sup-norm Lipschitzness of $K$]\label{lem:KLip}
Under Assumption~\ref{ass:fnbound} and Proposition~\ref{prop:L2bound}, for every $r \in [0, 1/C_K)$ and every $u_1, u_2 \in \Linf(0,1)$ with $\|u_1\|_\infty, \|u_2\|_\infty \le r$,
\begin{equation}
\|K[u_1] - K[u_2]\|_\infty \;\le\; \Lipk(r)\,\|u_1 - u_2\|_\infty,
\label{eq:KLip}
\end{equation}
with $\Lipk(r)$ given by (\ref{eq:LipKdef}). In particular, $\Lipk(0) = 0$.
\end{lemma}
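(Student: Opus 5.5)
We follow the recipe of Lemma~\ref{lem:Kbdd}, applied to a difference of states. Fix $x \in [0,1]$ and write
\[
K[u_1](x) - K[u_2](x) = \sum_{n=2}^\infty \int_{T_n(x)} k_n(x,\xi)\bigl(u_1^{\otimes n}(\xi) - u_2^{\otimes n}(\xi)\bigr)\, d\xi .
\]
The first step is the multilinear telescoping identity
\[
u_1^{\otimes n} - u_2^{\otimes n} = \sum_{j=1}^n \Bigl(\prod_{i<j} u_1(\xi_i)\Bigr)\bigl(u_1(\xi_j) - u_2(\xi_j)\bigr)\Bigl(\prod_{i>j} u_2(\xi_i)\Bigr).
\]
Each of its $n$ summands is bounded pointwise on $T_n(x)$ by $r^{n-1}\|u_1 - u_2\|_\infty$, since $\|u_1\|_\infty, \|u_2\|_\infty \le r$. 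Hence
\[
\bigl|u_1^{\otimes n}(\xi) - u_2^{\otimes n}(\xi)\bigr| \le n\, r^{n-1}\, \|u_1 - u_2\|_\infty
\]
for almost every $\xi \in T_n(x)$.

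Next we integrate this bound against $|k_n(x,\cdot)|$ and invoke the $L^1$ slice bound of Lemma~\ref{lem:knL1}. This gives
\[
|K[u_1](x) - K[u_2](x)| \le D_K e^{\Upsilon_K}\, \|u_1 - u_2\|_\infty \sum_{n=2}^\infty n\, C_K^{n-1} r^{n-1}.
\]
Putting $q = C_K r \in [0,1)$, the series equals
\[
\sum_{n=2}^\infty n q^{n-1} = \frac{1}{(1-q)^2} - 1 = \frac{q(2-q)}{(1-q)^2}.
\]
This is the derivative of the geometric sum $\sum_{n \ge 2} q^n = q^2/(1-q)$ from Lemma~\ref{lem:Kbdd}. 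Substituting back produces exactly $\Lipk(r)$ from (\ref{eq:LipKdef}). Taking the supremum over $x$ gives (\ref{eq:KLip}), and $\Lipk(0) = 0$ is immediate from the factor $q$.

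The one point needing care is justifying the exchange of sum and integral, so that the difference of the two series may be taken term by term. Both series converge absolutely: by Lemma~\ref{lem:knL1} their terms are dominated by $D_K e^{\Upsilon_K} C_K^{n-1} r^n$, which is summable for $C_K r < 1$. The subtraction is therefore legitimate, and no genuine obstacle remains.
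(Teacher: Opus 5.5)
Your proof is correct and follows the paper's argument step for step: the telescoping identity, the pointwise bound $n\,r^{n-1}\|u_1-u_2\|_\infty$, the $L^1$ slice bound of Lemma~\ref{lem:knL1}, and summing $\sum_{n\ge 2} n\,(C_K r)^{n-1} = C_K r(2-C_K r)/(1-C_K r)^2$. Your remark that absolute convergence justifies subtracting the two series term by term is a small point the paper leaves implicit.
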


\begin{proof}
The multilinear telescoping identity
\begin{eqnarray}
&& \prod_{i=1}^n u_1(\xi_i) - \prod_{i=1}^n u_2(\xi_i) \nonumber \\
&=& \sum_{j=1}^n \biggl[\prod_{i<j} u_1(\xi_i)\biggr]\,\bigl[u_1(\xi_j) - u_2(\xi_j)\bigr] \nonumber \\
&& \quad \times \biggl[\prod_{i>j} u_2(\xi_i)\biggr]
\label{eq:telescope}
\end{eqnarray}
gives, on $T_n(x)$ with $\|u_1\|_\infty, \|u_2\|_\infty \le r$,
\begin{equation}
\bigl|u_1^{\otimes n}(\xi) - u_2^{\otimes n}(\xi)\bigr| \;\le\; n\, r^{n-1}\, \|u_1 - u_2\|_\infty.
\label{eq:telescope_bound}
\end{equation}
Therefore
\begin{eqnarray}
|K[u_1](x) - K[u_2](x)| &\le& \sum_{n=2}^\infty \|k_n(x,\cdot)\|_{L^1(T_n(x))} \nonumber \\
&& \quad \times n\, r^{n-1}\, \|u_1 - u_2\|_\infty.
\label{eq:K1K2_sum}
\end{eqnarray}
By Lemma~\ref{lem:knL1},
\begin{eqnarray}
|K[u_1](x) - K[u_2](x)| &\le& D_K\, e^{\Upsilon_K}\, \|u_1 - u_2\|_\infty \nonumber \\
&& \quad \times \sum_{n=2}^\infty n\, (C_K r)^{n-1}.
\label{eq:K1K2_applied}
\end{eqnarray}
For $y := C_K r \in [0,1)$, $\sum_{n=1}^\infty n\, y^{n-1} = (1-y)^{-2}$, so $\sum_{n=2}^\infty n\, y^{n-1} = (1-y)^{-2} - 1 = y(2-y)/(1-y)^2$. Hence
\begin{eqnarray}
\|K[u_1] - K[u_2]\|_\infty &\le& D_K\, e^{\Upsilon_K}\, \frac{C_K r\,(2 - C_K r)}{(1 - C_K r)^2} \nonumber \\
&& \quad \times \|u_1 - u_2\|_\infty \nonumber \\
&=& \Lipk(r)\,\|u_1 - u_2\|_\infty.
\label{eq:K1K2_final}
\end{eqnarray}
The identity $\Lipk(0) = 0$ follows from (\ref{eq:LipKdef}) and reflects the absence of a linear term in (\ref{eq:Kdef}), the series beginning at $n=2$.
\end{proof}

\section{Local inverse $\Tcal^{-1}$ and norm conversion}\label{sec:Tinv}

The backstepping transformation $\Tcal = I - K$ admits a local sup-norm inverse on a ball whose radius is governed by $\Lipk$. The technical statements (local invertibility, Lipschitz constant of $\Tcal^{-1}$) are derived in Appendix~\ref{app:Tinv} by a contraction-mapping argument that parallels the $\Ltwo$-norm construction in \cite{krstic2026feedbacklinearizationhyperbolicpdes}, with the same structure adapted to the sup-norm. The two end-products invoked in the closed-loop analysis are: a sup-norm Lipschitz bound for $\Tcal$ itself, and the conversion of a target-state bound to a plant-state bound via $\Tcal^{-1}$.

\begin{lemma}[Sup-norm Lipschitzness of $\Tcal$]\label{lem:TLip}
For every $r \in [0, 1/C_K)$ and every $u_1, u_2 \in \Linf(0,1)$ with $\|u_1\|_\infty, \|u_2\|_\infty \le r$,
\begin{equation}
\|\Tcal[u_1] - \Tcal[u_2]\|_\infty \;\le\; \bigl(1 + \Lipk(r)\bigr)\,\|u_1 - u_2\|_\infty.
\label{eq:TLip}
\end{equation}
\end{lemma}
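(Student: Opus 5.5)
This is a one-line consequence of Lemma~\ref{lem:KLip} combined with the triangle inequality, using only the definition $\Tcal = I - K$. Take $r \in [0, 1/C_K)$ and $u_1, u_2 \in \Linf(0,1)$ with $\|u_1\|_\infty, \|u_2\|_\infty \le r$. Expand the difference as
\begin{equation*}
\Tcal[u_1] - \Tcal[u_2] \;=\; (u_1 - u_2) \;-\; \bigl(K[u_1] - K[u_2]\bigr),
\end{equation*}
which holds pointwise in $x \in [0,1]$. Taking sup-norms and applying the triangle inequality yields $\|\Tcal[u_1] - \Tcal[u_2]\|_\infty \le \|u_1 - u_2\|_\infty + \|K[u_1] - K[u_2]\|_\infty$.

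For the second term we invoke Lemma~\ref{lem:KLip} on the ball of radius $r$, whose hypotheses are exactly those assumed here. It bounds the term by $\Lipk(r)\,\|u_1 - u_2\|_\infty$, and adding the two bounds gives (\ref{eq:TLip}).

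The one detail to confirm is that $K[u_i]$ is well defined as an element of $\Linf(0,1)$, so that the sup-norms above are meaningful. This follows from Lemma~\ref{lem:Kbdd}, because $\|u_i\|_\infty \le r < 1/C_K$ makes the defining series converge absolutely and uniformly in $x$. There is no real obstacle here: all the analytic content sits in Lemma~\ref{lem:KLip}, and the present lemma only records it in the form used later for the norm conversion between $u$ and $w$. In particular, it gives $\|w_0\|_\infty \le (1+\Lipk(r_0))\|u_0\|_\infty$ when $u_2 = 0$, using $\Tcal[0]=0$.
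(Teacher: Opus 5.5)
Your proof is correct and is exactly the paper's argument: you use the pointwise identity $\Tcal[u_1]-\Tcal[u_2] = (u_1-u_2) - (K[u_1]-K[u_2])$, then the triangle inequality, then Lemma~\ref{lem:KLip} on the ball of radius $r$. Your extra check that $K[u_i] \in \Linf(0,1)$ via Lemma~\ref{lem:Kbdd} is harmless; the paper leaves it implicit.
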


\begin{proof}
Since $\Tcal[u_1] - \Tcal[u_2] = (u_1 - u_2) - (K[u_1] - K[u_2])$, the triangle inequality and Lemma~\ref{lem:KLip} give (\ref{eq:TLip}).
\end{proof}

\begin{corollary}[State norm bound from target norm]\label{cor:state_from_w}
Let $r_0 \in (0, r^*)$, so that $\Lipk(r_0) < 1$. For every $w \in \Bball_{\rho_w(r_0)}$, the corresponding $u = \Tcal^{-1}[w]$ produced by Lemma~\ref{lem:Tinv} of Appendix~\ref{app:Tinv} satisfies
\begin{equation}
\|u\|_\infty \;\le\; \LipT(r_0)\,\|w\|_\infty,
\label{eq:state_from_w}
\end{equation}
with $\LipT(r_0)$ given by~(\ref{eq:LipTinvdef}).
\end{corollary}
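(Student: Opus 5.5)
The plan is to read the bound off the fixed-point characterization of $\Tcal^{-1}$ provided by Lemma~\ref{lem:Tinv} of Appendix~\ref{app:Tinv}. I will use that lemma in the following form: for $w \in \Bball_{\rho_w(r_0)}$, the preimage $u = \Tcal^{-1}[w]$ is the unique fixed point in $\Bball_{r_0}$ of the map $\Phi_w[u] := w + K[u]$. In particular $\|u\|_\infty \le r_0$, and $u = w + K[u]$ holds identically. Once membership of $u$ in the ball $\Bball_{r_0}$ is secured, the estimate follows from the sup-norm Lipschitz bound on $K$ in Lemma~\ref{lem:KLip}, applied at radius $r_0$.

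The first step compares $u$ with the origin, which also lies in $\Bball_{r_0}$ and satisfies $K[0] = 0$ because the series (\ref{eq:Kdef}) starts at $n = 2$. Lemma~\ref{lem:KLip} then gives
\[
\|K[u]\|_\infty = \|K[u] - K[0]\|_\infty \le \Lipk(r_0)\,\|u\|_\infty .
\]
The second step takes sup-norms in $u = w + K[u]$:
\[
\|u\|_\infty \le \|w\|_\infty + \Lipk(r_0)\,\|u\|_\infty .
\]
Since $\Lipk(r_0) < 1$ for $r_0 < r^*$, I can absorb the last term into the left-hand side. This yields $\|u\|_\infty \le \|w\|_\infty/(1 - \Lipk(r_0)) = \LipT(r_0)\,\|w\|_\infty$, which is (\ref{eq:state_from_w}).

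The only delicate point is justifying that $u$ lies in $\Bball_{r_0}$, since Lemma~\ref{lem:KLip} can only be applied inside that ball. This is exactly what the contraction argument of Lemma~\ref{lem:Tinv} provides, and I would verify it directly. For $\|v\|_\infty \le r_0$, Lemma~\ref{lem:KLip} gives
\[
\|\Phi_w[v]\|_\infty \le \rho_w(r_0) + \Lipk(r_0)\, r_0 = r_0\bigl(1 - \Lipk(r_0)\bigr) + \Lipk(r_0)\, r_0 = r_0 ,
\]
so $\Phi_w$ maps $\Bball_{r_0}$ into itself. It is also a contraction there with constant $\Lipk(r_0) < 1$. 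Banach's fixed-point theorem then places $u$ in $\Bball_{r_0}$, and the absorption argument above applies.
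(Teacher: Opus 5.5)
Your proposal is correct and is essentially the paper's argument. The paper applies Lemma~\ref{lem:TinvLip} with $w_2 = 0$, after noting that $K[0]=0$ makes $0$ the unique fixed point in $\Bball_{r_0}$, so $\Tcal^{-1}[0]=0$; you instead inline that same absorption computation directly against the comparison point $0$. Your re-verification of the self-map is redundant, since Lemma~\ref{lem:Tinv} already gives $\|u\|_\infty \le r_0$, but it is harmless.
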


\begin{proof}
Apply Lemma~\ref{lem:TinvLip} of Appendix~\ref{app:Tinv} with $w_2 = 0$. Since $K[0] = 0$, the constant function $u \equiv 0$ satisfies $0 = 0 + K[0]$, so it is the unique fixed point in $\Bball_{r_0}$ given by Lemma~\ref{lem:Tinv}; that is, $\Tcal^{-1}[0] = 0$. Hence $\|u\|_\infty = \|\Tcal^{-1}[w] - \Tcal^{-1}[0]\|_\infty \le \LipT(r_0)\,\|w\|_\infty$.
\end{proof}

\section{The transport target system}\label{sec:transport}

Consider the linear transport problem
\begin{align}
w_t(x,t) &= w_x(x,t), \quad x \in [0,1),\ t \ge 0, \label{eq:transport_pde}\\
w(1,t) &= b(t), \label{eq:transport_bdy}\\
w(x,0) &= w_0(x). \label{eq:transport_ic}
\end{align}
The unique mild solution is given by the method of characteristics: for $(x,t) \in [0,1] \times [0,\infty)$,
\begin{equation}
w(x,t) \;=\; \begin{cases} w_0(x + t), & x + t \le 1,\\ b\bigl(t - (1-x)\bigr), & x + t > 1. \end{cases}
\label{eq:transport_solution}
\end{equation}

The boundary residual is the only source of energy after one transport-crossing time: the initial profile shifts to the right and exits the domain at $t = 1$, leaving the target state determined entirely by the recent history of the residual. The decay of the truncation tail therefore acts directly on the target.

\begin{lemma}[Sup-norm propagation under transport]\label{lem:transport}
Let $w_0 \in \Linf(0,1)$ and $b \in \Linf(0,\infty)$. The mild solution (\ref{eq:transport_solution}) of (\ref{eq:transport_pde}) satisfies, for every $t \ge 0$,
\begin{eqnarray}
\|w(\cdot,t)\|_\infty &\le& \max\bigl(\|w_0\|_\infty,\, \sup_{0\le \tau \le t} |b(\tau)|\bigr), \nonumber \\
&& \quad 0 \le t \le 1, \nonumber \\
\|w(\cdot,t)\|_\infty &\le& \sup_{t-1 \le \tau \le t}|b(\tau)|, \quad t \ge 1.
\label{eq:transport_bound}
\end{eqnarray}
\end{lemma}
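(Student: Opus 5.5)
The plan is to read both bounds directly off the characteristic formula (\ref{eq:transport_solution}). The argument is pointwise in $x$, after which we take the essential supremum over $x \in [0,1]$. Fix $t \ge 0$ and split $[0,1]$ into the two regions $\{x : x+t \le 1\}$ and $\{x : x+t > 1\}$.

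On the first region, $w(x,t) = w_0(x+t)$ with $x+t \in [t,1] \subset [0,1]$. Hence $|w(x,t)| \le \|w_0\|_\infty$ for a.e.\ such $x$, since the translation $x \mapsto x+t$ preserves Lebesgue-null sets. On the second region, $w(x,t) = b(t-(1-x))$ with argument $\tau := t-1+x$. As $x$ ranges over $(1-t,1]\cap[0,1]$, the argument $\tau$ ranges over $(\max(t-1,0)-\text{boundary}, t]$, that is, $\tau \in [\max(t-1,0), t]$. Therefore $|w(x,t)| \le \sup_{\max(t-1,0)\le\tau\le t}|b(\tau)|$, with the supremum understood as an essential supremum, consistent with $b \in \Linf$.

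For $0 \le t \le 1$ both regions may be nonempty, so we combine them with a max. This gives the first line of (\ref{eq:transport_bound}), using $[\max(t-1,0),t] = [0,t]$. For $t \ge 1$ the first region is either empty or, when $t=1$, reduces to the single point $x=0$, which has measure zero. So only the boundary contribution survives, over the window $[t-1,t]$, and this gives the second line.

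The only point needing care is the measure-theoretic one: the bounds hold a.e.\ in $x$, and the composition $b(t-1+x)$ is well defined for a.e.\ $x$, because the affine change of variables maps null sets to null sets. This keeps the essential suprema consistent. No earlier lemma is needed beyond the explicit solution formula.
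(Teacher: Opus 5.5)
Your proposal is correct and is essentially the paper's proof. You split $[0,1]$ according to whether $x+t\le 1$, bound the initial-data piece by $\|w_0\|_\infty$ and the boundary piece by the supremum of $|b|$ over $[\max(0,t-1),t]$, and note that the former drops out for $t\ge 1$; your extra measure-theoretic care (essential suprema, and the null set $\{x=0\}$ at $t=1$, which the paper glosses by saying the first set is present only for $t<1$) just makes explicit what the paper leaves implicit, though you should clean up the garbled interval in the second region, which is simply contained in $[\max(t-1,0),t]$.
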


\begin{proof}
Fix $t \ge 0$. From (\ref{eq:transport_solution}), the values $\{w(x,t) : x \in [0,1]\}$ are partitioned into two sets:
\begin{itemize}
\item $\{w_0(x+t) : x \in [0, \min(1, 1-t)]\}$, present only when $t < 1$, contributing at most $\|w_0\|_\infty$.
\item $\{b(t - (1-x)) : x \in [\max(0, 1-t), 1]\}$, equivalently $\{b(\tau) : \tau \in [\max(0, t-1), t]\}$, contributing at most $\sup_{\max(0,t-1) \le \tau \le t} |b(\tau)|$.
\end{itemize}
For $0 \le t \le 1$ both sets are nonempty, and $\|w(\cdot,t)\|_\infty$ is bounded by their maximum; for $t \ge 1$ only the second is present, the initial condition having been swept out of $[0,1]$.
\end{proof}

\section{Closed-loop well-posedness}\label{sec:wellposed}

\begin{definition}[Canonical mild solution]\label{rem:canonical_rep}
A \emph{canonical mild solution} of the closed-loop system (\ref{eq:plant})--(\ref{eq:Fdef}) under the truncated feedback $U(t) = K_N[u(\cdot, t)](1)$ is a measurable function $u: [0,1] \times [0, \infty) \to \R$ with $u \in \Linf_{\mathrm{loc}}([0, \infty); \Linf(0,1))$ satisfying, pointwise in $(x, t)$, for $x + t \le 1$,
\begin{eqnarray}
u(x, t) \;=\; u_0(x + t) + \int_0^t F[u(\cdot, s)](x + t - s)\, ds, \nonumber \\
\label{eq:canonical_mild_a}
\end{eqnarray}
and, for $x + t > 1$,
\begin{eqnarray}
u(x, t) &=& K_N[u(\cdot, t - (1-x))](1) \nonumber \\
&& + \int_{t - (1-x)}^t F[u(\cdot, s)](x + t - s)\, ds,
\label{eq:canonical_mild_b}
\end{eqnarray}
where the integrals are understood pointwise in $x$, with integrands bounded and measurable in $s$. All pointwise-in-$t$ statements in the sequel refer to this representative.
\end{definition}

For the well-posedness analysis, we use the sup-norm bound and Lipschitz constant of the plant nonlinearity (\ref{eq:Fdef}) on the closed ball $\overline B_R$ for $R \in [0, \rho_f)$:
\begin{align}
M_F(R) &:= D_f\,\frac{R^2}{\rho_f - R}, \label{eq:MFdef}\\
L_F(R) &:= D_f\!\left(\frac{1}{(1 - R/\rho_f)^2} - 1\right), \label{eq:LFdef}
\end{align}
together with the analogous quantities for the truncated controller $K_N$ on $\overline B_R$ for $R \in [0, 1/C_K)$:
\begin{align}
B_{K_N}(R) &:= D_K\, e^{\Upsilon_K}\, R\, \sum_{n=2}^N (C_K R)^{n-1}, \label{eq:BKNdef}\\
\LipKn(R) &:= D_K\, e^{\Upsilon_K}\, \sum_{n=2}^N n\,(C_K R)^{n-1}. \label{eq:LipKNdef}
\end{align}
The factor $n$ in $\LipKn$ arises from differentiating $u^{\otimes n}$ and is absent from $B_{K_N}$. Both $B_{K_N}$ and $\LipKn$ are partial sums of geometric series, with $B_{K_N}(R) \le R\, \Lipk(R)$ and $\LipKn(R) \le \Lipk(R)$ for every $R$ and $N$. That these four quantities serve as the claimed sup-norm bounds and Lipschitz constants of $F$ and $K_N$ is established in Appendix~\ref{app:Fbounds} (Lemmas~\ref{lem:Fbounds} and~\ref{lem:KNbounds}); the derivations are direct geometric summations from Assumption~\ref{ass:fnbound} and Proposition~\ref{prop:L2bound}.

\begin{lemma}[Local well-posedness of the truncated closed loop]\label{lem:wellposed}
Let Assumption~\ref{ass:fnbound} hold and fix $N \ge 2$ and $R \in (0, \bar r)$. For every $u_0 \in \Linf(0,1)$ with $\|u_0\|_\infty < R$, there exist $\tau > 0$ and a unique canonical mild solution $u \in \Linf([0, \tau]; \Linf(0,1))$ of the closed-loop system (\ref{eq:plant})--(\ref{eq:Fdef}) under the truncated feedback $U(t) = K_N[u(\cdot, t)](1)$ with $u(\cdot, 0) = u_0$ and $\|u(\cdot, t)\|_\infty \le R$ for every $t \in [0, \tau]$. Moreover, $\tau$ admits the lower bound
\begin{eqnarray}
\tau &\ge& \min\!\left(\frac{R - \max(\|u_0\|_\infty,\, B_{K_N}(R))}{M_F(R)},\right. \nonumber \\
&& \quad\quad \left.\frac{1 - \LipKn(R)}{2\, L_F(R)}\right),
\label{eq:tau_lb}
\end{eqnarray}
depending only on $R$, $\|u_0\|_\infty$, and the constants of Assumption~\ref{ass:fnbound} and Proposition~\ref{prop:L2bound}.
\end{lemma}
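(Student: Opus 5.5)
We set this up as a fixed point of the canonical mild-solution map. Fix $\tau\in(0,1]$, still to be chosen, and let $X_\tau$ be the set of measurable $u$ on $[0,1]\times[0,\tau]$ with $\sup_{t\le\tau}\|u(\cdot,t)\|_\infty\le R$, equipped with the metric $d(u,v)=\sup_{t\le\tau}\|u(\cdot,t)-v(\cdot,t)\|_\infty$ (essential suprema). It is a closed subset of a Banach space, hence complete. Define $\Phi$ on $X_\tau$ by the right-hand sides of (\ref{eq:canonical_mild_a})--(\ref{eq:canonical_mild_b}). For $x+t>1$ with $t\le\tau\le1$, the boundary argument $s_0:=t-(1-x)$ lies in $[0,t]$, so $\Phi[u]$ only uses values of $u$ on $[0,\tau]$. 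Measurability of $\Phi[u]$ follows from Fubini, since $F[u(\cdot,s)](y)$ is jointly measurable and bounded by $M_F(R)$. The bounds $M_F,L_F$ for $F$ are valid because $R<\rho_f$ (Lemma~\ref{lem:Fbounds}), and $B_{K_N},\LipKn$ for $K_N$ are valid because $R<1/C_K$ (Lemma~\ref{lem:KNbounds}).

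\emph{Self-map.} In the region $x+t\le1$ we get $|\Phi[u](x,t)|\le\|u_0\|_\infty+tM_F(R)$. In the region $x+t>1$ we get $|\Phi[u](x,t)|\le B_{K_N}(R)+(1-x)M_F(R)\le B_{K_N}(R)+\tau M_F(R)$. Both are $\le R$ whenever $\tau\le (R-\max(\|u_0\|_\infty,B_{K_N}(R)))/M_F(R)$. For this to be meaningful I need $B_{K_N}(R)<R$. This holds because $B_{K_N}(R)\le R\,\Lipk(R)<R$, using $R<\bar r\le r^*$ and the monotonicity of $\Lipk$. Together with $\|u_0\|_\infty<R$, the numerator in (\ref{eq:tau_lb}) is therefore positive.

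\emph{Contraction.} Take $u,v\in X_\tau$ and $\delta:=d(u,v)$. For $x+t\le1$ the difference is at most $\tau L_F(R)\delta$. For $x+t>1$ it is at most $\LipKn(R)\delta+\tau L_F(R)\delta$. So $\Phi$ has Lipschitz constant $\LipKn(R)+\tau L_F(R)$, and $\LipKn(R)\le\Lipk(R)<1$. Choosing $\tau\le(1-\LipKn(R))/(2L_F(R))$ gives constant $\le(1+\LipKn(R))/2<1$. We then take $\tau$ as the minimum of the two quantities in (\ref{eq:tau_lb}), capped at $1$ if needed; the statement only claims the lower bound, and reducing to $\tau\le1$ is harmless because the closed-loop structure restarts. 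Banach's fixed-point theorem gives a unique fixed point in $X_\tau$, which is the canonical mild solution. Uniqueness is a priori only within the ball $\|u\|\le R$, and this is exactly the uniqueness the lemma asserts.

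\emph{Expected obstacle.} The delicate point is the boundary term in the contraction. It is not small in $\tau$, so it cannot be absorbed by shrinking time. Instead the argument must rely on the genuinely sub-unit Lipschitz constant $\LipKn(R)<1$, which is where $R<r^*$ enters. A secondary technical issue is choosing a representative. Elements of $L^\infty_{\mathrm{loc}}(L^\infty)$ are equivalence classes, so $K_N[u(\cdot,s_0)](1)$ must be well defined for a.e.\ $(x,t)$. I would handle this by defining $\Phi$ pointwise for a fixed measurable representative. The fixed point then supplies the canonical representative, because the right-hand side of (\ref{eq:canonical_mild_b}) depends on $u$ only through integrals of $u(\cdot,s)$, which do not see null sets in $x$.
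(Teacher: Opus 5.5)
Your proposal is correct and follows the paper's proof essentially step for step: the same closed ball in $\Linf([0,\tau];\Linf(0,1))$, the same self-map bound $\max(\|u_0\|_\infty, B_{K_N}(R)) + \tau M_F(R)$, the same positivity of the numerator via $B_{K_N}(R)\le R\,\Lipk(R)<R$, the same contraction constant $\LipKn(R)+\tau L_F(R)$, and Banach's theorem. The one blemish is the cap $\tau\le 1$, which you should simply drop as the paper does: it is never needed, because for every $t$ the boundary time $t-(1-x)$ lies in $(0,t]$ and the integration windows in (\ref{eq:canonical_mild_a})--(\ref{eq:canonical_mild_b}) have length at most $\min(t,1)\le\tau$, whereas keeping it breaks the stated lower bound (\ref{eq:tau_lb}) whenever that minimum exceeds $1$, unless you actually carry out the restart argument you only allude to.
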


\begin{proof}
The standing condition $R < r^*$ gives $\LipKn(R) \le \Lipk(R) < 1$. Define the closed ball
\begin{eqnarray}
X_{\tau, R} &:=& \bigl\{u \in \Linf([0,\tau]; \Linf(0,1)) \,:\, \nonumber \\
&& \quad \|u(\cdot, t)\|_\infty \le R \text{ for every } t \in [0,\tau]\bigr\},
\label{eq:Xtau}
\end{eqnarray}
a closed (hence complete) subset of $\Linf([0,\tau]; \Linf(0,1))$ under the norm $\|u\|_{X_{\tau,R}} := \mathrm{ess\,sup}_{t \in [0,\tau]}\|u(\cdot, t)\|_\infty$. For $u \in X_{\tau, R}$, define the map $\Gamma u$ by the right-hand side of (\ref{eq:canonical_mild_a}--\ref{eq:canonical_mild_b}); the integrand $(x, s) \mapsto F[u(\cdot, s)](x)$ is bounded by $M_F(R)$ (Lemma~\ref{lem:Fbounds}) and measurable in $s$ for each $x$, so the integrals are well-defined as Lebesgue integrals in $s$.

\emph{Self-map.} For $u \in X_{\tau, R}$ and $(x, t)$ with $x + t \le 1$,
\begin{equation}
|(\Gamma u)(x, t)| \;\le\; \|u_0\|_\infty + \tau\, M_F(R).
\label{eq:Gamma_interior}
\end{equation}
For $x + t > 1$,
\begin{eqnarray}
|(\Gamma u)(x, t)| &\le& |K_N[u(\cdot, t - (1-x))](1)| + \tau\, M_F(R) \nonumber \\
&\le& B_{K_N}(R) + \tau\, M_F(R),
\label{eq:Gamma_boundary}
\end{eqnarray}
the last bound using $|K_N[v](1)| \le B_{K_N}(R)$ for $\|v\|_\infty \le R$ (the partial-sum bound in (\ref{eq:BKNdef})) and $\|u(\cdot, t-(1-x))\|_\infty \le R$. Hence
\begin{equation}
\|\Gamma u\|_{X_{\tau, R}} \;\le\; \max\bigl(\|u_0\|_\infty,\, B_{K_N}(R)\bigr) + \tau\, M_F(R).
\label{eq:Gamma_selfmap_bound}
\end{equation}
Choosing $\tau \le (R - \max(\|u_0\|_\infty, B_{K_N}(R)))/M_F(R)$ (omitted if $M_F(R) = 0$) gives $\|\Gamma u\|_{X_{\tau, R}} \le R$, so $\Gamma$ maps $X_{\tau, R}$ into itself. The numerator is strictly positive: $\|u_0\|_\infty < R$ by hypothesis, and $B_{K_N}(R) \le R\, \Lipk(R) < R$ since $\Lipk(R) < 1$.

\emph{Contraction.} For $u, v \in X_{\tau, R}$, the initial-data terms in (\ref{eq:canonical_mild_a}--\ref{eq:canonical_mild_b}) cancel. For $x + t \le 1$,
\begin{eqnarray}
|(\Gamma u - \Gamma v)(x, t)| &\le& \int_0^t \|F[u(\cdot, s)] - F[v(\cdot, s)]\|_\infty\, ds \nonumber \\
&\le& \tau\, L_F(R)\, \|u - v\|_{X_{\tau, R}},
\label{eq:Gamma_lip_interior}
\end{eqnarray}
the bound on $\|F[u]-F[v]\|_\infty$ following from Lemma~\ref{lem:Fbounds} of Appendix~\ref{app:Fbounds}. For $x + t > 1$,
\begin{eqnarray}
|(\Gamma u - \Gamma v)(x, t)| &\le& \LipKn(R)\,\|u - v\|_{X_{\tau, R}} \nonumber \\
&& + \tau\, L_F(R)\,\|u - v\|_{X_{\tau, R}}.
\label{eq:Gamma_lip_boundary}
\end{eqnarray}
Hence
\begin{eqnarray}
\|\Gamma u - \Gamma v\|_{X_{\tau, R}} &\le& \bigl(\LipKn(R) + \tau\, L_F(R)\bigr) \nonumber \\
&& \times \|u - v\|_{X_{\tau, R}}.
\label{eq:Gamma_contraction}
\end{eqnarray}
Choosing $\tau \le (1 - \LipKn(R))/(2\, L_F(R))$ (omitted if $L_F(R) = 0$) gives $\LipKn(R) + \tau\, L_F(R) \le (1 + \LipKn(R))/2 < 1$, so $\Gamma$ is a contraction on $X_{\tau, R}$. Banach's fixed-point theorem produces a unique fixed point $u \in X_{\tau, R}$ of $\Gamma$, which is the unique canonical mild solution of the closed-loop system on $[0, \tau]$.

The lower bound (\ref{eq:tau_lb}) follows from the two conditions on $\tau$ established above.

The fixed-point construction is invariant under time translation: for any $t_0 \ge 0$, if $u$ is a canonical mild solution on $[0, t_0 + \tau]$, the shift $\tilde u(\cdot, s) := u(\cdot, t_0 + s)$ satisfies the same fixed-point equation on $[0, \tau]$ with datum $\tilde u(\cdot, 0) = u(\cdot, t_0)$. Hence, by uniqueness in $X_{\tau, R}$, two solutions with values in $\overline B_R$ coinciding at some $t_0$ coincide on their common forward interval.
\end{proof}

By Lemma~\ref{lem:wellposed} and the time-shift invariance just stated, the local solutions can be extended by concatenation to a maximal canonical mild solution $u$ defined on an interval $[0, T_{\max})$, with the blow-up alternative
\begin{equation}
T_{\max} = \infty \quad \text{or} \quad \limsup_{t \uparrow T_{\max}} \|u(\cdot, t)\|_\infty = \infty.
\label{eq:blowup_alternative}
\end{equation}
Forward invariance is then a question about which branch of the alternative the trajectory takes. If the trajectory can be shown to stay strictly inside $\overline B_R$ on its entire interval of existence, with margin to spare, then the local existence time of Lemma~\ref{lem:wellposed} can be applied at each restart with a uniform lower bound, and concatenation forces the maximal interval to be infinite. The next lemma packages this contradiction.

\begin{lemma}[Continuation inside a ball]\label{lem:continuation}
Let Assumption~\ref{ass:fnbound} hold, $N \ge 2$, $R \in (0, \bar r)$, and $u_0 \in \Linf(0,1)$ with $\|u_0\|_\infty < R$. Let $u$ be the maximal canonical mild solution from Lemma~\ref{lem:wellposed} on $[0, T_{\max})$. If there exists $q < R$ such that
\begin{equation}
\|u(\cdot, t)\|_\infty \;\le\; q, \qquad \text{for every } t \in [0, T_{\max}),
\label{eq:cont_hyp}
\end{equation}
then $T_{\max} = \infty$.
\end{lemma}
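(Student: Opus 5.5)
The argument is by contradiction, using the uniform lower bound (\ref{eq:tau_lb}) on the local existence time from Lemma~\ref{lem:wellposed}. Suppose $T_{\max} < \infty$. The hypothesis (\ref{eq:cont_hyp}) says that every state $u(\cdot,t)$, $t\in[0,T_{\max})$, lies in $\overline B_q$ with $q<R$. Therefore the restart datum $u(\cdot,t_0)$ at any time $t_0$ satisfies $\|u(\cdot,t_0)\|_\infty \le q < R$, and Lemma~\ref{lem:wellposed} applies with the same $R$. The resulting existence time is at least
\begin{equation*}
\tau_* := \min\!\left(\frac{R - \max(q,\, B_{K_N}(R))}{M_F(R)},\; \frac{1 - \LipKn(R)}{2\, L_F(R)}\right) > 0 ,
\end{equation*}
because the right-hand side of (\ref{eq:tau_lb}) is nonincreasing in $\|u_0\|_\infty$. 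This bound is uniform in $t_0$. Positivity holds because $q<R$, $B_{K_N}(R)\le R\,\Lipk(R)<R$ and $\LipKn(R)<1$, exactly as in the proof of Lemma~\ref{lem:wellposed}. If $M_F(R)=0$ or $L_F(R)=0$, the corresponding term is dropped.

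Next, pick $t_0 := \max(0, T_{\max} - \tau_*/2)$ and restart from $u(\cdot,t_0)$. The restart is not a restart of a free transport problem from a single datum. The canonical mild formulation (\ref{eq:canonical_mild_b}) uses the boundary values $K_N[u(\cdot,s)](1)$ for $s$ in a window of width one reaching into the past. So I would state the time-shift invariance remark at the end of Lemma~\ref{lem:wellposed} precisely. The shifted fixed-point problem on $[0,\tau_*]$ is the same map $\Gamma$, with the past history on $[t_0-1,t_0]$ entering the characteristics that originate before $t_0$. Both the self-map and the contraction estimates of Lemma~\ref{lem:wellposed} go through verbatim with $\|u_0\|_\infty$ replaced by $q$. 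The reason is that the history terms are bounded by $B_{K_N}(R)$ and Lipschitz with constant $\LipKn(R)$, since the history lies in $\overline B_q\subset\overline B_R$.

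This produces a canonical mild solution on $[t_0, t_0+\tau_*]$ with values in $\overline B_R$. By the uniqueness part of Lemma~\ref{lem:wellposed}, applied on the common interval $[t_0,T_{\max})$, it agrees with $u$ there. Concatenating the two gives a canonical mild solution on $[0, t_0+\tau_*]$ with $t_0+\tau_* \ge T_{\max}+\tau_*/2 > T_{\max}$. This contradicts the maximality of $T_{\max}$, so $T_{\max}=\infty$. The case $t_0=0$ is also covered: it gives existence on $[0,\tau_*]$ with $\tau_* > T_{\max}$, which is again a contradiction.

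The main obstacle is the restart and concatenation step. I need to check that gluing at $t_0$ yields a function satisfying (\ref{eq:canonical_mild_a})--(\ref{eq:canonical_mild_b}) on the whole extended interval. The delicate points are the characteristics with $x+t>1$ whose boundary time $t-(1-x)$ falls before $t_0$, and the interior characteristics that cross $t_0$. For these I would verify directly, by splitting the $s$-integral at $t_0$, that the integral identity holds. I would also note that the hypothesis is used only through the uniform margin $R-q$; the blow-up alternative (\ref{eq:blowup_alternative}) is not needed at all.
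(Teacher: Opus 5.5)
Your proof is correct and is the paper's argument: the uniform restart time $\tau_*$ from (\ref{eq:tau_lb}) with $\|u_0\|_\infty$ replaced by $q$, a restart at some $t_0 \in [0,T_{\max})$ within $\tau_*/2$ of $T_{\max}$, agreement with $u$ by contraction uniqueness, and a contradiction with maximality. On the restart point you flag, splitting the $s$-integral at $t_0$ (the time-shift invariance remark closing the proof of Lemma~\ref{lem:wellposed}) shows that the boundary history on $[t_0-1,t_0]$ and its accompanying $F$-integrals are already encoded in the datum $u(\cdot,t_0)$, which is bounded by $q$ and cancels in the contraction estimate; the restarted problem is therefore literally the original fixed-point problem with datum $u(\cdot,t_0)$, and you do not need to bound history terms separately by $B_{K_N}(R)$.
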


\begin{proof}
Suppose for contradiction $T_{\max} < \infty$. The local existence of Lemma~\ref{lem:wellposed} can be applied with starting time $t_0 \in [0, T_{\max})$ and starting datum $u(\cdot, t_0)$, yielding a local solution on $[t_0, t_0 + \tau_*(t_0)]$ with values in $\overline B_R$, where $\tau_*(t_0)$ is given by (\ref{eq:tau_lb}) with $\|u_0\|_\infty$ replaced by $\|u(\cdot, t_0)\|_\infty$. The hypothesis (\ref{eq:cont_hyp}) gives $\|u(\cdot, t_0)\|_\infty \le q < R$, so
\begin{eqnarray}
\tau_*(t_0) &\ge& \tau_* \nonumber \\
&:=& \min\!\left(\frac{R - \max(q,\, B_{K_N}(R))}{M_F(R)},\right. \nonumber \\
&& \quad\quad \left.\frac{1 - \LipKn(R)}{2\, L_F(R)}\right) \;>\; 0,
\label{eq:tau_uniform_lb}
\end{eqnarray}
uniformly in $t_0 \in [0, T_{\max})$. By the contraction uniqueness in $X_{\tau_*, R}$ (Lemma~\ref{lem:wellposed}), the local solution from $t_0$ agrees on $[t_0, t_0 + \tau_*]$ with the maximal solution wherever both are defined. Choose $t_0 > T_{\max} - \tau_*/2$. The local solution from $t_0$ extends the maximal solution to $[0, t_0 + \tau_*]$, with $t_0 + \tau_* > T_{\max} + \tau_*/2 > T_{\max}$. This contradicts the maximality of $T_{\max}$.
\end{proof}

\section{Proof of the main theorem}\label{sec:mainproof}

We prove Theorem~\ref{thm:main} by combining the lemmas of Sections~\ref{sec:tail}--\ref{sec:transport}.

\paragraph*{\normalfont\em Target system under truncated feedback.}
Apply the backstepping transformation $w = \Tcal[u] = u - K[u]$ to the closed-loop plant (\ref{eq:plant})--(\ref{eq:Fdef}) under the truncated feedback $U(t) = K_N[u(\cdot,t)](1)$ from (\ref{eq:KNdef}). Following the construction in \cite{krstic2026feedbacklinearizationhyperbolicpdes}, the interior equation $w_t = w_x$ holds, and at the boundary $x = 1$,
\begin{eqnarray}
w(1,t) &=& u(1,t) - K[u(\cdot,t)](1) \nonumber \\
&=& K_N[u(\cdot,t)](1) - K[u(\cdot,t)](1).
\label{eq:wbdy}
\end{eqnarray}
Define the boundary residual
\begin{eqnarray}
b(t) &:=& w(1,t) \nonumber \\
&=& K_N[u(\cdot,t)](1) - K[u(\cdot,t)](1).
\label{eq:bdef}
\end{eqnarray}
Thus $w$ satisfies the linear transport equation (\ref{eq:transport_pde}) with boundary data $b(t)$ and initial condition $w_0 = u_0 - K[u_0] = \Tcal[u_0]$.

\paragraph*{\normalfont\em Existence of the closed-loop trajectory.}
By Lemma~\ref{lem:wellposed} applied with $R = r_0$ (using $\|u_0\|_\infty < \rho_u^0(r_0) < r_0$, which follows from (\ref{eq:cond_init})), the closed-loop system admits a unique maximal canonical mild solution $u$ on $[0, T_{\max})$ with values in $\overline B_{r_0}$. We show below that $T_{\max} = \infty$ (Step~1) and that the bounds (i)--(iv) hold (Steps~2--5).

\subsection*{A priori bound on $\|w_0\|_\infty$}

By (\ref{eq:cond_init}) and Lemma~\ref{lem:TLip} (applied with $u_2 = 0$, $u_1 = u_0$, $r = \|u_0\|_\infty$, recalling $K[0] = 0$ so $\Tcal[0] = 0$),
\begin{eqnarray}
\|w_0\|_\infty &=& \|\Tcal[u_0] - \Tcal[0]\|_\infty \nonumber \\
&\le& \bigl(1 + \Lipk(\|u_0\|_\infty)\bigr)\,\|u_0\|_\infty \nonumber \\
&\le& \bigl(1 + \Lipk(r_0)\bigr)\,\|u_0\|_\infty,
\label{eq:w0_bound}
\end{eqnarray}
the last inequality using monotonicity of $\Lipk$ and $\|u_0\|_\infty < \rho_u^0(r_0) < r_0$. Combined with (\ref{eq:cond_init}),
\begin{equation}
\|w_0\|_\infty \;<\; \bigl(1 + \Lipk(r_0)\bigr)\,\rho_u^0(r_0) \;=\; \rho_w(r_0),
\label{eq:w0_in_ball}
\end{equation}
the last equality using definitions (\ref{eq:rhowdef}) and (\ref{eq:rhou0def}).

\subsection*{Step 1: Forward invariance.}

\emph{Preliminary identity.} The definitions (\ref{eq:rhowdef}) and (\ref{eq:LipTinvdef}) imply
\begin{eqnarray}
\LipT(r_0)\,\rho_w(r_0) &=& \frac{1}{1 - \Lipk(r_0)}\cdot r_0\bigl(1 - \Lipk(r_0)\bigr) \nonumber \\
&=& r_0,
\label{eq:LipTrho_id}
\end{eqnarray}
which is invoked in subsequent steps.

On the maximal interval $[0, T_{\max})$, the trajectory takes values in $\overline B_{r_0}$ by Lemma~\ref{lem:wellposed}. Lemma~\ref{lem:trunc_tail} applied with $r = r_0$ gives
\begin{eqnarray}
|b(t)| &=& |K_N[u(\cdot,t)](1) - K[u(\cdot,t)](1)| \nonumber \\
&\le& \eps_N(r_0), \qquad t \in [0, T_{\max}),
\label{eq:bbound_invariant}
\end{eqnarray}
and Lemma~\ref{lem:transport} (applied to the target state $w := \Tcal[u]$) gives
\begin{eqnarray}
\|w(\cdot, t)\|_\infty &\le& \max\bigl(\|w_0\|_\infty,\, \eps_N(r_0)\bigr), \nonumber \\
&& \quad t \in [0, T_{\max}).
\label{eq:wnorm_invariant}
\end{eqnarray}
Both arguments of the maximum are strictly less than $\rho_w(r_0)$: $\eps_N(r_0) < \rho_w(r_0)$ by (\ref{eq:cond_resid}), and (\ref{eq:w0_in_ball}) gives $\|w_0\|_\infty < \rho_w(r_0)$. Hence $w(\cdot, t) \in \Bball_{\rho_w(r_0)}$ on $[0, T_{\max})$, and Corollary~\ref{cor:state_from_w} gives
\begin{eqnarray}
\|u(\cdot, t)\|_\infty &\le& \LipT(r_0)\,\|w(\cdot, t)\|_\infty \nonumber \\
&\le& q, \qquad t \in [0, T_{\max}),
\label{eq:unorm_strict}
\end{eqnarray}
where
\begin{eqnarray}
q &:=& \LipT(r_0)\,\max\bigl(\|w_0\|_\infty,\, \eps_N(r_0)\bigr) \nonumber \\
&<& \LipT(r_0)\,\rho_w(r_0) \;=\; r_0,
\label{eq:qdef}
\end{eqnarray}
the last equality being (\ref{eq:LipTrho_id}). The trajectory therefore stays in $\overline B_q$ with $q < r_0$ on $[0, T_{\max})$, with margin $r_0 - q > 0$. By Lemma~\ref{lem:continuation}, $T_{\max} = \infty$, proving (i).

\subsection*{Step 2: Practical stability.}

For every $t \ge 0$, since $\|u(\cdot,\tau)\|_\infty \le r_0$ for $\tau \in [0,t]$ by (i), Lemma~\ref{lem:trunc_tail} gives $|b(\tau)| \le \eps_N(r_0)$ on $[0,t]$, and Lemma~\ref{lem:transport} gives
\begin{equation}
\|w(\cdot, t)\|_\infty \;\le\; \max\bigl(\|w_0\|_\infty,\, \eps_N(r_0)\bigr), \quad t \ge 0,
\label{eq:wnorm_allt}
\end{equation}
where for $t \ge 1$ the $\|w_0\|_\infty$ argument can be dropped, but its inclusion preserves the form of the bound for all $t \ge 0$. By (\ref{eq:w0_in_ball}) and (\ref{eq:cond_resid}), both $\|w_0\|_\infty < \rho_w(r_0)$ and $\eps_N(r_0) < \rho_w(r_0)$, so $w(\cdot, t) \in \Bball_{\rho_w(r_0)}$, and Corollary~\ref{cor:state_from_w} applies:
\begin{eqnarray}
\|u(\cdot, t)\|_\infty &\le& \LipT(r_0)\,\|w(\cdot, t)\|_\infty \nonumber \\
&\le& \LipT(r_0)\,\max\bigl(\|w_0\|_\infty,\, \eps_N(r_0)\bigr).
\label{eq:proof_ii_w}
\end{eqnarray}
Folding in (\ref{eq:w0_bound}) and using $\LipT(r_0) = 1/(1-\Lipk(r_0))$ from (\ref{eq:LipTinvdef}), we obtain $\LipT(r_0)\,\|w_0\|_\infty \le (1+\Lipk(r_0))/(1-\Lipk(r_0))\,\|u_0\|_\infty$, and hence
\begin{eqnarray}
\|u(\cdot, t)\|_\infty &\le& \max\!\left(\frac{1 + \Lipk(r_0)}{1 - \Lipk(r_0)}\,\|u_0\|_\infty,\right. \nonumber \\
&& \quad\quad\;\; \left.\LipT(r_0)\,\eps_N(r_0)\right), \quad t \ge 0,
\label{eq:proof_ii}
\end{eqnarray}
proving (ii).

\subsection*{Step 3: Practical finite-time attractivity.}

For $t \ge 1$, by (\ref{eq:transport_solution}), $w(x, t) = b(t - (1-x))$ for all $x \in [0,1]$, so the initial data $w_0$ has been swept out of $[0,1]$. By (i), $\|u(\cdot, \tau)\|_\infty \le r_0$ for all $\tau \ge 0$, so $|b(\tau)| \le \eps_N(r_0)$ for all $\tau \ge 0$ by Lemma~\ref{lem:trunc_tail}. Hence
\begin{equation}
\|w(\cdot, t)\|_\infty \;\le\; \eps_N(r_0), \quad t \ge 1.
\label{eq:wnorm_t1}
\end{equation}
Since $\eps_N(r_0) < \rho_w(r_0)$ by (\ref{eq:cond_resid}), Corollary~\ref{cor:state_from_w} gives
\begin{eqnarray}
\|u(\cdot, t)\|_\infty &\le& \LipT(r_0)\,\|w(\cdot, t)\|_\infty \nonumber \\
&\le& \LipT(r_0)\,\eps_N(r_0), \quad t \ge 1,
\label{eq:proof_iii}
\end{eqnarray}
proving (iii).

\subsection*{Step 4: Asymptotic stability ($\mathcal{KL}$ estimate).}

We sharpen the analysis of (ii) by exploiting the higher-order vanishing of the truncation residual along the trajectory. Let $\rho(t) := \|u(\cdot,t)\|_\infty$ and $s := \|u_0\|_\infty$, and define
\begin{eqnarray}
\Phi(r) &:=& \LipT(r_0)\,\eps_N(r) \nonumber \\
&=& \frac{D_K\,e^{\Upsilon_K}}{C_K\,(1 - \Lipk(r_0))}\cdot\frac{(C_K r)^{N+1}}{1 - C_K r},
\label{eq:Phidef}
\end{eqnarray}
for $r \in [0, 1/C_K)$.
The function $\Phi$ is continuous and strictly increasing on $[0, 1/C_K)$, satisfies $\Phi(0) = 0$ and $\Phi(r) = O(r^{N+1})$ as $r \to 0$, and is class $\mathcal{K}$ on $[0, r_0]$. The map $r \mapsto \Phi(r)/r$ is strictly increasing on $(0, 1/C_K)$, and by the residual condition (\ref{eq:cond_resid}),
\begin{equation}
\frac{\Phi(r_0)}{r_0} \;=\; \frac{\LipT(r_0)\,\eps_N(r_0)}{r_0} \;=\; \frac{\eps_N(r_0)}{\rho_w(r_0)} \;<\; 1.
\label{eq:Phi_over_r_at_r0}
\end{equation}
Strict monotonicity of $r \mapsto \Phi(r)/r$ then gives
\begin{equation}
\Phi(r) \;<\; r \qquad \text{for every } r \in (0, r_0].
\label{eq:Phisubunity}
\end{equation}

\emph{Global amplification bound.} Set $Y_0 := \sup_{t \ge 0}\rho(t)$.
By (i), $Y_0 \le r_0$. Lemma~\ref{lem:trunc_tail} applied with $r = Y_0$ gives
\begin{equation}
|b(t)| \;\le\; \eps_N(Y_0), \quad t \ge 0,
\label{eq:b_bounded_Y0}
\end{equation}
and Lemma~\ref{lem:transport} therefore yields
\begin{equation}
\|w(\cdot,t)\|_\infty \;\le\; \max\bigl(\|w_0\|_\infty,\, \eps_N(Y_0)\bigr), \quad t \ge 0.
\label{eq:w_norm_Y0}
\end{equation}
Both arguments of the maximum are strictly less than $\rho_w(r_0)$: $\|w_0\|_\infty < \rho_w(r_0)$ by (\ref{eq:w0_in_ball}), and $\eps_N(Y_0) \le \eps_N(r_0) < \rho_w(r_0)$ by monotonicity of $\eps_N$ and (\ref{eq:cond_resid}). Corollary~\ref{cor:state_from_w} then gives
\begin{eqnarray}
\rho(t) &\le& \LipT(r_0)\,\max\bigl(\|w_0\|_\infty,\, \eps_N(Y_0)\bigr) \nonumber \\
&=& \max\bigl(\LipT(r_0)\|w_0\|_\infty,\, \Phi(Y_0)\bigr), \quad t \ge 0.
\label{eq:rho_max}
\end{eqnarray}
Taking the supremum over $t \ge 0$,
\begin{equation}
Y_0 \;\le\; \max\bigl(\LipT(r_0)\|w_0\|_\infty,\, \Phi(Y_0)\bigr).
\label{eq:Y0_max}
\end{equation}
If the maximum in (\ref{eq:Y0_max}) is attained by the second term and $Y_0 > 0$, then $Y_0 \le \Phi(Y_0) < Y_0$ by (\ref{eq:Phisubunity}), a contradiction. Hence
\begin{equation}
Y_0 \;\le\; \LipT(r_0)\,\|w_0\|_\infty.
\label{eq:Y0_linear}
\end{equation}
Combining with (\ref{eq:w0_bound}) and using $\LipT(r_0) = 1/(1 - \Lipk(r_0))$,
\begin{equation}
Y_0 \;\le\; \alpha(s), \qquad \alpha(s) := \frac{1 + \Lipk(r_0)}{1 - \Lipk(r_0)}\, s.
\label{eq:Y0_alpha}
\end{equation}
The function $\alpha$ is class $\mathcal{K}$, and by (\ref{eq:cond_init}) and (\ref{eq:LipTrho_id}), $\alpha(s) < r_0$ for every $s \in [0, \rho_u^0(r_0))$.

\emph{Tail-supremum recursion.} For $k = 0, 1, 2, \ldots,$ define $Y_k := \sup_{t \ge k}\rho(t)$.
For $t \ge k+1$, the inclusion $[t-1, t] \subseteq [k, \infty)$ and monotonicity of $\eps_N$ give, using Lemma~\ref{lem:transport} on $[t-1, t]$,
\begin{eqnarray}
\|w(\cdot, t)\|_\infty &\le& \sup_{\tau \in [t-1, t]} |b(\tau)| \nonumber \\
&\le& \eps_N\!\left(\sup_{\tau \in [t-1, t]}\rho(\tau)\right) \nonumber \\
&\le& \eps_N(Y_k).
\label{eq:w_in_window}
\end{eqnarray}
Since $Y_k \le Y_0 \le r_0$ and $\eps_N(Y_k) \le \eps_N(r_0) < \rho_w(r_0)$, Corollary~\ref{cor:state_from_w} applies and gives
\begin{equation}
\rho(t) \;\le\; \LipT(r_0)\,\eps_N(Y_k) \;=\; \Phi(Y_k), \qquad t \ge k+1.
\label{eq:rho_Phi_Yk}
\end{equation}
Taking the supremum over $t \ge k+1$,
\begin{equation}
Y_{k+1} \;\le\; \Phi(Y_k), \qquad k = 0, 1, 2, \ldots.
\label{eq:Yk_recursion}
\end{equation}

\emph{Iterates.} Define
\begin{eqnarray}
\sigma_0(s) &:=& \alpha(s), \nonumber \\
\sigma_{k+1}(s) &:=& \Phi\bigl(\sigma_k(s)\bigr),
\label{eq:sigma_iterate}
\end{eqnarray}
for $k \ge 0$ and $s \in [0, \rho_u^0(r_0))$.
By (\ref{eq:Y0_alpha}), $Y_0 \le \sigma_0(s)$. Inductive use of (\ref{eq:Yk_recursion}) and the monotonicity of $\Phi$ gives
\begin{equation}
Y_k \;\le\; \sigma_k(s), \qquad k = 0, 1, 2, \ldots.
\label{eq:Yk_sigma}
\end{equation}
For $s \in (0, \rho_u^0(r_0))$, $\sigma_0(s) < r_0$, so (\ref{eq:Phisubunity}) gives $\sigma_{k+1}(s) = \Phi(\sigma_k(s)) < \sigma_k(s)$ as long as $\sigma_k(s) > 0$. The sequence $(\sigma_k(s))_{k \ge 0}$ is therefore strictly decreasing in $k$, and its limit $\sigma_\infty(s)$ satisfies $\sigma_\infty(s) = \Phi(\sigma_\infty(s))$ by continuity. By (\ref{eq:Phisubunity}), $\Phi$ has only the fixed point $0$ in $[0, r_0]$, so $\sigma_\infty(s) = 0$. Each $\sigma_k$ is a composition of class-$\mathcal{K}$ functions and is therefore class $\mathcal{K}$ in $s$.

\emph{Class-$\mathcal{KL}$ envelope.} The recursion gives a piecewise-constant majorant $\rho(t) \le \sigma_{\lfloor t \rfloor}(s)$, non-increasing but discontinuous in $t$. To produce a continuous strictly-decreasing class-$\mathcal{KL}$ envelope, set $\sigma_{-1}(s) := 2\sigma_0(s)$ and define
\begin{eqnarray}
\beta(s, t) &:=& (1-\theta)\,\sigma_{k-1}(s) + \theta\,\sigma_k(s), \nonumber \\
&& \quad \theta := t - k, \ t \in [k, k+1], \ k \ge 0,
\label{eq:beta_def}
\end{eqnarray}
for $s \in [0, \rho_u^0(r_0))$. The construction interpolates linearly on each unit interval between $\sigma_{k-1}(s)$ at $t = k$ and $\sigma_k(s)$ at $t = k+1$, with the choice $\sigma_{-1} := 2\sigma_0$ ensuring strict decrease on the initial interval $[0, 1]$.

\emph{Verification of class-$\mathcal{KL}$.} For each fixed $t \ge 0$, $\beta(\cdot, t)$ is a positive convex combination of class-$\mathcal{K}$ functions of $s$ (with coefficients $1-\theta, \theta \ge 0$ summing to $1$) and is therefore class $\mathcal{K}$. For each fixed $s \in (0, \rho_u^0(r_0))$, $\beta(s, \cdot)$ is continuous (the linear pieces meet at integer endpoints with matching values $\sigma_k(s)$) and strictly decreasing in $t$ (the endpoints satisfy $\sigma_{k-1}(s) > \sigma_k(s)$ for every $k \ge 0$, by the strict decrease of $(\sigma_k)$ established above); on each $[k, k+1]$ the linear interpolant inherits this strict decrease. Finally, $\beta(s, t) \to 0$ as $t \to \infty$, since $\sigma_k(s) \to 0$. Hence $\beta$ is class $\mathcal{KL}$.

\emph{Bound.} For $t \in [k, k+1]$ with $k \ge 0$, the linear interpolant satisfies $\beta(s, t) \ge \sigma_k(s)$ (it lies above its smaller endpoint), hence
\begin{equation}
\rho(t) \;\le\; Y_k \;\le\; \sigma_k(s) \;\le\; \beta(s, t),
\label{eq:rho_under_beta}
\end{equation}
giving
\begin{equation}
\|u(\cdot, t)\|_\infty \;\le\; \beta\bigl(\|u_0\|_\infty,\, t\bigr), \quad t \ge 0,
\label{eq:KL_assembled}
\end{equation}
proving (iv).

\subsection*{Step 5: Geometric decay of the residual in $N$.}

From (\ref{eq:epsNdef}), $\eps_N(r_0) = \frac{D_K e^{\Upsilon_K}}{C_K}\cdot \frac{(C_K r_0)^{N+1}}{1 - C_K r_0}$, geometric in $N$ with ratio $C_K r_0 < 1$. As $N \to \infty$, $\eps_N(r_0) \to 0$, so the residual $\LipT(r_0)\,\eps_N(r_0)$ in (ii) and (iii) vanishes, and the closed-loop system recovers the exact-feedback behavior: under the exact (untruncated) feedback, $w \equiv 0$ for $t \ge 1$, and hence $u \equiv 0$ for $t \ge 1$ on the local inverse branch where $\Tcal^{-1}$ is defined. \qed

\section{Stability in $\Ltwo$ by truncated transformation}\label{sec:L2}

The main theorem is a sup-norm result, and the Conclusions of \cite{krstic2026feedbacklinearizationhyperbolicpdes} suggest that truncation forecloses the energy-norm setting native to backstepping. It does not. This section proves that the same truncated feedback also stabilizes the closed loop in $\Ltwo$ (Theorem~\ref{thm:L2}), so the choice of norm in Theorem~\ref{thm:main} is a matter of sharpness, not of necessity.

The route is to truncate the backstepping transformation along with the feedback. Set
\begin{equation}
\Tcal_N \;:=\; I - K_N, \qquad \widetilde K_N \;:=\; K - K_N,
\label{eq:TNdef}
\end{equation}
and $w_N := \Tcal_N[u] = u - K_N[u]$. The feedback is unchanged, $U(t) = K_N[u(\cdot,t)](1)$, so at the boundary
\begin{eqnarray}
w_N(1,t) &=& U(t) - K_N[u(\cdot,t)](1) \;=\; 0,
\label{eq:wN_bdy}
\end{eqnarray}
and the target's boundary condition closes exactly. This introduces no new dynamics. With $w := \Tcal[u]$ the exact-transformation target of Section~\ref{sec:mainproof} --- which satisfies the transport system (\ref{eq:transport_pde})--(\ref{eq:transport_ic}) with $b(t) = -\widetilde K_N[u(\cdot,t)](1)$, a system in no way tied to the sup-norm --- the identity $\Tcal_N = \Tcal + \widetilde K_N$ gives
\begin{equation}
w_N(x,t) \;=\; w(x,t) + \widetilde K_N[u(\cdot,t)](x),
\label{eq:wN_decomp}
\end{equation}
so the truncation mismatch is a static order-$(N{+}1)$ term added to the existing target through two evaluations of the tail: its trace $b(t)$, which drives $w$, and its profile $\widetilde K_N[u]$. Both, as shown next, are controlled by the $\Ltwo$ norm of the state; propagation is carried out on $w$ via (\ref{eq:transport_solution}), and the state is recovered by inverting the finite operator $\Tcal_N$.

The tail bound is the first ingredient, and it costs nothing to move to $\Ltwo$: the proof of Lemma~\ref{lem:pointwise_tail} already runs in $\|u\|_{\Ltwo}$ and slice-wise in $x$, relaxing to the sup-norm only in its last line.

\begin{lemma}[$\Ltwo$ tail-operator bound]\label{lem:tailL2}
Under Assumption~\ref{ass:fnbound} and Proposition~\ref{prop:L2bound}, for every $u \in \Ltwo(0,1)$ with $\|u\|_{\Ltwo} < 1/C_K$ and every $N \ge 2$,
\begin{equation}
\sup_{x \in [0,1]} \bigl|\widetilde K_N[u](x)\bigr| \;\le\; \eps_N\bigl(\|u\|_{\Ltwo}\bigr),
\label{eq:tailL2}
\end{equation}
with $\eps_N$ as in (\ref{eq:epsNdef}). In particular, both the trace and the profile of the tail obey
\begin{eqnarray}
\bigl|\widetilde K_N[u](1)\bigr| &\le& \eps_N\bigl(\|u\|_{\Ltwo}\bigr), \nonumber \\
\bigl\|\widetilde K_N[u]\bigr\|_{\Ltwo} &\le& \eps_N\bigl(\|u\|_{\Ltwo}\bigr).
\label{eq:tailL2_two}
\end{eqnarray}
\end{lemma}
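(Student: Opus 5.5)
The plan is to rerun the proof of Lemma~\ref{lem:pointwise_tail} at a general point $x \in [0,1]$, keeping the last step in $\Ltwo$ rather than passing to the sup-norm, and then to sum the tail exactly as in Lemma~\ref{lem:trunc_tail}.

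\emph{Bound on a single term.} Fix $x \in [0,1]$ and $n \ge N+1$. Cauchy--Schwarz on the simplex $T_n(x)$ gives
\[
\Bigl|\int_{T_n(x)} k_n(x,\xi)\, u^{\otimes n}(\xi)\, d\xi\Bigr| \le \|k_n(x,\cdot)\|_{\Ltwo(T_n(x))}\, \|u^{\otimes n}\|_{\Ltwo(T_n(x))}.
\]
For the second factor we symmetrize as in (\ref{eq:symmetrization}), now over the cube $[0,x]^n$. This gives
\[
\|u^{\otimes n}\|_{\Ltwo(T_n(x))}^2 = \frac{\|u\|_{\Ltwo(0,x)}^{2n}}{n!} \le \frac{\|u\|_{\Ltwo(0,1)}^{2n}}{n!}.
\]
For the first factor, the slice bound (\ref{eq:knL2}) together with $x \le 1$ gives
\[
\|k_n(x,\cdot)\|_{\Ltwo(T_n(x))} \le \sqrt{n!}\, D_K C_K^{n-1} e^{\Upsilon_K}.
\]
The $\sqrt{n!}$ factors cancel, so the $n$-th term is bounded by $\frac{D_K e^{\Upsilon_K}}{C_K}(C_K\|u\|_{\Ltwo})^n$, uniformly in $x$.

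\emph{Summing the tail.} Under the hypothesis $C_K\|u\|_{\Ltwo}<1$, summing this bound over $n \ge N+1$ is a geometric series, exactly as in (\ref{eq:tail_proof}). The sum equals $\eps_N(\|u\|_{\Ltwo})$, which proves (\ref{eq:tailL2}).

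\emph{The two consequences.} The trace bound in (\ref{eq:tailL2_two}) is (\ref{eq:tailL2}) evaluated at $x=1$. The profile bound follows because $[0,1]$ has unit measure, so $\|\widetilde K_N[u]\|_{\Ltwo} \le \sup_x |\widetilde K_N[u](x)|$.

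\emph{Main obstacle.} The only delicate point is justifying the termwise manipulations. Each Volterra integral must be well defined for $u \in \Ltwo$ only, and the series must converge absolutely so that the interchange of sum and absolute value is legitimate. Both follow from the same Cauchy--Schwarz bound: it shows each integrand $k_n(x,\cdot)u^{\otimes n}$ is in $L^1(T_n(x))$, since $u^{\otimes n} \in \Ltwo(T_n(x))$ by Fubini. Absolute convergence of the series then comes from the geometric majorant, uniformly in $x$.
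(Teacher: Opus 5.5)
Your proposal is correct and follows the paper's proof essentially step for step. Both use Cauchy--Schwarz on $T_n(x)$ and symmetrization over $[0,x]^n$ to get $\|u\|_{\Ltwo(0,x)}^n/\sqrt{n!}$, then cancel the $\sqrt{n!}$ against the slice bound (\ref{eq:knL2}) uniformly in $x$, sum the geometric series from $n=N+1$, and obtain the two consequences from the unit measure of $[0,1]$. Your remark on why the Volterra integrals are well defined for $u\in\Ltwo$ and why the series converges absolutely is a justification the paper leaves implicit.
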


\begin{proof}
Fix $x \in [0,1]$ and $n \ge N+1$. By Cauchy--Schwarz on $T_n(x)$ and the symmetrization identity (\ref{eq:symmetrization}) applied on $T_n(x)$ --- available because $|u^{\otimes n}|^2$ is a symmetric integrand --- $\|u^{\otimes n}\|_{\Ltwo(T_n(x))} = \|u\|_{\Ltwo(0,x)}^n/\sqrt{n!}$. Pairing with the slice bound (\ref{eq:knL2}), the factor $\sqrt{n!}$ cancels:
\begin{equation}
\biggl|\int_{T_n(x)} k_n(x,\cdot)\, u^{\otimes n}\, d\xi\biggr| \;\le\; \frac{D_K\, e^{\Upsilon_K}}{C_K}\,\bigl(C_K\,\|u\|_{\Ltwo}\bigr)^n,
\label{eq:tailL2_pointwise}
\end{equation}
uniformly in $x$. Summing the geometric series from $n = N+1$ gives (\ref{eq:tailL2}); the consequences (\ref{eq:tailL2_two}) follow since $[0,1]$ has unit Lebesgue measure.
\end{proof}

The strengthening over Lemma~\ref{lem:trunc_tail} --- an $\Ltwo$ argument, and control at every $x$ rather than only the trace --- is free: the integrand $|u^{\otimes n}|^2$ is permutation-invariant, so the full simplex gain $1/n!$ cancels the factorial of (\ref{eq:knL2}), as in Section~\ref{sec:KLip}. The same symmetry gives, for $\|u\|_{\Ltwo} \le R$,
\begin{eqnarray}
\bigl|K_N[u](1)\bigr| &\le& B_{K_N}(R), \qquad R < 1/C_K, \nonumber \\
\|F[u]\|_\infty &\le& M_F(R), \qquad\ \ R < \rho_f,
\label{eq:BKN_MF_L2}
\end{eqnarray}
with $B_{K_N}$, $M_F$ as in (\ref{eq:BKNdef}), (\ref{eq:MFdef}), the plant bound using $\|u\|_{L^1} \le \|u\|_{\Ltwo}$.

The Lipschitz estimates are where $\Ltwo$ exacts a price. The telescoped difference (\ref{eq:telescope}) places a group of $u_1$ factors, the difference $u_1 - u_2$, and a group of $u_2$ factors in the slots, so the squared integrand is only partly symmetric: symmetrizing within each group --- a \emph{partial} symmetrization (Appendix~\ref{app:L2}) --- recovers $1/((j-1)!\,(n-j)!)$ rather than $1/n!$, and the uncancelled interleavings, entering under a square root, renormalize the constants to $C_K \mapsto \sqrt{2}\,C_K$ for the controller and $\rho_f \mapsto \rho_f/2$ for the plant.

\begin{lemma}[$\Ltwo$ Lipschitzness of $K$ and $K_N$]\label{lem:KLipL2}
Under Assumption~\ref{ass:fnbound} and Proposition~\ref{prop:L2bound}, for every $r \in [0, 1/(\sqrt{2}\,C_K))$ and every $u_1, u_2 \in \Ltwo(0,1)$ with $\|u_1\|_{\Ltwo}, \|u_2\|_{\Ltwo} \le r$,
\begin{equation}
\bigl\|K[u_1] - K[u_2]\bigr\|_{\Ltwo} \;\le\; \Lipk^{(2)}(r)\,\|u_1 - u_2\|_{\Ltwo},
\label{eq:KLipL2}
\end{equation}
where
\begin{equation}
\Lipk^{(2)}(r) \;:=\; D_K\, e^{\Upsilon_K}\, \frac{\sqrt{2}\,C_K r\,\bigl(2 - \sqrt{2}\,C_K r\bigr)}{\bigl(1 - \sqrt{2}\,C_K r\bigr)^2},
\label{eq:LipK2def}
\end{equation}
i.e., (\ref{eq:LipKdef}) with $C_K$ replaced by $\sqrt{2}\,C_K$. The same bound with the sum stopped at $n = N$ holds for $K_N$ with the constant
\begin{eqnarray}
\LipKn^{(2)}(r) &:=& D_K\, e^{\Upsilon_K}\, \sum_{n=2}^N n\,\bigl(\sqrt{2}\,C_K r\bigr)^{n-1} \nonumber \\
&\le& \Lipk^{(2)}(r).
\label{eq:LipKN2def}
\end{eqnarray}
\end{lemma}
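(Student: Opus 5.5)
The plan is to repeat the proof of Lemma~\ref{lem:KLip} slice by slice in $x$, with Cauchy--Schwarz on $T_n(x)$ replacing the $L^1$ bound of Lemma~\ref{lem:knL1}. We prove a bound that is uniform in $x\in[0,1]$. The $\Ltwo$ bound in $x$ then follows because $[0,1]$ has unit measure, exactly as in Lemma~\ref{lem:tailL2}.

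Fix $x$ and $n\ge 2$, and set $\delta:=u_1-u_2$. Apply the telescoping identity (\ref{eq:telescope}) and write
\[
g_j(\xi):=\Bigl[\prod_{i<j}u_1(\xi_i)\Bigr]\delta(\xi_j)\Bigl[\prod_{i>j}u_2(\xi_i)\Bigr].
\]
Then the $n$-th order difference is at most $\sum_{j=1}^n \|k_n(x,\cdot)\|_{\Ltwo(T_n(x))}\,\|g_j\|_{\Ltwo(T_n(x))}$.

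The key step is the partial symmetrization bound for $\|g_j\|_{\Ltwo(T_n(x))}^2$. We enlarge $T_n(x)$ by dropping every ordering constraint between the three groups: the $j-1$ leading variables, the variable $\xi_j$, and the $n-j$ trailing variables. We keep only the orderings within the first and third groups. By the symmetrization identity (\ref{eq:symmetrization}) applied to each group, this gives
\[
\|g_j\|_{\Ltwo(T_n(x))}^2 \le \frac{\|u_1\|_{\Ltwo}^{2(j-1)}}{(j-1)!}\,\|\delta\|_{\Ltwo}^2\,\frac{\|u_2\|_{\Ltwo}^{2(n-j)}}{(n-j)!} \le \frac{r^{2(n-1)}\|\delta\|_{\Ltwo}^2}{(j-1)!\,(n-j)!}.
\]
Summing the square roots over $j$ gives $r^{n-1}\|\delta\|_{\Ltwo}\,((n-1)!)^{-1/2}\sum_{j}\binom{n-1}{j-1}^{1/2}$. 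By Cauchy--Schwarz in $j$, the binomial sum is at most $\sqrt{n\,2^{n-1}}$.

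Now combine this with the slice bound (\ref{eq:knL2}) and $x\le1$, which give $\|k_n(x,\cdot)\|_{\Ltwo}\le\sqrt{n!}\,D_K C_K^{n-1}e^{\Upsilon_K}$. The factorials reduce as $\sqrt{n!}/\sqrt{(n-1)!}=\sqrt n$, and this combines with the remaining $\sqrt n$ to produce the factor $n$. The $n$-th order contribution is therefore at most
\[
D_K e^{\Upsilon_K}\, n\,(\sqrt2\,C_K r)^{n-1}\,\|\delta\|_{\Ltwo}.
\]

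Summing over $n\ge2$ uses the same series as in Lemma~\ref{lem:KLip}, now with $y=\sqrt2 C_K r<1$, and yields (\ref{eq:LipK2def}). Stopping the sum at $n=N$ gives $\LipKn^{(2)}$, and this partial sum is at most the full series. The main obstacle is the partial symmetrization step. One must check that relaxing the cross-group constraints is an upper bound, which holds because the integrand is nonnegative. One must also check that the uncancelled interleaving count, $\sum_j\binom{n-1}{j-1}^{1/2}$, is controlled by $2^{(n-1)/2}\sqrt n$, since this is what produces the renormalization $C_K\mapsto\sqrt2\,C_K$.
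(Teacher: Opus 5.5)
Your proposal is correct and matches the paper's proof (Appendix~\ref{app:L2}) essentially step for step. The shared steps are the telescoping identity, partial symmetrization over the enlarged region $D_j(x)\supseteq T_n(x)$, and Cauchy--Schwarz against the slice bound (\ref{eq:knL2}). Both arguments then apply Cauchy--Schwarz in $j$ to obtain the factor $n(\sqrt{2})^{n-1}$, and sum $\sum_{n\ge 2} n\,y^{n-1}$ with $y=\sqrt{2}\,C_K r$, uniformly in $x$.
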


The $K$ case of (\ref{eq:KLipL2}) is \cite[Lemma~2]{krstic2026feedbacklinearizationhyperbolicpdes}, re-derived in the $(C_K, D_K, \Upsilon_K)$ currency of Part~I and extended to $K_N$ with a constant uniform in $N$.

\begin{proof} See Appendix~\ref{app:L2}. \end{proof}

The function $\Lipk^{(2)}$ is continuous and strictly increasing on $[0, 1/(\sqrt{2}\,C_K))$, vanishes at $0$, and blows up at the right endpoint, so $\Lipk^{(2)}(r_2^*) = 1$ has a unique root $r_2^* \in (0, 1/(\sqrt{2}\,C_K))$; since $\Lipk(r) \le \Lipk^{(2)}(r)$ pointwise, $r_2^* \le r^*$. Set $\bar r_2 := \min(r_2^*,\, \rho_f/2)$, and, for $r_0 \in (0, \bar r_2)$, define the $\Ltwo$ analogues of (\ref{eq:rhowdef}) and (\ref{eq:LipTinvdef}),
\begin{eqnarray}
\rho_w^{(2)}(r_0) &:=& r_0\,\bigl(1 - \Lipk^{(2)}(r_0)\bigr), \label{eq:rhow2def}\\
\LipT^{(2)}(r_0) &:=& \frac{1}{1 - \Lipk^{(2)}(r_0)}. \label{eq:LipT2def}
\end{eqnarray}

The target-side estimate transfers to $\Ltwo$ with the sup-norm maximum replaced by a Pythagorean sum, the initial-data and boundary contributions occupying disjoint regions of the spatial domain.

\begin{lemma}[$\Ltwo$ propagation under transport]\label{lem:transportL2}
Let $w_0 \in \Ltwo(0,1)$ and $b \in \Linf(0,\infty)$. The mild solution (\ref{eq:transport_solution}) of (\ref{eq:transport_pde})--(\ref{eq:transport_ic}) satisfies
\begin{eqnarray}
\|w(\cdot,t)\|_{\Ltwo}^2 &\le& \|w_0\|_{\Ltwo}^2 + \sup_{0 \le \tau \le t}|b(\tau)|^2, \quad t \ge 0, \nonumber \\
\|w(\cdot,t)\|_{\Ltwo} &\le& \sup_{t-1 \le \tau \le t}|b(\tau)|, \qquad\quad\ \  t \ge 1.
\label{eq:transportL2}
\end{eqnarray}
\end{lemma}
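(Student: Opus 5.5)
The plan is to compute $\|w(\cdot,t)\|_{\Ltwo}^2$ directly from the characteristic formula (\ref{eq:transport_solution}), exactly as in the proof of Lemma~\ref{lem:transport}. The only change is that the supremum over $x$ becomes an integral over $x$, and the two pieces of the domain contribute additively instead of through a maximum. Fix $t \ge 0$ and split $[0,1]$ into two parts. The first is $I_1(t) := [0, \max(0,1-t)]$, on which $w(x,t) = w_0(x+t)$. The second is $I_2(t) := [\max(0,1-t), 1]$, on which $w(x,t) = b(t-(1-x))$. These overlap in at most one point, which has measure zero, so
\begin{equation*}
\|w(\cdot,t)\|_{\Ltwo}^2 = \int_{I_1(t)} |w_0(x+t)|^2\,dx + \int_{I_2(t)} |b(t-1+x)|^2\,dx .
\end{equation*}

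For the first integral I will substitute $y = x+t$, which maps $I_1(t)$ onto $[t,1]$ when $t<1$ and to the empty set when $t \ge 1$. This gives $\int_t^1 |w_0(y)|^2\,dy \le \|w_0\|_{\Ltwo}^2$, and the term vanishes for $t \ge 1$. For the second integral I will substitute $\tau = t-1+x$, which maps $I_2(t)$ onto $[\max(0,t-1),t]$, an interval of length $\min(t,1) \le 1$. Bounding the integrand by its essential supremum gives at most $\min(t,1)\,\sup_{\max(0,t-1)\le\tau\le t}|b(\tau)|^2 \le \sup_{0\le\tau\le t}|b(\tau)|^2$.

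Adding the two bounds gives the first inequality of (\ref{eq:transportL2}) for all $t\ge0$. For $t\ge1$ only the second piece is present, and its window is $[t-1,t]$, so taking the square root gives the second inequality.

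The only point needing care, and the only real obstacle, is measurability and the meaning of $\sup|b|$ for $b\in\Linf$. The supremum should be read as an essential supremum on the window, consistent with Lemma~\ref{lem:transport}, and the change of variables is a translation, so it preserves null sets. The pointwise identity holds for a.e.\ $x$ at each fixed $t$, which is all that the $\Ltwo$ norm requires. Beyond this, the computation is routine.
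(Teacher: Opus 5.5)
Your proposal is correct and matches the paper's proof. It uses the same split of $[0,1]$ into the initial-data region and the boundary-fed region, and the same translation substitution turning the second piece into $\int_{\max(0,t-1)}^{t}|b(\tau)|^2\,d\tau$. It then bounds the two pieces by $\|w_0\|_{\Ltwo}^2$ (vanishing for $t\ge 1$) and by $\min(t,1)\sup|b|^2$, as the paper does. Your reading of $\sup|b|$ as an essential supremum is a harmless refinement that the paper leaves implicit.
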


\begin{proof}
From (\ref{eq:transport_solution}), after the substitution $\tau = t - (1-x)$ in the boundary part,
\begin{eqnarray}
\|w(\cdot,t)\|_{\Ltwo}^2 &=& \int_0^{\max(0,\,1-t)} |w_0(x+t)|^2\, dx \nonumber \\
&& +\; \int_{\max(0,\,t-1)}^{t} |b(\tau)|^2\, d\tau.
\label{eq:transportL2_split}
\end{eqnarray}
The first term is at most $\|w_0\|_{\Ltwo}^2$ and vanishes for $t \ge 1$; the second is at most $\min(t,1)\, \sup_{\max(0,t-1) \le \tau \le t}|b(\tau)|^2$.
\end{proof}

The pointwise structure of (\ref{eq:canonical_mild_a})--(\ref{eq:canonical_mild_b}) makes the well-posedness argument of Section~\ref{sec:wellposed} portable: every bound entering the fixed point is uniform in $x$, so the estimates survive intact when the state ball is measured in $\Ltwo$, with the Lipschitz constants replaced by their partially symmetrized $\Ltwo$ versions.

\begin{lemma}[$\Ltwo$ well-posedness of the truncated closed loop]\label{lem:wellposedL2}
Let Assumption~\ref{ass:fnbound} hold and fix $N \ge 2$ and $R \in (0, \bar r_2)$. For every $u_0 \in \Ltwo(0,1)$ with $\|u_0\|_{\Ltwo} < R$, there exist $\tau > 0$ and a unique canonical mild solution $u \in \Linf([0,\tau]; \Ltwo(0,1))$ of the closed loop (\ref{eq:plant})--(\ref{eq:Fdef}) under $U(t) = K_N[u(\cdot,t)](1)$ --- in the sense of Definition~\ref{rem:canonical_rep} with $\Ltwo(0,1)$ in place of $\Linf(0,1)$, the identities (\ref{eq:canonical_mild_a})--(\ref{eq:canonical_mild_b}) read a.e.\ in $x$ and $u_0(x+t)$ read as the $\Ltwo$ shift --- with $u(\cdot,0) = u_0$, $\|u(\cdot,t)\|_{\Ltwo} \le R$ on $[0,\tau]$, and
\begin{equation}
\tau \ge \min\!\Biggl(1,\; \biggl(\frac{R - \|u_0\|_{\Ltwo}}{B_{K_N}(R) + M_F(R)}\biggr)^{\!2}, \frac{1 - \LipKn^{(2)}(R)}{2\, L_F^{(2)}(R)}\Biggr).
\label{eq:tau_lb_L2}
\end{equation}
Moreover, the continuation criterion of Lemma~\ref{lem:continuation} holds with $\Ltwo$ in place of $\Linf$: if the maximal solution satisfies $\|u(\cdot,t)\|_{\Ltwo} \le q < R$ on $[0, T_{\max})$, then $T_{\max} = \infty$.
\end{lemma}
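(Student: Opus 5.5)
Our plan is to repeat the contraction argument of Lemma~\ref{lem:wellposed} on the closed ball
\[
X^{(2)}_{\tau,R} := \bigl\{u \in \Linf([0,\tau];\Ltwo(0,1)) : \|u(\cdot,t)\|_{\Ltwo} \le R \text{ for } t\in[0,\tau]\bigr\},
\]
with the map $\Gamma$ given by the right-hand sides of (\ref{eq:canonical_mild_a})--(\ref{eq:canonical_mild_b}). Here $u_0(x+t)$ is read as the $\Ltwo$ shift, and the identities are required only a.e.\ in $x$. The pointwise bounds (\ref{eq:BKN_MF_L2}) tell us that every integrand is bounded uniformly in $x$ once the state lies in the $\Ltwo$ ball of radius $R$. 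In particular $|K_N[u(\cdot,s)](1)| \le B_{K_N}(R)$ and $\|F[u(\cdot,s)]\|_\infty \le M_F(R)$. Measurability in $s$ is inherited exactly as in the sup-norm case.

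\emph{Self-map.} For $t\le\tau\le 1$ we split $[0,1]$ into the interior region $x\le 1-t$ and the boundary strip $x>1-t$, which has length $t$.
\begin{itemize}
\item On the interior region, the shifted datum contributes at most $\|u_0\|_{\Ltwo}$ in $\Ltwo$, and the integral term is pointwise at most $t\,M_F(R)$.
\item On the strip, the pointwise bound is $B_{K_N}(R) + t\,M_F(R)$, and its $\Ltwo$ norm over the strip is at most $\sqrt{t}\,\bigl(B_{K_N}(R)+M_F(R)\bigr)$.
\end{itemize}
Using $t\le\sqrt t$ for $t\le 1$ and the triangle inequality, we get
\[
\|\Gamma u(\cdot,t)\|_{\Ltwo} \le \|u_0\|_{\Ltwo} + \sqrt{\tau}\,\bigl(B_{K_N}(R)+M_F(R)\bigr).
\]
This is at most $R$ once $\tau \le \bigl((R-\|u_0\|_{\Ltwo})/(B_{K_N}(R)+M_F(R))\bigr)^2$. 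Together with the condition $\tau\le 1$, this produces the first two entries of (\ref{eq:tau_lb_L2}). This square-root mechanism is exactly what replaces the sup-norm's $\max(\|u_0\|,B_{K_N})$: the boundary contribution now lives on a thin strip instead of dominating pointwise.

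\emph{Contraction.} For $u,v$ in the ball, the data terms cancel.
\begin{itemize}
\item On the strip, the boundary term is pointwise at most $|K_N[u(\cdot,\sigma)](1) - K_N[v(\cdot,\sigma)](1)|$. We need a trace Lipschitz bound $\LipKn^{(2)}(R)\,\|u-v\|_{\Ltwo}$. We would obtain it from the partial-symmetrization estimate of Appendix~\ref{app:L2} behind Lemma~\ref{lem:KLipL2}, applied slice-wise at $x=1$ rather than in $\Ltwo$ of $x$. Indeed, that appendix argument bounds each slice uniformly in $x$ before integrating, just as Lemma~\ref{lem:tailL2} does.
\item The $F$-integrals are pointwise at most $\tau\,L_F^{(2)}(R)\,\|u-v\|$. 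Here $L_F^{(2)}$ is the partially symmetrized plant Lipschitz constant with $\rho_f\mapsto\rho_f/2$; this is why $R<\rho_f/2$ is required via $\bar r_2$.
\end{itemize}
Taking the $\Ltwo$ norm over $x$ gives a contraction constant $\LipKn^{(2)}(R)+\tau L_F^{(2)}(R)$. Since $R<r_2^*$, we have $\LipKn^{(2)}(R)\le\Lipk^{(2)}(R)<1$, so the choice $\tau\le(1-\LipKn^{(2)}(R))/(2L_F^{(2)}(R))$ makes this constant at most $(1+\LipKn^{(2)}(R))/2<1$. Banach's fixed-point theorem then gives existence and uniqueness in $X^{(2)}_{\tau,R}$.

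\emph{Continuation and main obstacle.} The continuation criterion follows verbatim from the proof of Lemma~\ref{lem:continuation}. Time-shift invariance of the fixed-point equation lets us restart from any $t_0$, and the bound (\ref{eq:tau_lb_L2}) with $\|u_0\|_{\Ltwo}$ replaced by $q$ gives a uniform $\tau_*>0$, which contradicts finiteness of $T_{\max}$. The step we expect to be the main obstacle is the pointwise trace Lipschitz bound for $K_N$ and the sup-norm Lipschitz bound $L_F^{(2)}$ for $F$ in terms of $\Ltwo$ distances. We would verify these by Cauchy--Schwarz on each telescoped term (\ref{eq:telescope}), with partial symmetrization within the $u_1$-block and within the $u_2$-block. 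The interleavings left uncancelled are absorbed by the binomial bound $\binom{n-1}{j-1}\le 2^{n-1}$, and under the square root this gives the $\sqrt2$ and the $\rho_f/2$ rescalings.
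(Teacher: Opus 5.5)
Your proposal is correct and matches the paper's proof step for step: the same ball $X^{(2)}_{\tau,R}$, the $\sqrt{\tau}$ self-map bound from the boundary strip of measure $t$, the contraction factor $\LipKn^{(2)}(R)+\tau\,L_F^{(2)}(R)$ from the $x$-uniform bounds of Lemmas~\ref{lem:KLipL2} and~\ref{lem:FLipL2}, and the restart argument with $q$ in place of $\|u_0\|_{\Ltwo}$.

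Two small tidy-ups are worth making. First, combine the self-map estimate as data part plus integral part, $\bigl(\|u_0\|_{\Ltwo}^2+t\,B_{K_N}(R)^2\bigr)^{1/2}+t\,M_F(R)$, as the paper does; summing your interior and strip norms as listed picks up the $M_F$ term in both regions and overshoots $\sqrt{\tau}\,M_F(R)$. Second, simply cite the two lemmas, as the paper does, rather than re-deriving the Lipschitz constants with the termwise bound $\binom{n-1}{j-1}\le 2^{n-1}$: for $K_N$ that bound costs an extra factor $\sqrt{n}$ (the paper applies Cauchy--Schwarz over $j$ instead), and for $F$ the count $n2^{n-1}$ enters linearly through the $L^1$ pairing, not under a square root.
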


\begin{proof}
The fixed-point construction of Lemma~\ref{lem:wellposed} applies on
\begin{eqnarray}
X^{(2)}_{\tau,R} &:=& \bigl\{u \in \Linf([0,\tau]; \Ltwo(0,1)) \,:\, \nonumber \\
&& \quad \|u(\cdot,t)\|_{\Ltwo} \le R \text{ for every } t \in [0,\tau]\bigr\},
\label{eq:X2def}
\end{eqnarray}
with three changes. \emph{Self-map:} the non-integral part of $\Gamma u$ contributes $u_0(x+t)$ on $\{x + t \le 1\}$ and the trace value $K_N[u(\cdot, t-(1-x))](1)$, bounded by $B_{K_N}(R)$ via (\ref{eq:BKN_MF_L2}), on $\{x + t > 1\}$; the two sets are disjoint in $x$, the second having measure $\min(t,1)$, so for $t \le \tau \le 1$,
\begin{eqnarray}
\|\Gamma u(\cdot,t)\|_{\Ltwo} &\le& \bigl(\|u_0\|_{\Ltwo}^2 + t\, B_{K_N}(R)^2\bigr)^{1/2} \nonumber \\
&& +\; t\, M_F(R) \nonumber \\
&\le& \|u_0\|_{\Ltwo} + \sqrt{\tau}\,\bigl(B_{K_N}(R) + M_F(R)\bigr),
\label{eq:selfmapL2}
\end{eqnarray}
which does not exceed $R$ under the second entry of the minimum in (\ref{eq:tau_lb_L2}). \emph{Contraction:} the pointwise estimates (\ref{eq:Gamma_lip_interior})--(\ref{eq:Gamma_lip_boundary}) hold with $(\LipKn, L_F)$ replaced by $(\LipKn^{(2)}, L_F^{(2)})$ and the difference measured in the norm of $X^{(2)}_{\tau,R}$, by Lemmas~\ref{lem:KLipL2} and~\ref{lem:FLipL2}; the bounds are uniform in $x$, hence hold in $\Ltwo(0,1)$, and the contraction factor is at most $(1 + \LipKn^{(2)}(R))/2 < 1$ under the third entry of the minimum. \emph{Continuation:} the argument of Lemma~\ref{lem:continuation} applies with $\Ltwo$ restart datum $u(\cdot,t_0)$, the uniform restart time being (\ref{eq:tau_lb_L2}) with $\|u_0\|_{\Ltwo}$ replaced by $q$.
\end{proof}

With the tail bounds, the partially symmetrized Lipschitz constants, the inverse of $\Tcal_N$, transport propagation, and well-posedness all in hand, the closed-loop result follows by rerunning the five-step argument of Section~\ref{sec:mainproof} in the $\Ltwo$ norm.

\begin{theorem}[Closed-loop $\Ltwo$ stability under truncated feedback]\label{thm:L2}
Let Assumption~\ref{ass:fnbound} hold, and let $D_K, C_K, \Upsilon_K$ be the constants of Proposition~\ref{prop:L2bound}. Fix $r_0 \in (0, \bar r_2)$ and a truncation order $N \ge 2$ such that
\begin{equation}
2\,\eps_N(r_0) \;<\; \rho_w^{(2)}(r_0).
\label{eq:cond_residL2}
\end{equation}
Let $u_0 \in \Ltwo(0,1)$ satisfy
\begin{equation}
\|u_0\|_{\Ltwo} \;<\; \rho_u^{0,(2)}(r_0) \;:=\; \frac{\rho_w^{(2)}(r_0) - 2\,\eps_N(r_0)}{1 + \Lipk^{(2)}(r_0)},
\label{eq:cond_initL2}
\end{equation}
the right-hand side being positive by (\ref{eq:cond_residL2}). Then the closed-loop system (\ref{eq:plant})--(\ref{eq:Fdef}) under the truncated feedback $U(t) = K_N[u](1,t)$ from (\ref{eq:KNdef}) admits a unique canonical mild solution $u \in \Linf_{\mathrm{loc}}([0,\infty); \Ltwo(0,1))$ with $u(\cdot,0) = u_0$, satisfying:
\begin{enumerate}[label=(\roman*)]
\item \emph{(Forward invariance.)} $\|u(\cdot,t)\|_{\Ltwo} \le r_0$ for every $t \ge 0$.
\item \emph{(Practical stability.)} For all $t \ge 0$,
\begin{eqnarray}
\|u(\cdot,t)\|_{\Ltwo} &\le& \frac{1 + \Lipk^{(2)}(r_0)}{1 - \Lipk^{(2)}(r_0)}\,\|u_0\|_{\Ltwo} \nonumber \\
&& +\; 2\,\LipT^{(2)}(r_0)\,\eps_N(r_0).
\label{eq:thmL2_ii}
\end{eqnarray}
\item \emph{(Practical finite-time attractivity.)} For all $t \ge 1$,
\begin{equation}
\|u(\cdot,t)\|_{\Ltwo} \;\le\; 2\,\LipT^{(2)}(r_0)\,\eps_N(r_0).
\label{eq:thmL2_iii}
\end{equation}
\item \emph{(Asymptotic stability.)} There exists a function $\beta_2$ of class $\mathcal{KL}$ on $[0, \rho_u^{0,(2)}(r_0)) \times [0,\infty)$ such that
\begin{equation}
\|u(\cdot,t)\|_{\Ltwo} \;\le\; \beta_2\bigl(\|u_0\|_{\Ltwo},\, t\bigr), \qquad t \ge 0.
\label{eq:thmL2_iv}
\end{equation}
\end{enumerate}
The residual in (ii) and (iii) decays geometrically in the truncation order, $\eps_N(r_0) = O\bigl((C_K r_0)^N\bigr)$, exactly as in Theorem~\ref{thm:main}; as $N \to \infty$, the envelope (\ref{eq:thmL2_ii}) collapses to its linear term, the admissible ball (\ref{eq:cond_initL2}) expands to $r_0\,\bigl(1 - \Lipk^{(2)}(r_0)\bigr)/\bigl(1 + \Lipk^{(2)}(r_0)\bigr)$, and the exact-feedback $\Ltwo$ stabilization of \cite{krstic2026feedbacklinearizationhyperbolicpdes} is recovered, uniformly in time, on that ball.
\end{theorem}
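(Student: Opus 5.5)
The plan is to rerun the five-step argument of Section~\ref{sec:mainproof} in the $\Ltwo$ norm. Two changes are made. Propagation is done on the exact target $w=\Tcal[u]$, whose boundary residual is $b(t)=-\widetilde K_N[u(\cdot,t)](1)$. The state is recovered through the truncated target $w_N=\Tcal_N[u]=w+\widetilde K_N[u]$ by inverting $\Tcal_N=I-K_N$.

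For the inversion I will set up, as in Appendix~\ref{app:Tinv}, the contraction $u\mapsto w_N+K_N[u]$ on the $\Ltwo$ ball $\Bball_{r_0}$. Lemma~\ref{lem:KLipL2} gives it Lipschitz constant $\LipKn^{(2)}(r_0)\le\Lipk^{(2)}(r_0)<1$. It maps the ball into itself whenever $\|w_N\|_{\Ltwo}\le r_0(1-\Lipk^{(2)}(r_0))=\rho_w^{(2)}(r_0)$, because $\|K_N[u]\|_{\Ltwo}\le \Lipk^{(2)}(r_0)\|u\|_{\Ltwo}$ using $K_N[0]=0$. This yields $\|u\|_{\Ltwo}\le\LipT^{(2)}(r_0)\|w_N\|_{\Ltwo}$. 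The fixed point coincides with the trajectory because the trajectory stays in the ball.

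First I bound the initial data. Lemma~\ref{lem:KLipL2} together with Lemma~\ref{lem:tailL2} give
\[
\|w_0\|_{\Ltwo}\le(1+\Lipk^{(2)}(r_0))\|u_0\|_{\Ltwo}+\eps_N(r_0).
\]
This follows from $\|\Tcal_N[u_0]\|_{\Ltwo}\le(1+\Lipk^{(2)})\|u_0\|_{\Ltwo}$ and $w_0=\Tcal_N[u_0]-\widetilde K_N[u_0]$. Next, on $[0,T_{\max})$ the trajectory lies in the $\Ltwo$ ball of radius $r_0$ by Lemma~\ref{lem:wellposedL2}. Lemma~\ref{lem:tailL2} then gives $|b|\le\eps_N(r_0)$, and Lemma~\ref{lem:transportL2} gives $\|w\|_{\Ltwo}\le\|w_0\|_{\Ltwo}+\eps_N(r_0)$, using $\sqrt{a^2+c^2}\le a+c$. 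Adding the profile term from \eqref{eq:tailL2_two}, I get
\[
\|w_N(\cdot,t)\|_{\Ltwo}\le(1+\Lipk^{(2)}(r_0))\|u_0\|_{\Ltwo}+2\eps_N(r_0).
\]
I will double-check the count of $\eps_N$ terms against \eqref{eq:cond_initL2}. The statement's constant $2$ suggests that the $w_0$ bound should be taken as $\|\Tcal_N[u_0]\|$ with $w$ and $w_N$ bounded jointly. One way to achieve this is to note that $w_N(x,t)$ equals $\Tcal_N[u_0](x+t)$ plus a tail difference on the initial region, and equals $b$ plus the profile term on the boundary region. I will then compare the resulting bound with $\rho_w^{(2)}(r_0)$ under \eqref{eq:cond_initL2}. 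This gives $\|u\|_{\Ltwo}\le q<r_0$ with a uniform margin. The continuation part of Lemma~\ref{lem:wellposedL2} then yields $T_{\max}=\infty$, which proves (i) and (ii); item (ii) follows after multiplying by $\LipT^{(2)}$. For $t\ge1$ the $w_0$ part is swept out by Lemma~\ref{lem:transportL2}, so $\|w_N\|_{\Ltwo}\le2\eps_N(r_0)$, which gives (iii).

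For (iv) I copy Step~4 with $\Phi_2(r):=2\LipT^{(2)}(r_0)\eps_N(r)$. Condition \eqref{eq:cond_residL2} gives $\Phi_2(r_0)<r_0$, and since $\Phi_2(r)/r$ is increasing this gives $\Phi_2(r)<r$ on $(0,r_0]$. I use the tail suprema $Y_k=\sup_{t\ge k}\|u(\cdot,t)\|_{\Ltwo}$. For $t\ge k+1$ both $b$ on $[t-1,t]$ and the profile are bounded by $\eps_N(Y_k)$, so $Y_{k+1}\le\Phi_2(Y_k)$. For the initial estimate I use the linear envelope
\[
Y_0\le\alpha_2(s):=\tfrac{1+\Lipk^{(2)}}{1-\Lipk^{(2)}}s+\Phi_2(Y_0).
\]
This is not directly a class-$\mathcal K$ bound, so I will resolve it with the same trick used in Step~4. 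Since $\Phi_2(Y_0)$ is $O(Y_0^{N+1})$ and $Y_0<r_0$ with $\Phi_2(r)<r$, the function $r\mapsto r-\Phi_2(r)$ is increasing near the relevant range (to be checked). Inverting it gives $Y_0\le\gamma(s)$ with $\gamma$ of class $\mathcal K$. The iteration $\sigma_{k+1}=\Phi_2(\sigma_k)$ and the linear-interpolation $\beta_2$ then go through verbatim.

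The main obstacle is this bookkeeping of the two $\eps_N$ contributions, trace and profile. They are additive in $\Ltwo$ rather than combining through a max. They must be arranged so that the forward-invariance threshold matches \eqref{eq:cond_initL2} exactly, and so that the $\mathcal{KL}$ initialization admits a class-$\mathcal K$ bound despite the additive residual. I would first try the monotone-inversion of $r-\Phi_2(r)$. If that fails, I would fall back on bounding $Y_0$ by $\LipT^{(2)}\bigl((1+\Lipk^{(2)})s+2\eps_N(Y_0)\bigr)$ and using superlinearity of $\eps_N$.
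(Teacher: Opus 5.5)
Your architecture is the paper's. You propagate on the exact target $w=\Tcal[u]$ with trace residual $b=-\widetilde K_N[u](1)$, recover the state from $w_N=w+\widetilde K_N[u]$ through $\Tcal_N^{-1}$ with constant $\LipT^{(2)}(r_0)$, and rerun the five steps. As written, however, Steps~1--2 contain a concrete bookkeeping gap.

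\textbf{The extra tail term.} You estimate $w_0$ through $\Tcal_N[u_0]$ and obtain $\|w_0\|_{\Ltwo}\le(1+\Lipk^{(2)}(r_0))\|u_0\|_{\Ltwo}+\eps_N(r_0)$. Adding the trace contribution from Lemma~\ref{lem:transportL2} and the profile contribution from (\ref{eq:tailL2_two}) then gives $(1+\Lipk^{(2)}(r_0))\|u_0\|_{\Ltwo}+3\eps_N(r_0)$, not the $2\eps_N(r_0)$ you display. With $3\eps_N$ you reach neither the threshold (\ref{eq:cond_initL2}) nor the constant in (\ref{eq:thmL2_ii}).

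The ``joint'' repair you sketch does not remove the extra term. On $\{x+t\le 1\}$ one has $w_N(x,t)=\Tcal_N[u_0](x+t)-\widetilde K_N[u_0](x+t)+\widetilde K_N[u(\cdot,t)](x)$, and the two tails do not cancel for $t>0$.

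\textbf{The fix.} Lemma~\ref{lem:KLipL2} is stated for $K$ itself. With $u_2=0$ and $K[0]=0$ it gives directly $\|w_0\|_{\Ltwo}=\|u_0-K[u_0]\|_{\Ltwo}\le(1+\Lipk^{(2)}(r_0))\|u_0\|_{\Ltwo}$, with no tail term. The two $\eps_N$'s are then exactly trace plus profile. Condition (\ref{eq:cond_initL2}) yields $\|w_N(\cdot,t)\|_{\Ltwo}<\rho_w^{(2)}(r_0)$ with a $t$-uniform margin, and (i)--(iii) follow as you describe; this is the paper's Step~1. Your Step~4 envelope already uses this correct bound, inconsistently with your Step~1.

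\textbf{Step~4.} Abandon the primary plan of inverting $r\mapsto r-\Phi_2(r)$, because it need not be monotone on $[0,r_0]$. Indeed $\Phi_2'(r)/\Phi_2(r)=(N+1)/r+C_K/(1-C_Kr)$ gives $\Phi_2'(r_0)>(N+1)\,q_2$ with $q_2:=2\eps_N(r_0)/\rho_w^{(2)}(r_0)$. This exceeds $1$ whenever $q_2>1/(N+1)$, which (\ref{eq:cond_residL2}) does not exclude.

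Your fallback is exactly what the paper does. Monotonicity of $\Phi_2(r)/r$ and $\Phi_2(r_0)/r_0=q_2<1$ give $\Phi_2(Y_0)\le q_2 Y_0$ (using $Y_0\le r_0$ from (i)), hence
\[
Y_0\;\le\;\alpha_2(s):=\frac{1+\Lipk^{(2)}(r_0)}{\bigl(1-\Lipk^{(2)}(r_0)\bigr)(1-q_2)}\,s,
\]
and $\alpha_2(s)<r_0$ is precisely (\ref{eq:cond_initL2}). That identity is what lets you start the iteration $\sigma_0=\alpha_2$ inside $(0,r_0]$, where $\Phi_2(r)<r$. The iterates then decrease strictly to $0$ on exactly the domain $[0,\rho_u^{0,(2)}(r_0))$ on which $\beta_2$ is claimed.
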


\begin{proof}
\emph{Existence and target.} Lemma~\ref{lem:wellposedL2} with $R = r_0$ (admissible since $\|u_0\|_{\Ltwo} < \rho_u^{0,(2)}(r_0) < r_0$) produces the unique maximal canonical mild solution on $[0, T_{\max})$ with $\Ltwo$-values in the ball of radius $r_0$. Since $r_0 < 1/(\sqrt{2}\,C_K) < 1/C_K$, the exact transformation $w = \Tcal[u]$ is defined along the trajectory, and the mild-level identity of \cite{krstic2026feedbacklinearizationhyperbolicpdes} --- whose native setting is $\Ltwo$ --- gives that $w$ is the $\Ltwo$ mild solution of the transport equation (\ref{eq:transport_pde}) with boundary data $b(t) = K_N[u(\cdot,t)](1) - K[u(\cdot,t)](1) = -\widetilde K_N[u(\cdot,t)](1)$ and initial datum $w_0 = \Tcal[u_0]$, exactly as in Section~\ref{sec:mainproof}. The truncated-transformation state is then $w_N = w + \widetilde K_N[u]$ with $w_N(1,t) = 0$, per (\ref{eq:wN_bdy}) and (\ref{eq:wN_decomp}).

\emph{Step 1: forward invariance.} On $[0, T_{\max})$, Lemma~\ref{lem:tailL2} gives $|b(\tau)| \le \eps_N(r_0)$ and $\|\widetilde K_N[u(\cdot,t)]\|_{\Ltwo} \le \eps_N(r_0)$. Lemma~\ref{lem:transportL2} and the triangle inequality in (\ref{eq:wN_decomp}) then give
\begin{eqnarray}
\|w_N(\cdot,t)\|_{\Ltwo} &\le& \bigl(\|w_0\|_{\Ltwo}^2 + \eps_N(r_0)^2\bigr)^{1/2} + \eps_N(r_0) \nonumber \\
&\le& \|w_0\|_{\Ltwo} + 2\,\eps_N(r_0).
\label{eq:wN_bound_inv}
\end{eqnarray}
By Lemma~\ref{lem:KLipL2} with $u_2 = 0$ and $K[0] = 0$, $\|w_0\|_{\Ltwo} \le \bigl(1 + \Lipk^{(2)}(r_0)\bigr)\|u_0\|_{\Ltwo}$, so (\ref{eq:cond_initL2}) gives $\|w_0\|_{\Ltwo} < \rho_w^{(2)}(r_0) - 2\,\eps_N(r_0)$, and (\ref{eq:wN_bound_inv}) yields $\|w_N(\cdot,t)\|_{\Ltwo} < \rho_w^{(2)}(r_0)$ with a $t$-uniform margin. Since $\Tcal_N[u(\cdot,t)] = w_N(\cdot,t)$ with $\|u(\cdot,t)\|_{\Ltwo} \le r_0$, the uniqueness clause of Lemma~\ref{lem:TNinv} identifies $u(\cdot,t) = \Tcal_N^{-1}[w_N(\cdot,t)]$, whence
\begin{eqnarray}
\|u(\cdot,t)\|_{\Ltwo} &\le& \LipT^{(2)}(r_0)\,\bigl(\|w_0\|_{\Ltwo} + 2\,\eps_N(r_0)\bigr) \nonumber \\
&<& \LipT^{(2)}(r_0)\,\rho_w^{(2)}(r_0) \;=\; r_0,
\label{eq:u_margin_L2}
\end{eqnarray}
uniformly on $[0, T_{\max})$. The continuation criterion of Lemma~\ref{lem:wellposedL2} forces $T_{\max} = \infty$, proving (i).

\emph{Step 2: practical stability.} Folding $\|w_0\|_{\Ltwo} \le (1 + \Lipk^{(2)}(r_0))\|u_0\|_{\Ltwo}$ into the first line of (\ref{eq:u_margin_L2}) and using $\LipT^{(2)} = 1/(1 - \Lipk^{(2)})$ gives (\ref{eq:thmL2_ii}) for all $t \ge 0$.

\emph{Step 3: finite-time attractivity.} For $t \ge 1$ the initial datum has been swept out of $[0,1]$: the second line of (\ref{eq:transportL2}) gives $\|w(\cdot,t)\|_{\Ltwo} \le \sup_{t-1 \le \tau \le t}|b(\tau)| \le \eps_N(r_0)$, hence $\|w_N(\cdot,t)\|_{\Ltwo} \le 2\,\eps_N(r_0) < \rho_w^{(2)}(r_0)$, and the conversion through $\Tcal_N^{-1}$ gives (\ref{eq:thmL2_iii}).

\emph{Step 4: asymptotic stability.} Set $\rho_2(t) := \|u(\cdot,t)\|_{\Ltwo}$, $s := \|u_0\|_{\Ltwo}$, and
\begin{equation}
\Phi_2(r) \;:=\; 2\,\LipT^{(2)}(r_0)\,\eps_N(r), \qquad q_2 \;:=\; \frac{2\,\eps_N(r_0)}{\rho_w^{(2)}(r_0)},
\label{eq:Phi2def}
\end{equation}
with $q_2 < 1$ by (\ref{eq:cond_residL2}). As in Step~4 of Section~\ref{sec:mainproof}, $r \mapsto \Phi_2(r)/r$ is strictly increasing with $\Phi_2(r_0)/r_0 = q_2$, so $\Phi_2(r) \le q_2\, r$ on $(0, r_0]$ and $0$ is the only fixed point of $\Phi_2$ in $[0, r_0]$. With $Y_k := \sup_{t \ge k}\rho_2(t)$: for every $t \ge 0$, $\|w(\cdot,t)\|_{\Ltwo} \le \|w_0\|_{\Ltwo} + \eps_N(Y_0)$ by (\ref{eq:transportL2}) and $\|\widetilde K_N[u(\cdot,t)]\|_{\Ltwo} \le \eps_N(\rho_2(t)) \le \eps_N(Y_0)$, so $\rho_2(t) \le \LipT^{(2)}(r_0)\,\|w_0\|_{\Ltwo} + \Phi_2(Y_0)$; taking the supremum and using $\Phi_2(Y_0) \le q_2\, Y_0$,
\begin{equation}
Y_0 \;\le\; \alpha_2(s) \;:=\; \frac{1 + \Lipk^{(2)}(r_0)}{\bigl(1 - \Lipk^{(2)}(r_0)\bigr)\,(1 - q_2)}\; s,
\label{eq:alpha2def}
\end{equation}
and (\ref{eq:cond_initL2}) is precisely the condition $\alpha_2(s) < r_0$. For $t \ge k+1$, the sliding window $[t-1, t] \subseteq [k, \infty)$ gives $\|w(\cdot,t)\|_{\Ltwo} \le \eps_N(Y_k)$ and $\|\widetilde K_N[u(\cdot,t)]\|_{\Ltwo} \le \eps_N(Y_k)$, hence $Y_{k+1} \le \Phi_2(Y_k)$. The iterates $\sigma_0 := \alpha_2$, $\sigma_{k+1} := \Phi_2(\sigma_k)$ are class $\mathcal{K}$, strictly decreasing in $k$, and converge to $0$; the linear interpolation (\ref{eq:beta_def}) applied to $(\sigma_k)$, with $\sigma_{-1} := 2\sigma_0$, produces the class-$\mathcal{KL}$ envelope $\beta_2$ exactly as in Section~\ref{sec:mainproof}, proving (iv).

\emph{Step 5: geometric decay in $N$.} Identical to Step~5 of Section~\ref{sec:mainproof}, since $\eps_N$ is unchanged.
\end{proof}

\begin{remark}\label{rem:L2frames}
Two things make the $\Ltwo$ theorem work, and neither is the relocation of the residual on its own: the tail bounds of Lemma~\ref{lem:tailL2} take $\Ltwo$ arguments at no cost, their integrands being symmetric, and the Lipschitz estimates survive partial symmetrization, paying $(C_K, \rho_f) \mapsto (\sqrt{2}\,C_K,\, \rho_f/2)$. This is the whole price of the change of norm, and every constant in Theorem~\ref{thm:L2} is accordingly the more conservative one: $r_2^* \le r^*$, $\LipT^{(2)} \ge \LipT$, the residual doubles, and the transport bound is additive where the sup-norm one took a maximum. The sup-norm frame is thus the sharper of the two. It is not, however, the only one: once Lemma~\ref{lem:KLipL2} is in hand, the $\Ltwo$ theorem could also be run in the exact-transformation frame, keeping the residual at the boundary and halving it to a single $\eps_N$. The truncated frame trades that factor of two for finiteness --- $\Tcal_N$, its inverse, and $\Tcal_N[u_0]$ are all finite Volterra operators, the infinite series surviving only inside the two tail evaluations, and its target coordinates are the ones the implemented controller linearizes.
\end{remark}

\section{Conclusions}\label{sec:conclusions}

We have shown that truncating the exact infinite Volterra linearizer to any finite order preserves closed-loop stability --- in the sup-norm (Theorem~\ref{thm:main}) and in $\Ltwo$ (Theorem~\ref{thm:L2}) --- under a controller a computer can evaluate, with the exact-feedback behavior recovered as $N \to \infty$. The truncation leaves a boundary residual the exact loop does not have, and handling it demanded machinery the exact theory never needed: a Cauchy--Schwarz simplex bridge in place of the unavailable pointwise kernel bounds, a fixed-point well-posedness argument for the now genuinely coupled closed loop, and a maximal-time invariance argument with a $\mathcal{KL}$ iteration in place of the exact loop's closed-form decay.

Where \cite{krstic2026feedbacklinearizationhyperbolicpdes} gives exact exponential stabilization under an infinite controller, the present controller is finite and implementable, at the cost of practical rather than exact decay. Letting $N \to \infty$ does not recover an $\Linf$ theorem for the infinite-series feedback --- the higher-order smallness that admits truncation vanishes in the limit --- so that case remains open.

A related degree-by-degree idea appears in Krener's Al'brekht expansion for a reaction--diffusion equation \cite{9580683}, there for an optimal feedback.

The same truncation development should extend, in principle, to the parabolic Volterra construction of \cite{vazquez2008volterra1,vazquez2008volterra2}, with a semigroup-decay estimate replacing the transport-crossing argument and yielding asymptotic rather than finite-time attractivity.

Finally, $K_N$ is a single operator $(f_2, \ldots, f_N; u) \mapsto K_N[u](1)$ from plant data and state to the scalar input, and is thus --- like the linear backstepping gain learned in \cite{Bhan2024Neural} --- amenable to operator learning; replacing it by a neural surrogate while preserving properties (i)--(iv) is the subject of the companion Part~II~\cite{krstic2026feedbacklinearizationhyperbolicpdes-NO}.

\appendix

\section{Construction of the controller kernels $k_n$}\label{app:kndef}

The kernels $k_n$ in (\ref{eq:Kdef}) are constructed in \cite{krstic2026feedbacklinearizationhyperbolicpdes} by recursive integration along characteristics of a triangular cascade of first-order transport equations on the simplices $T_n(x)$; we reproduce that construction here, with $f_n$ the plant coefficients of (\ref{eq:Fdef}) and the convention $\xi_0 := x$.

For indices $p \ge 1$, $m \ge 2$, $1 \le j \le p$, define the permutation-summation operator $D_j^{p,m}$ acting on a function $g(x, \zeta_1, \ldots, \zeta_{p+m-1})$ by
\begin{eqnarray}
&& \bigl[D_j^{p,m} g\bigr](x, \zeta_1, \ldots, \zeta_{p+m-1}) \nonumber \\
&& \quad = \sum_{(\gamma_1,\ldots,\gamma_{p+m-1-j}) \in P_{p-j}(\zeta_{j+1},\ldots,\zeta_{p+m-1})} \nonumber \\
&& \quad\quad g(x, \zeta_1, \ldots, \zeta_{j-1}, \zeta_j, \gamma_1, \ldots, \gamma_{p+m-1-j}),
\label{eq:Ddef}
\end{eqnarray}
where $P_{p-j}(\zeta_{j+1}, \ldots, \zeta_{p+m-1})$ denotes the set of ordered $(p+m-1-j)$-tuples whose first $p-j$ entries are any $p-j$ elements of $\{\zeta_{j+1}, \ldots, \zeta_{p+m-1}\}$ taken in their original order, with the remaining $m-1$ entries being the leftover elements of the same set, also in their original order.

For $2 \le m \le n$, the coupling operator $B_n^m$ is then
\begin{eqnarray}
&& B_n^m[k_{n-m+1}, f_m](x, \xi_1, \ldots, \xi_n) \nonumber \\
&& \quad = \sum_{j=1}^{n-m+1} \int_{\xi_j}^{\xi_{j-1}} \bigl[D_j^{n-m+1,m} k_{n-m+1}\bigr]\bigl(x, \xi_1, \ldots, \nonumber \\
&& \quad\quad\quad \xi_{j-1}, s, \xi_j, \ldots, \xi_{n-m}\bigr) \nonumber \\
&& \quad\quad\quad \times f_m(s, \xi_{n-m+1}, \ldots, \xi_n)\, ds.
\label{eq:Bdef}
\end{eqnarray}

The kernels $k_n$ are then defined recursively. The base case is
\begin{eqnarray}
k_2(x, \xi_1, \xi_2) &=& -\int_0^{\xi_2} f_2(x - \xi_2 + s,\, \xi_1 - \xi_2 + s,\, s)\, ds, \nonumber \\
&&
\label{eq:k2def}
\end{eqnarray}
and for each $n \ge 3$,
\begin{eqnarray}
&& k_n(x, \xi_1, \ldots, \xi_n) \nonumber \\
&& = -\int_0^{\xi_n} \biggl[f_n - \sum_{m=2}^{n} B_n^m[k_{n-m+1}, f_m]\biggr]\bigl(x - \xi_n + s, \nonumber \\
&& \quad\quad \xi_1 - \xi_n + s,\, \ldots,\, \xi_{n-1} - \xi_n + s,\, s\bigr)\, ds,
\label{eq:kndef}
\end{eqnarray}
on $T_n(1)$.

The right-hand side of (\ref{eq:kndef}) at order $n$ depends only on $f_2, \ldots, f_n$ and the previously constructed $k_2, \ldots, k_{n-1}$, so the recursion is well-defined, and the $\Ltwo$ slice bound (\ref{eq:knL2}) is established from it in \cite{krstic2026feedbacklinearizationhyperbolicpdes}.

\section{Local invertibility of $\Tcal$ in sup-norm}\label{app:Tinv}

The two lemmas of this appendix establish, in the sup-norm, the local invertibility of $\Tcal = I - K$ and the Lipschitz constant of $\Tcal^{-1}$. The construction is by contraction mapping on a closed ball. Both the structure and the technique parallel the $\Ltwo$-norm versions in \cite{krstic2026feedbacklinearizationhyperbolicpdes}; only the norm changes.

\begin{lemma}[Local invertibility of $\Tcal$]\label{lem:Tinv}
Let $r_0 \in (0, r^*)$ with $r^*$ as in Section~\ref{sec:truncfeedback}, so that $\Lipk(r_0) < 1$. For every $w \in \Linf(0,1)$ with
\begin{equation}
\|w\|_\infty \;\le\; \rho_w(r_0) \;=\; r_0\,\bigl(1 - \Lipk(r_0)\bigr),
\label{eq:winball}
\end{equation}
there exists a unique $u \in \Linf(0,1)$ with $\|u\|_\infty \le r_0$ such that $\Tcal[u] = w$, equivalently $u = w + K[u]$.
\end{lemma}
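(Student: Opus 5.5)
The plan is a Banach fixed-point argument for the map $\Psi_w[u] := w + K[u]$ on the closed ball $\Bball_{r_0} := \{u \in \Linf(0,1) : \|u\|_\infty \le r_0\}$. This ball is a closed subset of the Banach space $\Linf(0,1)$, hence complete. Since $r_0 < r^* < 1/C_K$, Lemmas~\ref{lem:Kbdd} and~\ref{lem:KLip} apply on $\Bball_{r_0}$. In particular $K[u]$ is well defined: the series (\ref{eq:Kdef}) converges absolutely and uniformly in $x$, by the $L^1$ slice bound of Lemma~\ref{lem:knL1}. Moreover $K[u] \in \Linf(0,1)$. Measurability in $x$ of each Volterra term follows from Fubini, since $k_n \in \Ltwo$ by Proposition~\ref{prop:L2bound}, and the uniform limit of measurable functions is measurable. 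Consequently $\Psi_w$ maps $\Bball_{r_0}$ into $\Linf(0,1)$.

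\emph{Self-map.} For $u \in \Bball_{r_0}$, I apply Lemma~\ref{lem:KLip} with $u_1 = u$ and $u_2 = 0$, together with $K[0] = 0$. This gives $\|K[u]\|_\infty \le \Lipk(r_0)\,\|u\|_\infty \le \Lipk(r_0)\, r_0$. Hence
\begin{equation*}
\|\Psi_w[u]\|_\infty \le \|w\|_\infty + \Lipk(r_0)\, r_0 \le r_0\bigl(1 - \Lipk(r_0)\bigr) + \Lipk(r_0)\, r_0 = r_0 ,
\end{equation*}
using (\ref{eq:winball}). This is exactly the reason for the choice of $\rho_w(r_0)$. Lemma~\ref{lem:Kbdd} would give a sharper but unnecessary bound.

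\emph{Contraction.} For $u_1, u_2 \in \Bball_{r_0}$, the $w$ terms cancel. Lemma~\ref{lem:KLip} then yields
\begin{equation*}
\|\Psi_w[u_1] - \Psi_w[u_2]\|_\infty = \|K[u_1] - K[u_2]\|_\infty \le \Lipk(r_0)\,\|u_1 - u_2\|_\infty ,
\end{equation*}
and $\Lipk(r_0) < 1$ because $r_0 < r^*$ and $\Lipk$ is strictly increasing with $\Lipk(r^*) = 1$.

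Banach's fixed-point theorem therefore gives a unique $u \in \Bball_{r_0}$ with $u = w + K[u]$, that is, $\Tcal[u] = w$. Uniqueness is asserted only within $\Bball_{r_0}$, which matches the statement.

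The only delicate point is well-definedness of $\Psi_w$ as a map into $\Linf$, namely measurability and convergence of the infinite series. This is handled by the uniform absolute convergence noted above, so I expect no real obstacle beyond bookkeeping.
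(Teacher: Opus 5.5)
Your proposal is correct and follows the paper's proof essentially verbatim. It uses the same map $w + K[u]$ on the closed ball $\Bball_{r_0}$, the self-map bound from Lemma~\ref{lem:KLip} with $u_2 = 0$ and $K[0]=0$ together with $\|w\|_\infty \le r_0(1-\Lipk(r_0))$, the contraction constant $\Lipk(r_0)<1$, and Banach's theorem. Your extra remarks on measurability and uniform convergence are bookkeeping the paper leaves implicit; note that the Fubini step needs joint measurability of $k_n$, which comes from its construction rather than from the slice bound alone.
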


\begin{proof}
Define $\Phi_w[u] := w + K[u]$ on the closed ball $\Bball_{r_0} := \{u \in \Linf(0,1) : \|u\|_\infty \le r_0\}$, which is a complete metric space under $\|\cdot\|_\infty$.

\emph{Self-map.} For $u \in \Bball_{r_0}$, applying Lemma~\ref{lem:KLip} with $u_2 = 0$ and using $K[0] = 0$,
\begin{eqnarray}
\|K[u]\|_\infty &=& \|K[u] - K[0]\|_\infty \nonumber \\
&\le& \Lipk(r_0)\,\|u\|_\infty \;\le\; \Lipk(r_0)\, r_0.
\label{eq:Ku_norm}
\end{eqnarray}
Hence
\begin{eqnarray}
\|\Phi_w[u]\|_\infty &\le& \|w\|_\infty + \Lipk(r_0)\, r_0 \nonumber \\
&\le& r_0\bigl(1 - \Lipk(r_0)\bigr) + \Lipk(r_0)\, r_0 \nonumber \\
&=& r_0,
\label{eq:Phiw_norm}
\end{eqnarray}
so $\Phi_w[u] \in \Bball_{r_0}$.

\emph{Contraction.} For $u_1, u_2 \in \Bball_{r_0}$, by Lemma~\ref{lem:KLip},
\begin{eqnarray}
\|\Phi_w[u_1] - \Phi_w[u_2]\|_\infty &=& \|K[u_1] - K[u_2]\|_\infty \nonumber \\
&\le& \Lipk(r_0)\,\|u_1 - u_2\|_\infty,
\label{eq:Phiw_contract}
\end{eqnarray}
with $\Lipk(r_0) < 1$ by the choice $r_0 < r^*$.

By the Banach fixed-point theorem, $\Phi_w$ has a unique fixed point in $\Bball_{r_0}$.
\end{proof}

Continuous dependence of the inverse on its argument is built into the contraction. The next lemma extracts the Lipschitz constant.

\begin{lemma}[Sup-norm Lipschitzness of $\Tcal^{-1}$]\label{lem:TinvLip}
Under the hypotheses of Lemma~\ref{lem:Tinv}, the inverse mapping $\Tcal^{-1} : \Bball_{\rho_w(r_0)} \to \Bball_{r_0}$ defined by $\Tcal^{-1}[w] :=$ (the unique $u$ produced by Lemma~\ref{lem:Tinv}) satisfies, for every $w_1, w_2 \in \Bball_{\rho_w(r_0)}$,
\begin{equation}
\|\Tcal^{-1}[w_1] - \Tcal^{-1}[w_2]\|_\infty \;\le\; \LipT(r_0)\,\|w_1 - w_2\|_\infty,
\label{eq:TinvLip}
\end{equation}
with $\LipT(r_0)$ given by (\ref{eq:LipTinvdef}).
\end{lemma}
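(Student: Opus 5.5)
The plan is a direct argument from the fixed-point identity, with no new contraction argument. Take $w_1, w_2 \in \Bball_{\rho_w(r_0)}$ and set $u_i := \Tcal^{-1}[w_i]$. By Lemma~\ref{lem:Tinv} these are the unique fixed points of $\Phi_{w_i}$ in $\Bball_{r_0}$, so
\[
u_i = w_i + K[u_i], \qquad \|u_i\|_\infty \le r_0, \qquad i = 1,2.
\]

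\textbf{Step 1: subtract and apply Lemma~\ref{lem:KLip}.} Subtracting the two fixed-point identities gives $u_1 - u_2 = (w_1 - w_2) + (K[u_1] - K[u_2])$. Both states lie in $\Bball_{r_0}$ and $r_0 < r^* < 1/C_K$, so Lemma~\ref{lem:KLip} applies with $r = r_0$. The triangle inequality then yields
\[
\|u_1 - u_2\|_\infty \le \|w_1 - w_2\|_\infty + \Lipk(r_0)\,\|u_1 - u_2\|_\infty .
\]

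\textbf{Step 2: absorb the Lipschitz term.} Since $\Lipk(r_0) < 1$ by the choice $r_0 < r^*$, and $\|u_1 - u_2\|_\infty \le 2r_0$ is finite, the last term can be moved to the left-hand side. Dividing by $1 - \Lipk(r_0) > 0$ gives
\[
\|u_1 - u_2\|_\infty \le \frac{\|w_1 - w_2\|_\infty}{1 - \Lipk(r_0)} = \LipT(r_0)\,\|w_1 - w_2\|_\infty ,
\]
which is the claimed bound with the constant of (\ref{eq:LipTinvdef}). The closed form in (\ref{eq:LipTinvdef}) follows by substituting (\ref{eq:LipKdef}) and multiplying numerator and denominator by $(1 - C_K r_0)^2$.

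\textbf{Main obstacle.} The only delicate point is legitimacy rather than computation. Lemma~\ref{lem:KLip} must be used with the uniform radius $r_0$, and this requires the inverse images to lie in $\Bball_{r_0}$. That membership is guaranteed by how Lemma~\ref{lem:Tinv} defines $\Tcal^{-1}$: as the fixed point within that ball, so no a posteriori bound on $u_i$ is needed.
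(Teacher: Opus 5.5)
Your proposal is correct and follows essentially the same argument as the paper: subtract the fixed-point identities $u_i = w_i + K[u_i]$, bound the difference using Lemma~\ref{lem:KLip} at radius $r_0$ (legitimate since both $u_i \in \Bball_{r_0}$ by construction), and absorb the $\Lipk(r_0)$ term using $\Lipk(r_0) < 1$. Your added remarks on finiteness of $\|u_1 - u_2\|_\infty$ and on the closed form of $\LipT(r_0)$ are harmless extras.
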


\begin{proof}
Let $u_i = \Tcal^{-1}[w_i]$, so that $u_i \in \Bball_{r_0}$ and $u_i = w_i + K[u_i]$ for $i = 1, 2$. Subtracting,
\begin{equation}
u_1 - u_2 \;=\; (w_1 - w_2) + \bigl(K[u_1] - K[u_2]\bigr).
\label{eq:u1u2_diff}
\end{equation}
By the triangle inequality and Lemma~\ref{lem:KLip},
\begin{equation}
\|u_1 - u_2\|_\infty \;\le\; \|w_1 - w_2\|_\infty + \Lipk(r_0)\,\|u_1 - u_2\|_\infty.
\label{eq:u1u2_bound}
\end{equation}
Since $\Lipk(r_0) < 1$, rearranging gives (\ref{eq:TinvLip}).
\end{proof}

\section{Plant and truncated-controller bounds in sup-norm}\label{app:Fbounds}

The two lemmas of this appendix justify the four constants $M_F$, $L_F$, $B_{K_N}$, $\LipKn$ defined in Section~\ref{sec:wellposed} as the sup-norm bounds and Lipschitz constants of $F$ and $K_N$, respectively. Both derivations are direct geometric summations from Assumption~\ref{ass:fnbound} (for $F$) and from the kernel bound of Proposition~\ref{prop:L2bound} (for $K_N$).

\begin{lemma}[Plant nonlinearity bounds]\label{lem:Fbounds}
Under Assumption~\ref{ass:fnbound}, for every $u, v \in \Linf(0,1)$ with $\|u\|_\infty, \|v\|_\infty \le R < \rho_f$,
\begin{align}
\|F[u]\|_\infty &\le M_F(R), \label{eq:Fbound}\\
\|F[u] - F[v]\|_\infty &\le L_F(R)\,\|u - v\|_\infty. \label{eq:Flip}
\end{align}
\end{lemma}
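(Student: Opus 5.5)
Both bounds follow directly from the pointwise-in-$x$ structure of (\ref{eq:Fdef}): fix $x \in [0,1]$, bound each order-$n$ Volterra term using the crude $\Linf$ kernel bound (\ref{eq:fnbound}) together with the simplex volume $\mathrm{vol}(T_n(x)) = x^n/n! \le 1/n!$, and then sum a geometric-type series in $R/\rho_f$. Unlike Section~\ref{sec:KLip}, no Cauchy--Schwarz step is needed here, because Assumption~\ref{ass:fnbound} already gives a pointwise kernel bound. The factor $n!$ in (\ref{eq:fnbound}) cancels exactly against the $1/n!$ volume of the simplex.

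\emph{Bound (\ref{eq:Fbound}).} For $\|u\|_\infty \le R$ we have $|u(\xi_1)\cdots u(\xi_n)| \le R^n$, so the $n$-th term is at most
\[
\frac{n!\,D_f}{\rho_f^{n-1}} \cdot \frac{1}{n!} \cdot R^n \;=\; D_f\,\rho_f\,(R/\rho_f)^n .
\]
Summing over $n \ge 2$ with $y := R/\rho_f < 1$ gives
\[
D_f\,\rho_f\,\frac{y^2}{1-y} \;=\; D_f\,\frac{R^2}{\rho_f - R} \;=\; M_F(R),
\]
uniformly in $x$. The same estimate shows that the series defining $F[u](x)$ converges absolutely, so $F[u]$ is well defined on $\Bball_R$.

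\emph{Bound (\ref{eq:Flip}).} Apply the telescoping identity (\ref{eq:telescope}) and its consequence (\ref{eq:telescope_bound}), which give
\[
|u^{\otimes n} - v^{\otimes n}| \;\le\; n\,R^{n-1}\,\|u - v\|_\infty .
\]
The $n$-th term of $F[u](x) - F[v](x)$ is therefore at most
\[
\frac{n!\,D_f}{\rho_f^{n-1}} \cdot \frac{1}{n!} \cdot n\,R^{n-1}\,\|u-v\|_\infty \;=\; D_f\,n\,y^{n-1}\,\|u-v\|_\infty .
\]
Summing over $n \ge 2$ and using $\sum_{n\ge 2} n\,y^{n-1} = (1-y)^{-2} - 1$, exactly as in the proof of Lemma~\ref{lem:KLip}, yields $D_f\bigl((1-R/\rho_f)^{-2} - 1\bigr) = L_F(R)$. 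Taking the supremum over $x$ completes the proof.

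Nothing here should pose a real obstacle. The only care point is measure-theoretic: the products $u(\xi_1)\cdots u(\xi_n)$ are defined only almost everywhere on $T_n(x)$, so the bounds should be stated a.e. before integrating, and the series should be checked to converge absolutely for each $x$ so that the term-by-term estimates are legitimate. Both requirements are supplied by the geometric summability in $y < 1$.
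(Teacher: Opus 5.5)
Your proposal is correct and follows essentially the same route as the paper's proof. The simplex volume $x^n/n! \le 1/n!$ cancels the $n!$ in Assumption~\ref{ass:fnbound}, the telescoping identity supplies the factor $n\,R^{n-1}\|u-v\|_\infty$, and both resulting series are summed geometrically in $R/\rho_f$ to give $M_F(R)$ and $L_F(R)$.
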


\begin{proof}
The simplex $T_n(x) \subseteq T_n(1)$ has Lebesgue measure $|T_n(x)| = x^n/n! \le 1/n!$, so by Assumption~\ref{ass:fnbound}, for $u \in \overline B_R$,
\begin{eqnarray}
\|F[u]\|_\infty &\le & \sum_{n=2}^\infty \|f_n\|_\infty\,\frac{1}{n!}\,\|u\|_\infty^n \nonumber\\
&\le& \sum_{n=2}^\infty \frac{D_f}{\rho_f^{n-1}}\,R^n \nonumber\\
&=& D_f\,\frac{R^2}{\rho_f - R} \;=\; M_F(R),
\label{eq:Fbound_proof}
\end{eqnarray}
the geometric series converging since $R/\rho_f < 1$. For (\ref{eq:Flip}), the multilinear telescoping identity $u^{\otimes n} - v^{\otimes n} = \sum_{k=0}^{n-1} u^{\otimes k} \otimes (u-v) \otimes v^{\otimes(n-1-k)}$ on $T_n(x)$ gives, for $u, v \in \overline B_R$,
\begin{eqnarray}
\|F[u] - F[v]\|_\infty &\le & \sum_{n=2}^\infty \|f_n\|_\infty\,\frac{1}{n!}\,n\,R^{n-1}\,\|u - v\|_\infty \nonumber\\
&\le & D_f\,\|u - v\|_\infty\,\sum_{n=2}^\infty n\,\bigl(R/\rho_f\bigr)^{n-1} \nonumber\\
&=& L_F(R)\,\|u - v\|_\infty,
\label{eq:Flip_proof}
\end{eqnarray}
using $\sum_{n=2}^\infty n\,x^{n-1} = (1-x)^{-2} - 1$ for $|x| < 1$.
\end{proof}

The truncated controller is itself a partial Volterra series, so the same machinery gives its sup-norm bound and Lipschitz constant directly --- now with finite sums in place of the infinite ones.

\begin{lemma}[Truncated controller bounds]\label{lem:KNbounds}
Under Assumption~\ref{ass:fnbound} and Proposition~\ref{prop:L2bound}, for every $u, v \in \Linf(0,1)$ with $\|u\|_\infty, \|v\|_\infty \le R < 1/C_K$ and every $N \ge 2$,
\begin{align}
|K_N[u](1)| &\le B_{K_N}(R), \label{eq:KNbound}\\
|K_N[u](1) - K_N[v](1)| &\le \LipKn(R)\,\|u - v\|_\infty. \label{eq:KNlip}
\end{align}
\end{lemma}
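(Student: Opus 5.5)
The plan is to argue term by term at order $n \in \{2,\ldots,N\}$ on the trace slice $x = 1$, pair each term with the $L^1$ kernel bound of Lemma~\ref{lem:knL1}, and sum the resulting finite geometric series. Nothing about the full series $K$ is needed, so no convergence question arises. The only standing requirement is $R < 1/C_K$, which keeps the constants in (\ref{eq:BKNdef})--(\ref{eq:LipKNdef}) meaningful and lets us compare them with $\Lipk$.

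\emph{Bound (\ref{eq:KNbound}).} Fix $u$ with $\|u\|_\infty \le R$. For each $n$, the pointwise bound $|u^{\otimes n}(\xi)| \le R^n$ on $T_n(1)$ gives
\[
\Bigl|\int_{T_n(1)} k_n(1,\xi)\,u^{\otimes n}(\xi)\,d\xi\Bigr| \le \|k_n(1,\cdot)\|_{L^1(T_n(1))}\,R^n .
\]
Lemma~\ref{lem:knL1} at $x=1$ bounds the $L^1$ norm by $D_K C_K^{n-1} e^{\Upsilon_K}$. Hence the $n$-th term is at most $D_K e^{\Upsilon_K}\,R\,(C_K R)^{n-1}$. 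Summing over $n=2,\ldots,N$ gives exactly $B_{K_N}(R)$ as defined in (\ref{eq:BKNdef}).

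\emph{Lipschitz bound (\ref{eq:KNlip}).} Fix $u, v$ with $\|u\|_\infty, \|v\|_\infty \le R$. Apply the telescoping identity (\ref{eq:telescope}) with $u_1 = u$ and $u_2 = v$. This gives the pointwise bound $|u^{\otimes n}(\xi) - v^{\otimes n}(\xi)| \le n R^{n-1}\|u-v\|_\infty$ of (\ref{eq:telescope_bound}). Integrating against $|k_n(1,\cdot)|$ and applying Lemma~\ref{lem:knL1} again, the $n$-th term contributes at most
\[
D_K e^{\Upsilon_K}\, n\,(C_K R)^{n-1}\,\|u-v\|_\infty .
\]
Summing over $n = 2,\ldots,N$ yields $\LipKn(R)\,\|u-v\|_\infty$, matching (\ref{eq:LipKNdef}).

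Both sums are finite partial sums of series with nonnegative terms. This immediately gives the comparisons $B_{K_N}(R) \le R\,\Lipk(R)$ and $\LipKn(R) \le \Lipk(R)$ used in Section~\ref{sec:wellposed}. For the first, note that $B_{K_N}(R)/R = D_K e^{\Upsilon_K}\sum_{n=2}^N (C_K R)^{n-1}$ is dominated termwise by $\Lipk(R)$, since $n \ge 1$.

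There is no real obstacle here; the substantive step, the $L^1$ slice bound obtained from the imported $\Ltwo$ bound, is already done in Lemma~\ref{lem:knL1}. The only point needing care is measurability: $\xi \mapsto k_n(1,\xi)\,u^{\otimes n}(\xi)$ must be integrable on $T_n(1)$. This holds because $k_n(1,\cdot) \in \Ltwo(T_n(1)) \subset L^1(T_n(1))$ and $u^{\otimes n}$ is essentially bounded, so each finite sum is well defined.
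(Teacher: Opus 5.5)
Your proof is correct and is in substance the paper's. One Cauchy--Schwarz on $T_n(1)$ cancels the $\sqrt{n!}$ of Proposition~\ref{prop:L2bound}, and the telescoping bound plus a finite geometric sum give exactly $B_{K_N}(R)$ and $\LipKn(R)$; routing this through the $L^1$ slice bound of Lemma~\ref{lem:knL1} is essentially the computation the paper sketches for the Lipschitz part. The one difference is in \eqref{eq:KNbound}: the paper cites Lemma~\ref{lem:pointwise_tail}, whose symmetrization controls each term by $\|u\|_{\Ltwo}$ and is reused for the $\Ltwo$ bound \eqref{eq:BKN_MF_L2} in Section~\ref{sec:L2}, whereas your route needs $\|u\|_\infty \le R$, which is all this lemma requires.
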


\begin{proof}
By Lemma~\ref{lem:pointwise_tail} restricted to $2 \le n \le N$, $\bigl|\int_{T_n(1)} k_n\,u^{\otimes n}\,d\xi\bigr| \le (D_K\,e^{\Upsilon_K}/C_K)(C_K R)^n$ on $\overline B_R$. Summing,
\begin{eqnarray}
|K_N[u](1)| &\le & \sum_{n=2}^N \frac{D_K\,e^{\Upsilon_K}}{C_K}\,(C_K R)^n \nonumber\\
&=& D_K\,e^{\Upsilon_K}\,R\,\sum_{n=2}^N (C_K R)^{n-1} \nonumber \\
&=& B_{K_N}(R),
\label{eq:KNbound_proof}
\end{eqnarray}
proving (\ref{eq:KNbound}). For (\ref{eq:KNlip}), the multilinear telescoping identity applied term by term, combined with the kernel bound from Proposition~\ref{prop:L2bound} and Cauchy--Schwarz on $T_n(1)$ as in the proof of Lemma~\ref{lem:pointwise_tail}, gives
\begin{eqnarray}
|K_N[u](1) - K_N[v](1)| &\le & \sum_{n=2}^N D_K\,e^{\Upsilon_K}\,n\,(C_K R)^{n-1} \nonumber \\
&& \quad \times \|u - v\|_\infty \nonumber\\
&=& \LipKn(R)\,\|u - v\|_\infty,
\label{eq:KNlip_proof}
\end{eqnarray}
the factor $n$ arising from the $n$ summands in the telescoping identity.
\end{proof}

\section{The $\Ltwo$ estimates of Section~\ref{sec:L2}}\label{app:L2}

This appendix proves the partially symmetrized Lipschitz bound of Lemma~\ref{lem:KLipL2} and collects the $\Ltwo$ plant Lipschitz constant and the invertibility of $\Tcal_N$.

\paragraph*{\normalfont\em Proof of Lemma~\ref{lem:KLipL2}.}
Fix $x \in [0,1]$. By the telescoping identity (\ref{eq:telescope}), the order-$n$ contribution to $K[u_1](x) - K[u_2](x)$ is $\sum_{j=1}^n \int_{T_n(x)} k_n(x,\xi)\, P_j(\xi)\, d\xi$ with
\begin{equation}
P_j(\xi) \;:=\; \prod_{i<j} u_1(\xi_i)\;\bigl[u_1 - u_2\bigr](\xi_j)\;\prod_{i>j} u_2(\xi_i).
\label{eq:Pjdef}
\end{equation}
The squared integrand $|P_j|^2$ is invariant under permutations of the arguments carrying $u_1$, namely $(\xi_1,\ldots,\xi_{j-1})$, and, separately, of those carrying $u_2$, namely $(\xi_{j+1},\ldots,\xi_n)$. The simplex is contained in the partially ordered region
\begin{eqnarray}
D_j(x) &:=& \bigl\{\xi \in [0,x]^n : \xi_1 \ge \cdots \ge \xi_{j-1}, \nonumber \\
&& \qquad\qquad\ \ \xi_{j+1} \ge \cdots \ge \xi_n\bigr\} \;\supseteq\; T_n(x),
\label{eq:Djdef}
\end{eqnarray}
obtained by discarding the order constraints that couple the two groups and $\xi_j$. Partial symmetrization --- over each group of like-function arguments separately --- gives
\begin{eqnarray}
\int_{T_n(x)} |P_j|^2\, d\xi &\le& \int_{D_j(x)} |P_j|^2\, d\xi \nonumber \\
&=& \frac{1}{(j-1)!\,(n-j)!} \int_{[0,x]^n} |P_j|^2\, d\xi \nonumber \\
&\le& \frac{r^{2(n-1)}\;\|u_1 - u_2\|_{\Ltwo}^2}{(j-1)!\,(n-j)!}.
\label{eq:block_bound}
\end{eqnarray}
Cauchy--Schwarz against (\ref{eq:knL2}) then bounds the order-$n$ contribution by
\begin{eqnarray}
&& \sqrt{n!}\, D_K\, C_K^{n-1}\, e^{\Upsilon_K}\, r^{n-1}\, \|u_1 - u_2\|_{\Ltwo} \nonumber \\
&& \qquad\qquad \times \sum_{j=1}^n \frac{1}{\sqrt{(j-1)!\,(n-j)!}}.
\label{eq:ordern_raw}
\end{eqnarray}
Since $n!/\bigl((j-1)!\,(n-j)!\bigr) = n \binom{n-1}{j-1}$, writing $S_n := \sum_{j=1}^n \sqrt{n!/((j-1)!\,(n-j)!)}$ and applying Cauchy--Schwarz on the index $j$ gives
\begin{eqnarray}
S_n &=& \sqrt{n}\, \sum_{i=0}^{n-1} \sqrt{\binom{n-1}{i}} \nonumber \\
&\le& \sqrt{n}\,\sqrt{n\, 2^{n-1}} \;=\; n\,\bigl(\sqrt{2}\,\bigr)^{n-1},
\label{eq:comb_bound}
\end{eqnarray}
so the order-$n$ contribution is at most
\begin{equation}
D_K\, e^{\Upsilon_K}\, n\, \bigl(\sqrt{2}\,C_K r\bigr)^{n-1}\, \|u_1 - u_2\|_{\Ltwo},
\label{eq:ordern_final}
\end{equation}
uniformly in~$x$. With $y := \sqrt{2}\,C_K r < 1$, summing $\sum_{n \ge 2} n\, y^{n-1} = y(2-y)/(1-y)^2$ gives the bound in $\|\cdot\|_\infty$, hence in $\|\cdot\|_{\Ltwo}$ on the unit interval, which is~(\ref{eq:KLipL2}); stopping at $n = N$ gives the $K_N$ statement.

\paragraph*{\normalfont\em Plant Lipschitzness and invertibility of $\Tcal_N$.}
The plant pays a factor $2$ where the controller pays only $\sqrt{2}$: the plant kernels are bounded pointwise and are paired with the telescoped products in $L^1$, so the interleaving count $2^{n-1}$ enters at full strength, while the controller kernels are available only in $\Ltwo$ and are paired in $\Ltwo$, where the count enters under a square root.

\begin{lemma}[$\Ltwo$ plant Lipschitzness]\label{lem:FLipL2}
Under Assumption~\ref{ass:fnbound}, for every $u, v \in \Ltwo(0,1)$ with $\|u\|_{\Ltwo}, \|v\|_{\Ltwo} \le R < \rho_f/2$,
\begin{equation}
\|F[u] - F[v]\|_\infty \;\le\; L_F^{(2)}(R)\,\|u - v\|_{\Ltwo},
\label{eq:FLipL2}
\end{equation}
where
\begin{equation}
L_F^{(2)}(R) \;:=\; D_f\!\left(\frac{1}{(1 - 2R/\rho_f)^2} - 1\right),
\label{eq:LF2def}
\end{equation}
i.e., (\ref{eq:LFdef}) with $\rho_f$ replaced by $\rho_f/2$.
\end{lemma}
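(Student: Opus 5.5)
The plan is to follow the proof of Lemma~\ref{lem:Fbounds}, swapping sup-norm control of the telescoped products for $L^1$ control, and to absorb the loss of symmetry through partial symmetrization as in the proof of Lemma~\ref{lem:KLipL2}. Fix $x\in[0,1]$. By the telescoping identity (\ref{eq:telescope}), the order-$n$ part of $F[u](x)-F[v](x)$ is $\sum_{j=1}^n\int_{T_n(x)} f_n(x,\xi)\,P_j(\xi)\,d\xi$, with $P_j$ as in (\ref{eq:Pjdef}) ($u_1=u$, $u_2=v$). Since $f_n$ is bounded pointwise by Assumption~\ref{ass:fnbound}, I bound each term by $\|f_n\|_\infty\int_{T_n(x)}|P_j|\,d\xi$. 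No Cauchy--Schwarz against the kernel is used; the only integrability issue lies in the product.

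The key step is an $L^1$ estimate of $P_j$ on the simplex. I enlarge $T_n(x)$ to $D_j(x)$ from (\ref{eq:Djdef}). The integrand $|P_j|$ is invariant under permutations within the $u$-group and within the $v$-group, so
\begin{equation*}
\int_{T_n(x)}|P_j|\,d\xi\le \frac{1}{(j-1)!\,(n-j)!}\int_{[0,x]^n}|P_j|\,d\xi.
\end{equation*}
The cube integral factorizes as $\|u\|_{L^1}^{j-1}\|u-v\|_{L^1}\|v\|_{L^1}^{n-j}$. On the unit interval $\|\cdot\|_{L^1}\le\|\cdot\|_{\Ltwo}$, so this is at most $R^{n-1}\|u-v\|_{\Ltwo}$.

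Next comes the combinatorial sum. Multiplying by $\|f_n\|_\infty\le n!\,D_f/\rho_f^{n-1}$ and summing over $j$ gives
\begin{equation*}
\frac{D_f}{\rho_f^{n-1}}\,R^{n-1}\,\|u-v\|_{\Ltwo}\sum_{j=1}^n\frac{n!}{(j-1)!\,(n-j)!}
=D_f\,n\,2^{n-1}\Bigl(\frac{R}{\rho_f}\Bigr)^{n-1}\|u-v\|_{\Ltwo},
\end{equation*}
using $\sum_j n\binom{n-1}{j-1}=n2^{n-1}$. Here the interleaving count enters at full strength, not under a square root, because the pairing is in $L^1$. With $y:=2R/\rho_f<1$, summing $\sum_{n\ge2}n\,y^{n-1}=(1-y)^{-2}-1$ gives $L_F^{(2)}(R)\|u-v\|_{\Ltwo}$ uniformly in $x$. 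Taking the supremum over $x$ yields (\ref{eq:FLipL2}).

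I expect the main obstacle to be the partial-symmetrization step. One must check that the inclusion $T_n(x)\subseteq D_j(x)$ and the counting factor $1/((j-1)!\,(n-j)!)$ are applied correctly to the unsquared $|P_j|$. One must also accept that the coupling order constraints are discarded, which is the source of the loss $\rho_f\mapsto\rho_f/2$. Finally, the interchange of the infinite sum with the integral should be justified by absolute convergence, which follows from the same bound with $y<1$.
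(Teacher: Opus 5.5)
Your proposal is correct and matches the paper's proof step for step. The steps are telescoping, pairing $\|f_n\|_\infty$ with $\int_{T_n(x)}|P_j|\,d\xi$, and partial symmetrization over $D_j(x)$ in $L^1$. As you suspected, that symmetrization needs only within-group permutation invariance, so it applies to the unsquared $|P_j|$. The remaining steps are $\|\cdot\|_{L^1}\le\|\cdot\|_{\Ltwo}$ on the unit interval, the count $n\,2^{n-1}$, and the geometric sum in $y=2R/\rho_f$. Your remark that absolute convergence justifies interchanging sum and integral is a detail the paper leaves implicit.
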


\begin{proof}
Apply (\ref{eq:telescope}) and integrate $|P_j|$ (with $u_1 = u$, $u_2 = v$) over $T_n(x) \subseteq D_j(x)$: the same partial symmetrization, now in $L^1$, gives
\begin{eqnarray}
\int_{T_n(x)} |P_j|\, d\xi &\le& \frac{\|u\|_{L^1}^{j-1}\; \|u - v\|_{L^1}\; \|v\|_{L^1}^{n-j}}{(j-1)!\,(n-j)!} \nonumber \\
&\le& \frac{R^{n-1}\,\|u - v\|_{\Ltwo}}{(j-1)!\,(n-j)!}.
\label{eq:blockL1}
\end{eqnarray}
The bound $\|f_n\|_\infty \le n!\, D_f/\rho_f^{n-1}$ of Assumption~\ref{ass:fnbound} and the count $\sum_{j=1}^n n!/\bigl((j-1)!\,(n-j)!\bigr) = n\, 2^{n-1}$ make the order-$n$ contribution at most $D_f\, n\, (2R/\rho_f)^{n-1}\, \|u - v\|_{\Ltwo}$, uniformly in~$x$. Summing over $n \ge 2$ gives~(\ref{eq:FLipL2}).
\end{proof}

\begin{lemma}[$\Ltwo$ invertibility of $\Tcal_N$]\label{lem:TNinv}
Let $r_0 \in (0, r_2^*)$ and $N \ge 2$. For every $w \in \Ltwo(0,1)$ with $\|w\|_{\Ltwo} \le \rho_w^{(2)}(r_0)$, there exists a unique $u \in \Ltwo(0,1)$ with $\|u\|_{\Ltwo} \le r_0$ such that $\Tcal_N[u] = w$. The map $\Tcal_N^{-1}$ so defined satisfies $\Tcal_N^{-1}[0] = 0$ and, for all $w_1, w_2$ in the ball,
\begin{equation}
\bigl\|\Tcal_N^{-1}[w_1] - \Tcal_N^{-1}[w_2]\bigr\|_{\Ltwo} \;\le\; \LipT^{(2)}(r_0)\, \|w_1 - w_2\|_{\Ltwo},
\label{eq:TNinvLip}
\end{equation}
uniformly in $N$.
\end{lemma}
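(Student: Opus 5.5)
The plan mirrors Lemmas~\ref{lem:Tinv} and~\ref{lem:TinvLip}, run in $\Ltwo$ with $K_N$ in place of $K$. For fixed $w$ with $\|w\|_{\Ltwo} \le \rho_w^{(2)}(r_0)$, set $\Psi_w[u] := w + K_N[u]$ on the closed ball $\{u \in \Ltwo(0,1) : \|u\|_{\Ltwo} \le r_0\}$. This ball is complete, and $\Tcal_N[u] = w$ is equivalent to $u = \Psi_w[u]$. Everything then rests on the $K_N$ part of Lemma~\ref{lem:KLipL2}. That lemma is stated for $K$, but its proof is the order-by-order estimate (\ref{eq:ordern_final}), uniform in $x$, and summing only $2 \le n \le N$ gives
\[
\|K_N[u_1] - K_N[u_2]\|_{\Ltwo} \le \LipKn^{(2)}(r_0)\,\|u_1-u_2\|_{\Ltwo} \le \Lipk^{(2)}(r_0)\,\|u_1-u_2\|_{\Ltwo}
\]
for $\|u_i\|_{\Ltwo} \le r_0$. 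This is admissible because $r_0 < r_2^* < 1/(\sqrt2 C_K)$. The right-hand constant does not depend on $N$, and this is the source of the uniformity claim.

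The self-map step uses $K_N[0] = 0$, since every term is at least quadratic. The Lipschitz bound with $u_2 = 0$ gives $\|K_N[u]\|_{\Ltwo} \le \Lipk^{(2)}(r_0)\, r_0$, hence
\[
\|\Psi_w[u]\|_{\Ltwo} \le r_0\bigl(1-\Lipk^{(2)}(r_0)\bigr) + \Lipk^{(2)}(r_0)\, r_0 = r_0 .
\]
The contraction step reads $\|\Psi_w[u_1]-\Psi_w[u_2]\|_{\Ltwo} = \|K_N[u_1]-K_N[u_2]\|_{\Ltwo} \le \Lipk^{(2)}(r_0)\|u_1-u_2\|_{\Ltwo}$, and the factor $\Lipk^{(2)}(r_0)$ is below $1$ because $r_0 < r_2^*$ and $\Lipk^{(2)}$ is strictly increasing with $\Lipk^{(2)}(r_2^*) = 1$. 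Banach's fixed-point theorem then yields existence and uniqueness in the ball. Since $u = 0$ solves $0 = 0 + K_N[0]$, uniqueness gives $\Tcal_N^{-1}[0] = 0$.

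For the Lipschitz estimate (\ref{eq:TNinvLip}), take $u_i = \Tcal_N^{-1}[w_i]$. Subtracting $u_i = w_i + K_N[u_i]$ gives
\[
\|u_1-u_2\|_{\Ltwo} \le \|w_1-w_2\|_{\Ltwo} + \Lipk^{(2)}(r_0)\|u_1-u_2\|_{\Ltwo}.
\]
Rearranging yields the constant $1/(1-\Lipk^{(2)}(r_0)) = \LipT^{(2)}(r_0)$, which is independent of $N$.

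The only step that needs real care is the first: that $K_N$ maps the $\Ltwo$ ball into $\Ltwo$ with the $N$-uniform constant. The partial symmetrization in Appendix~\ref{app:L2} bounds each order-$n$ term pointwise in $x$, using only $\|u_i\|_{\Ltwo}$ and the slice bound (\ref{eq:knL2}) at every $x \in [0,1]$, not just at $x=1$. I would check that the resulting sup-in-$x$ bound passes to $\Ltwo(0,1)$ through the unit measure of $[0,1]$, and that measurability of $x \mapsto K_N[u](x)$ holds. The latter follows from Fubini, since $K_N$ is a finite sum of integrals with $\Ltwo$ slices. Once this is settled, the rest is a verbatim transcription of Appendix~\ref{app:Tinv}.
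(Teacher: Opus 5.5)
Your proposal is correct and follows the paper's own argument. The paper likewise runs the Appendix~\ref{app:Tinv} contraction $u \mapsto w + K_N[u]$ on the closed $\Ltwo$ ball of radius $r_0$, uses the $K_N$ Lipschitz bound of Lemma~\ref{lem:KLipL2} with $\LipKn^{(2)}(r_0) \le \Lipk^{(2)}(r_0) < 1$ (which is also where uniformity in $N$ comes from), and rearranges to get the $\LipT^{(2)}(r_0)$ constant. One small point: Lemma~\ref{lem:KLipL2} already states the $K_N$ bound explicitly, so you do not need to re-derive it from (\ref{eq:ordern_final}).
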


\begin{proof}
The contraction-mapping construction of Appendix~\ref{app:Tinv} applies verbatim on the closed $\Ltwo$ ball of radius $r_0$, with Lemma~\ref{lem:KLipL2} in place of Lemma~\ref{lem:KLip} and the contraction factor $\LipKn^{(2)}(r_0) \le \Lipk^{(2)}(r_0) < 1$; the uniformity in $N$ of the radius $\rho_w^{(2)}(r_0)$ and of the constant $\LipT^{(2)}(r_0)$ follows from the same domination of the partial sums by the full series.
\end{proof}

\section*{Dedication}

This paper is dedicated to Professor Arthur Krener. His Volterra-series-in-time work \cite{1101898} in input-output representation inspired the author to consider Volterra series representation of spatially causal nonlinearities for PDEs in the early 2000s. More recently, his work on truncated nonlinear representations in the nonlinear ODE setting \cite{Krener1984ApproximateLinearization} and in the PDE setting \cite{9580683} inspired the author to take on the task of analytical guarantees of truncated implementations of nonlinear PDE controllers.

\bibliographystyle{plain}
\bibliography{bib-hyp-FL}

\end{document}